\documentclass[acmsmall]{acmart}

\usepackage{booktabs}
\usepackage{hyperref}
\usepackage{url}

\usepackage{multirow}
\usepackage{color}
\newtheorem{strategy}{Strategy}

\usepackage{appendix}
\usepackage{lineno}
\usepackage{diagbox}
\usepackage{amsmath}
\usepackage{arydshln}
\usepackage{color}

\usepackage{algorithmic}

\hyphenation{op-tical net-works semi-conduc-tor}
\usepackage{tabularx}
\usepackage{makecell}

\usepackage[linesnumbered,ruled]{algorithm2e}

\SetAlFnt{\small}
\SetAlCapFnt{\small}
\SetAlCapNameFnt{\small}
\SetAlCapHSkip{0pt}
\IncMargin{-\parindent}
\usepackage{longtable}

\usepackage{soul}
\usepackage{color, xcolor}
\sethlcolor{yellow}
\soulregister{\cite}7
\soulregister{\citep}7
\soulregister{\citet}7
\soulregister{\ref}7
\soulregister{\pageref}7
\soulregister{\underline}7

\acmJournal{JACM}

\setcopyright{rightsretained}

\begin{document}


\title{Targeted Sequential Pattern Mining with High Average Utility}

\author{Kai Cao}
\affiliation{ 
	\institution{Hainan University}
	\city{Haikou 570228}
	\country{China}
}
\email{caokai.pds@gmail.com}

\author{Yucong Duan}
\authornote{This is the corresponding author.}
\affiliation{
	\institution{Hainan University}
	\city{Haikou 570228}
	\country{China}
}
\email{duanyucong@hotmail.com}

\author{Wensheng Gan}
\affiliation{
	\institution{Jinan University}
	\city{Guangzhou}
	\country{China}
}
\email{wsgan001@gmail.com}

\begin{abstract}
  Incorporating utility into targeted pattern mining can address the practical limitations of traditional frequency-based approaches. However, utility-based methods often suffer from generating a large number of long and complicated sequences. To improve pattern relevance and interpretability, average utility provides a more balanced metric by considering both utility and sequence length. Moreover, incorporating user-defined query targets into the mining process enhances usability and interactivity by retaining only patterns containing user-specified goals. To address challenges related to mining efficiency in large-scale, long-sequence datasets, this study introduces average utility into targeted sequential pattern mining. A novel algorithm, TAUSQ-PG, is designed to find targeted high average utility sequential patterns. It incorporates efficient filtering and pruning strategies, tighter upper bound models, as well as novel specialized evaluation metrics and query flags tailored to this task. Extensive comparative experiments on different datasets demonstrate that TAUSQ-PG effectively controls the candidate set size, thereby reducing redundant sequence generation and significantly improving runtime and memory efficiency.
\end{abstract}

\begin{CCSXML}
<ccs2012>
 <concept>
  <concept_id>10010520.10010553.10010562</concept_id>
  <concept_desc>Computer systems organization~Embedded systems</concept_desc>
  <concept_significance>500</concept_significance>
 </concept>
 <concept>
  <concept_id>10010520.10010575.10010755</concept_id>
  <concept_desc>Computer systems organization~Redundancy</concept_desc>
  <concept_significance>300</concept_significance>
 </concept>
 <concept>
  <concept_id>10010520.10010553.10010554</concept_id>
  <concept_desc>Computer systems organization~Robotics</concept_desc>
  <concept_significance>100</concept_significance>
 </concept>
 <concept>
  <concept_id>10003033.10003083.10003095</concept_id>
  <concept_desc>Networks~Network reliability</concept_desc>
  <concept_significance>100</concept_significance>
 </concept>
</ccs2012>
\end{CCSXML}

\ccsdesc[500]{Information systems~Information systems applications~Data mining}

\keywords{targeted pattern mining, sequence data, average utility, upper bound, pruning strategies}

\maketitle

\renewcommand{\shortauthors}{K. Cao \textit{et al.}}

\section{Introduction}  \label{sec: introduction}

The proliferation of large-scale sensors and smart devices has significantly enhanced the collection of diverse real-world data, thereby intensifying the need for more efficient data mining and analysis techniques. Among the various frequency-based methods used to discover interesting patterns in transactional databases, sequential pattern mining (SPM) \cite{agrawal1995mining, husak2017sequential} and frequent pattern mining (FPM) \cite{agrawal1996fast, han2004mining} are two representative approaches. The pioneering FPM method was proposed by Agrawal et al. in 1993 \cite{han2007frequent}, while SPM focuses on uncovering frequent sequential patterns from a sequential database. The introduction of high utility pattern mining (HUPM) and high utility itemset mining (HUIM) \cite{gan2019survey, nguyen2019efficient, liu2018opportunistic} marked a departure from the early assumption that high frequency directly correlates with high relevance. In practical scenarios, alternative measures of interestingness, such as utility, are often more critical than simple frequency. For instance, in the retail industry, profit (utility) frequently takes precedence over sales volume. HUPM/HUIM aimed at identifying patterns with higher utility \cite{gan2018survey}. 

However, in HUPM/HUIM, longer patterns tend to accumulate higher utility, which can lead to overly complex results \cite{lee2022efficient}. This contradicts the original intent of pattern mining—to reveal actionable and insightful knowledge. To overcome this issue, average utility was introduced \cite{hong2009mining}, refining traditional utility-based evaluations by jointly considering both the length of a pattern and its utility. Based on this, the high average utility pattern mining (HAUPM) and the high average utility itemset mining (HAUIM) \cite{kim2017efficient, lan2012efficiently, lan2012projection} are proposed to extract more compact and practically meaningful patterns in real-world applications. Utility-based pattern mining techniques have demonstrated wide applicability across various domains, including e-commerce \cite{shie2013mining} (for identifying profitable product combinations and supporting cross-selling strategies), internet of things analytics \cite{srivastava2020large} (for detecting event sequences that critically affect system performance), bioinformatics \cite{zihayat2017mining} (for revealing significant gene expression patterns). In addition, the average utility provides a more balanced and practical evaluation metric than the total utility in bioinformatics \cite{segura2022mining} or in spatial data analysis \cite{tran2024efficiently}.

Nevertheless, even HAUPM and HAUIM may still generate a vast collection of patterns that meet the specified threshold, making the final results difficult to interpret and apply. To reduce redundancy, techniques such as top-\textit{k} pattern mining and closed pattern mining have been proposed. However, these methods typically focus on structural properties or utility ranking and may not necessarily align with specific user interests or intentions. In contrast, targeted pattern mining (TPM), also known as targeted pattern query, emphasizes user-driven discovery by filtering out irrelevant results and extracting only patterns that contain user-defined target subsequences. Compared to conventional pattern mining, TPM provides a more concise, interactive, and user-centric framework. However, identifying subsets of the potential search space in TPM poses substantial computational challenges, particularly when attempting to estimate the utility of candidate patterns that do not yet meet the specified parameters \cite{fournier2022pattern}. Although naive post-processing can also achieve targeted querying, it suffers from excessive time and memory consumption, making it impractical for real-time applications \cite{zhang2022tusq}.

To tackle the aforementioned limitations, we introduce a novel TPM model, termed targeted high average utility sequential pattern mining (TAUSPM). By integrating average utility with targeted sequential pattern queries, TAUSPM offers notable advantages. The average utility reduces the pattern-length bias common in utility-based mining, leading to more meaningful and representative results \cite{zhang2022tusq}. Meanwhile, targeted querying narrows the search to the given sequence, optimizing both the performance and the relevance of pattern discovery. From the perspective of TPM, target patterns are not merely used for fast queries—they serve a deeper analytical role. In sequential databases, it is preferable that the sequences containing these target patterns also exhibit relatively high average utility in the corresponding segments. Instead of relying on total utility, the objective is to identify patterns whose average utility meets user-defined thresholds, as these are often more indicative of meaningful or critical insights. The use of average utility further enhances the typicality or representativeness of discovered patterns. For example, in gene expression analysis, identifying sequences that contain specific nucleotide subsequences associated with genetic disorders can support therapeutic development. More importantly, verifying whether these sequences are typical representations of such associations provides deeper insight into the molecular mechanisms of the disorders and a stronger theoretical foundation for precise gene therapy strategies. 

This study makes the following primary contributions:
\begin{itemize}
    \item  This study incorporates the notion of average utility into the TPM framework for sequential data and formally defines a new problem that focuses on identifying a complete yet compact set of patterns evaluated by average utility.
	
    \item  This work designs two efficient variants of upper bound models (UBs) and corresponding pruning strategies based on the characteristics of TPM tasks. Two specialized flags combined with the position comparison method are proposed to enhance query efficiency.
	
    \item  A novel and efficient algorithm, called TAUSQ-PG, is proposed and is extensively evaluated on various datasets. The results of comparative experiments demonstrate its remarkable advantages in both effectiveness and efficiency when contrasted with baseline methods.
\end{itemize}

The remainder of this paper is organized as follows. A concise review of related work is stated in Section \ref{sec: relatedWork}. Section \ref{sec: preliminary} delineates the formulation of TAUSPM problem and introduces essential definitions. The algorithm TAUSQ-PG is described in detail in Section \ref{sec: algorithm}, including several optimization strategies and supporting data structures. Then, the performance evaluation of TAUSQ-PG is conducted through comparative experiments in Section \ref{sec: experiment}. Finally, the contributions and outcomes of this research are summarized in Section \ref{sec: conclusion}.

\section{Related Work} \label{sec: relatedWork}

This section provides an overview of three major elements relevant to our research: HUSPM, HAUSPM, and TPM.

\subsection{High-Utility Sequential Pattern Mining} \label{sec: huspm}

SPM was initially proposed in 1995 by Agrawal and Srikant \cite{agrawal1995mining} for the analysis of customer purchase records. Ahmed et al. extended SPM to incorporate the concept of utility and formally introduced the problem of HUSPM. Their work proposed two two-phase algorithms, including Utility Span, which employed a pattern growth approach to control candidate generation. Subsequent HUSPM algorithms focused on designing efficient data structures for utility computation and pruning. USpan \cite{yin2012uspan} introduced the lexicographic quantitative sequence tree (LQS-tree); ProUM \cite{gan2020proum} utilized a data structure named utility-array; HUSP-ULL \cite{gan2020fast} adopted the UL-list; and HUSP-SP \cite{zhang2023husp} developed the seqPro structure. Efficient indexing strategies \cite{lan2014applying} further enhanced the performance of projected databases. Another major focus in HUSPM is the design of UBs to prune unpromising candidates. PHUS \cite{lan2014applying} used maximum utility as a measure to simplify the evaluation of HUSPM and defined the sequence utility upper bound (SUUB). HuspExt \cite{alkan2015crom} designed a tighter upper bound named CRoM. Two tighter utility UBs, PEU and RSU, were proposed in HUS-Span \cite{wang2016efficiently}. ProUM \cite{gan2020proum} designed an upper bound called SEU. Based on the upper bound PEU, Gan et al. \cite{gan2020fast} proposed pruning strategies to quickly eliminate unpromising candidates, namely irrelevant item pruning (IIP) and lookahead pruning (LAR). More details of these advances can be found in the review literature \cite{gan2019survey}.

\subsection{High Average Utility Sequential Pattern Mining} \label{sec: hauspm}

Some preliminary studies have confirmed that existing methods and strategies for capturing HAUIM are not capable of handling sequential databases. However, several challenging issues are shared across different types of datasets, including traditional transaction databases and quantitative sequential databases. For example, in both HUIM and HUSPM, the utility of patterns fails to comply with the downward closure property. Moreover, in HAUIM and HAUSPM, unlike support or utility, the average utility exhibits neither anti-monotonic nor monotonic behavior, which makes the discovery processing more challenging.

Hong et al. \cite{hong2009mining, hong2011effective} proposed the first two-phase HAUI algorithms, TPAU, which introduces an upper bound, referred to as auub, based on utility overestimation to retain the downward closure property. Subsequent studies introduced tighter and more diverse upper bounds, such as transaction maximum utility in HAUI-Miner \cite{lin2018maintenance} and maximum average utility in MHAI \cite{yun2017mining}. EHAUPM \cite{lin2017ehaupm} proposed a revised tighter upper bound and a looser upper bound, referred to as rtub and lub, respectively. The top-\textit{k} revised transaction maximum utility upper bound (krtuub) and mfuub, focusing on maximum following utility, were introduced in TUB-HAUPM \cite{wu2018tub}. LMHAUP \cite{kim2021efficient} designed two tighter upper bounds: the tight maximum average utility upper bound and the maximum remaining average utility upper bound. EHAUSM \cite{truong2020ehausm} introduced a weak upper bound, twaub, along with two other upper bounds—\( \textit{\rm AMUB}_1 \) and BiUB—to identify HAUSPs in a quantitative sequential database, and incorporated four pruning strategies to enhance mining efficiency. FLCHUSPM \cite{truong2022mining} proposed a cost lower bound (FLB) and two utility upper bounds, AMUB and FUB, for the FLCHUSM. C-FHAUSPM \cite{tin2022frequent} employed three upper bounds, l\_aub, t\_aub, and AM\_aub (\( \textit{\rm AMUB}_1 \) from EHAUSM), and one weak upper bound t\_waub, to find frequent sequences with high minimum average utility and constraints. U-HPAUSM \cite{duong2025u} introduced a tighter upper bound (AMUBau) and a weak upper bound (TWUBau) to handle the task of finding the high average utility and high probability patterns in uncertain quantitative sequential databases.

In addition to the design of upper bounds, efforts have also been made to develop efficient data structures. TPAU \cite{hong2011effective} follows a level-wise approach. This hierarchical approach suffers from two key limitations: the necessity of multiple database scans and the excessive generation of candidate patterns. To overcome these limitations, PBAU \cite{lan2012projection} adopted a projection-based method by designing a tree structure and index tables. Building upon the projection technique and the prefix concept, an improved strategy called PAI was proposed \cite{lan2012efficiently}. HAUI-Growth utilized a HAUI-tree structure to maintain the average utility and avoid repeated database scans. Besides the aforementioned tree structure, MHAI \cite{yun2017mining}, HAUI-Miner \cite{lin2018maintenance}, and EHAUPM \cite{lin2017ehaupm} employed a list-based structure. EHAUPM introduced a MAU-list structure. FLCHUSPM \cite{truong2022mining} proposed a list of cost-utility (CUL) to efficiently store and update utility and cost information. C-FHAUSPM \cite{tin2022frequent} adopted a list of extended utility (EUL), which was originally introduced in EHAUSM \cite{alkan2015crom}, for discovering frequent sequences with high minimum average utility and constraints. Additionally, EHAUSM \cite{alkan2015crom} designed a list of sums of items (SL) to work alongside EUL. A similar utility-list structure, nUL, was employed in U-HPAUSM \cite{duong2025u}. Furthermore, a utility list of the diffset (IDUL) \cite{zaki2003fast} was developed for vertical database representation in VMHAUI \cite{truong2019efficient}.

\subsection{Targeted Pattern Mining} \label{sec: tpm}

Conventional pattern mining typically aims to discover all patterns that meet specified thresholds. However, this approach often yields an overwhelming number of results, many of which are not of interest to users. To address this issue, target pattern querying (TPQ) was proposed \cite{kubat2003itemset}, enabling users to focus the mining process on patterns that contain specific target substructures and facilitating targeted exploratory analysis \cite{fournier2022pattern}. Kubat et al. \cite{kubat2003itemset} introduced an optimized approach tailored for TPQ/TPM task in transaction databases. They implemented an incremental updating approach by leveraging a novel data structure called the itemset tree. To enhance the approach efficiency, they devised the memory-efficient itemset tree (MEIT) \cite{fournier2013meit}, which reduces memory consumption compared to the traditional structure IT. GFP-growth \cite{shabtay2021guided} was developed to compute the support of a larger list of itemsets. In the context of constraint-based target queries for sequence data, a solution was proposed to address specific analytical needs. For the target patterns defined at the end of sequences, a method was proposed to mine target sequential patterns that satisfy monetary and recency constraints \cite{chand2012target}. TargetUM \cite{miao2022targeted} utilizes a utility-based trie tree structure and introduces a utility-driven target-querying method tailored for quantitative transaction database mining. Additionally, TUSQ was designed to support target queries on sequence datasets, employing two novel upper bounds and the targeted utility chain to achieve targeted and efficient discovery of high-utility sequences. A general definition of targeted sequential pattern mining (TSPM) was provided, along with the introduction of an efficient algorithm, TaSPM \cite{huang2024taspm}. To facilitate the identification of abnormal behaviors and periodic patterns, TCSPM \cite{hu2024targeted} was developed for querying patterns with strict continuity, integrating the concept of targeted querying into the mining of contiguous sequential patterns. However, these approaches rely on total utility overestimation, and little attention has been paid to target queries under average utility semantics. This work aims to fill this gap by introducing a targeted high-average-utility pattern mining framework for quantitative sequence data.

\section{Preliminaries} \label{sec: preliminary}

This section outlines the notations and definitions used in this study to clearly characterize the research problem and proposed methodology. The remainder of this section shows some examples.

Let \( I \) = \{\( i_1 \), \( i_2 \), \( \cdots \), \( i_M \)\} be a set of distinct items, and let \( X \subseteq I \) represent a nonempty subset of these items, where \( |X| \) denotes the quantity of items in \( X \). A sequence \( S \) is defined as an ordered list of itemsets, where each itemset's items are sorted alphabetically. The size of \( S \) is the total quantity of itemsets it contains, while the length of \( S \) is the total count of individual items across all itemsets in this sequence. We refer to \( S \) as an \textit{l}-sequence if its length is l.

A sequence \( S \): \(\langle X_1 , X_2 , \cdots , X_n \rangle\) contains the subsequence \( {s'} \): \(\langle {X_v}', {{X_{v+1}}'}, \cdots , {X_m}' \rangle\), if there exist integers \( 1 \leqslant k_1 < k_2 < \cdots < k_m \leqslant n \) such that \( {X_v'} \subseteq X_{k_v},(1 \leqslant v \leqslant m) \), denoted by \( {s'} \subseteq S \). For example, consider the sequence \( s \) = \(\langle\{a\},\{a,b\},\{c,d,e\},\{f,g\}\rangle\), which consists of 4 itemsets or 7 distinct items. The size of \( s \) is 4, and its length is 8. The sequence \( {s'} \)=\(\langle {b}, {cd}, {f}\rangle\) is a subsequence of \( s \), or \( s \) contains the subsequence \( {s'} \), meaning \( {s'} \subseteq s \).

\begin{table}[ht]
    \caption{Quantitative sequential database} 
    \small
	\centering
	\begin{tabular}{cc}
	\toprule
	\textbf{SID} & \textbf{\textit{Q}-sequence} \\
	\midrule
	\( \textit{QS}_1 \) & \(\langle \{(b,4)(d,1)\},\{(b,2)(c,1)(d,4)\}, \{(a,1)(e,2)(i,1)\} \rangle\) \\
	\( \textit{QS}_2 \) & \(\langle \{(a,5)(c,2)(d,4)\}, \{(b,5)(c,1)(d,3)\}, \{(a,1)(e,2)\}, \{(f,4)\} \rangle\) \\
	\( \textit{QS}_3 \) & \(\langle \{(a,1)(b,1)(g,1)\}, \{(b,6)(c,4)(d,4)\}, \{(a,1)(i,3)\}, \{(a,1)(b,1)(d,4)(e,3)\} \rangle\) \\
	\( \textit{QS}_4 \) & \(\langle \{(c,1)(f,1)\}, \{(a,1)(c,5)(d,4)(e,1)\}, \{(b,1)(g,3)(i,1)\} \rangle\) \\
	\( \textit{QS}_5 \) & \(\langle \{(h,2)\}, \{(c,1)(d,3)(g,2)\}, \{(a,1)(e,1)(i,1)\} \rangle\) \\
	\bottomrule
	\end{tabular}
	\label{tb: database}
\end{table}

\begin{table}[!ht]
    \centering
    \caption{Utility table}
    \begin{tabular}{|l|l|l|l|l|l|l|l|l|l|}
    \hline
        \textbf{Item} & \( a \) & \( b \) & \( c \) & \( d \) & \( e \) & \( f \) & \( g \) & \( h \) & \( i \) \\ \hline
        \textbf{Profit} & 2 & 3 & 8 & 1 & 7 & 9 & 4 & 15 & 5 \\ \hline
    \end{tabular}
    \label{tb: utilityTable}
\end{table}

\begin{definition}[quantitative item, quantitative itemset, quantitative sequence, quantitative sequential database]
  \rm A quantitative sequential database consists of a quantitative sequence (\textit{q}-sequence) and the corresponding unique identifier (SID). Each quantitative sequence (\textit{q}-sequence) is an ordered list of the quantitative itemsets (\textit{q}-itemsets). In a certain quantitative sequential database, each distinct item \( i \) corresponds with its external utility \( \textit{eu}(i) \). The quantitative item (\textit{q}-item) in the \textit{q}-itemset is a pair \( (item,quantity) \), and the internal utility of each \textit{q}-item is its quantity, which is denoted as \( q(i,j,\textit{QS}) \), where \( i \) is the label of the item, and \( j \) is the numerical order of the quantitative itemset that contains this item in the quantitative sequence \( \textit{QS} \). 
\end{definition} 

In Table \ref{tb: database}, for instance, the \textit{q}-items \((a, 1)\) and \((e, 2)\) are ordered alphabetically in the last \textit{q}-itemset of the \textit{q}-sequence \( \textit{QS}_1 \), and we have \( q(a,3,\textit{QS}_1) \) = 1, \( q(e,3,\textit{QS}_1) \) = 2. The external utilities for items \( a \) and \( e \) are presented in Table \ref{tb: utilityTable}, where their values are 2 and 7, respectively.

\begin{definition}[utility of quantitative item, quantitative itemset, quantitative sequence]
  \rm Let \( \textit{QS} \): \( \langle Y_1, Y_2, \cdots, Y_n \rangle \) denote a \textit{q}-sequence, and \( Y_j \) is the \( j^{th} \) \textit{q}-itemset in \( \textit{QS} \). The \( (i,q) \) denotes one of the \textit{q}-items within \( Y_j \). The internal utility of the \textit(q)-item \( i \) is \( q(i,j,s) \) and its external utility is \( \textit{eu}(i) \). The utility of \textit{q}-item \( (i,q) \) is defined as \( u(i, j, \textit{QS}) \) = \( q(i,j,\textit{QS}) \times \textit{eu}(i) \). The utility of a \textit{q}-itemset is defined as the sum of \( q(i,j,\textit{QS}) \times \textit{eu}(i) \) for all \textit{q}-items \( i \) contained in it, denoted by \( u(Y_j, j, \textit{QS}) \) = \( \sum_{\forall (i,q) \in {Y_j}}{q(i,j,\textit{QS}) \times \textit{eu}(i)} \). The utility of the quantitative sequence \( \textit{QS} \) is defined as \( u(\textit{QS}) \) = \( \sum_{\forall {Y_j}\in \textit{QS}}{u(Y_j, j, \textit{QS})} \).
\end{definition} 

For example, the item \( a \), which is in the last \textit{q}-itemset of \( \textit{QS}_1 \) in Table \ref{tb: database}, has its utility calculated as: \( u(a, 3, \textit{QS}_1) \) = \( q(a,3,\textit{QS}_1) \times \textit{eu}(a) \) = 1 \(\times\) 2 = 2. Furthermore, \( u(\{ae\},3,\textit{QS}_1) \) = \( u(a, 3, \textit{QS}_1) + u(e, 3, \textit{QS}_1) \) = 2 + 14 = 16. As shown in Table \ref{tb: database}, we have \( u(\textit{QS}_1) \) = \( u(\{bd\}, 1, \textit{QS}_1) + u(\{bcd\},2,\textit{QS}_1) + u(\{aei\}, 3, \textit{QS}_1) \) = 13 + 18 + 21 = 52.

\begin{definition}[average utility of \textit{q}-item, \textit{q}-itemset, \textit{q}-sequence]
  \rm Let \( \textit{QS} \): \( \langle Y_1, Y_2, \cdots, Y_n \rangle \) denote a \textit{q}-sequence within the given quantitative sequential database, which we denote by \( \mathcal{D} \). Let \( (i,q) \) denote one of the \textit{q}-items in the \( j^{th} \) \textit{q}-itemset \( Y_j \) in \( \textit{QS} \). The size of \( Y_j \) is the entire count of \textit{q}-items in \( Y_j \), denoted as \( |Y_j| \). The size of \textit{QS} is \( |\textit{QS}| \) = \( n \). The length of \( \textit{QS} \) is \( |\textit{QS}| \) = \( \sum_{\forall Y_j\in \textit{QS}}{|Y_j|} \). The average utility of \textit{q}-itemset \( Y_j \) is defined as \( au(Y_j,j,\textit{QS}) \) = \( \frac {u(Y_j,j,\textit{QS})}{|Y_j|} \). The average utility of \textit{q}-item \( (i,q) \) is defined as \( au(i,j,\textit{QS}) \) = \( u(i,j,\textit{QS}) \). The average utility of \textit{q}-sequence \( \textit{QS} \) is \( au(\textit{QS}) \) = \( \frac {u(\textit{QS})}{|\textit{QS}|} \).
\end{definition} 

For example, the average utility of the last \textit{q}-itemset of \( \textit{QS}_1 \) in Table \ref{tb: database} is calculated as: \( au(\{ae\}, 3, \textit{QS}_1) \) = \( \frac {u(\{ae\}, 3, \textit{QS}_1)}{|\{ae\}|} \) = \( \frac{16}{2} \) = 8. Moreover, we have \( au(\textit{QS}_1) \) = \( \frac{52}{8} \) = 6.5.

\begin{definition}[match and contain] 
    \label{def: match&contain}
  \rm We say that the itemset \( X \): \( \{i_1, i_2, \cdots, i_m\} \) \textit{matches} the \textit{q}-itemset \( Y \): \{\( ({i'}_1, q_1) \), \( ({i'}_2, q_2) \), \( \cdots \), \( ({i'}_n, q_n) \)\} if and only if \( m \) = \( n \) such that \( i_k \) = \( {i'}_k, (1 \leqslant k \leqslant n) \). It could be notated as \( X \sim Y \). Let \( {X'} \) denote a subset of \( X \). We could say that \( Y \) \textit{contains} \( {X'} \), it is notated as \( {X'} \sqsubseteq Y \).
\end{definition}

\begin{definition}[instance] 
    \label{def: instance}
  \rm Consider the \textit{q}-sequence \( \textit{QS} \): $\langle$$Y_1$, $Y_2$, $\cdots$, $Y_n$$\rangle$ and the sequence \( S \): \( \langle X_1, X_2, \cdots, X_m \rangle \), where \( m \leqslant n \). Assume that there exists integer \( j_v \), if and only if 1 \( \leqslant j_1 < j_2 < \cdots < j_m \leqslant n \) and \( X_v \sqsubseteq Y_{j_v} \), where \(1 \leqslant v \leqslant m \). We say that in \( \textit{QS} \), there is an \textit{instance} of \( S \) at position \( p \): \( \langle j_1, j_2, \cdots, j_m \rangle \). Then, the sum of all \textit{q}-items utilities is the \textit{instance utility}. It is defined as \( u(S, p, \textit{QS}) \) = \( \sum_{\forall {Y_{j_v}}\in \textit{QS}}{u(Y_{j_v}, j_v, \textit{QS})} \). The \textit{instance average utility} is defined as \( au(S, p, \textit{QS}) \) = \( \frac{u(S, p, \textit{QS})}{|S|} \).
\end{definition}

For example, \( \langle \{(a,1)(e,2)\} \rangle \) \textit{contains} \( \{e\} \), and \{\( bd \)\} has two \textit{matches}, $\langle$\{$(b,4)$ $(d,1)$\} $\rangle$ and $\langle$ \{$(b,2)$ $(d,4)$\} $\rangle$, in \( \textit{QS}_1 \). The \textit{q}-sequences \( \langle \{(b,4)(d,1)\}, \{(e,2)\} \rangle \) and \( \langle \{(b,2)(d,4)\}, \{(e,2)\} \rangle \) are two \textit{instances} of \( \langle \{bd\}, \{e\} \rangle \) in \( \textit{QS}_1 \). Moreover, a \textit{k}-itemset (also referred to as a \textit{k}-\textit{q}-itemset) is defined as an itemset with a cardinality of exactly \( k \) items. Similarly, a \textit{k}-sequence (or \textit{k}-\textit{q}-sequence) denotes a sequence comprising precisely \( k \) items. For example, in Table \ref{tb: database}, the \textit{q}-sequence \( \textit{QS}_1 \) is a \textit{8}-\textit{q}-sequence, and its last \textit{q}-itemset is a \textit{3}-\textit{q}-itemset.

\begin{definition}[sequence average utility] 
    \label{def: SAU}
  \rm If the sequence \( S \): \( \langle X_1, X_2, \cdots, X_m \rangle \) appears at different positions in the \textit{q}-sequence \( \textit{QS} \): \( \langle Y_1, Y_2, \cdots, Y_n \rangle \). Let \( P(S, \textit{QS}) \) denote the set of all the positions of \( S \) in \( \textit{QS} \), the utility of the sequence \( S \) in \( \textit{QS} \) is the maximum \( u(S, p, \textit{QS}) \), and is denoted as \( u(S, \textit{QS}) \) = \( \max\limits_{p \in P(S, QS)}{u(S, p, \textit{QS})} \). The average utility of the sequence \( S \) in \( \textit{QS} \) is defined as \( au(S, \textit{QS}) \) = \( \frac{\max\limits_{p\in P(S, \textit{QS})}{u(S, p, \textit{QS})}}{|S|} \) = \( \max\limits_{p \in P(S, \textit{QS})}{\frac{u(S, p, \textit{QS})}{|S|}} \).
\end{definition} 

For example, in Table \ref{tb: database}, the utility of \( \langle \{bd\}, \{e\} \rangle \) in \( \textit{QS}_1 \) is determined by taking the maximum value from the utilities of its two \textit{instances} at different positions. That is , \( u(\langle \{bd\}, \{e\} \rangle,\textit{QS}_1) \) = \( \max{\{u(\langle \{(b,4)(d,1)\}, \{(e,2)\} \rangle, \textit{QS}_1), u(\langle \{(b,2)(d,4)\}, \{(e,2)\} \rangle, \textit{QS}_1)\}} \) = \( \max{\{27, 24\}} \) =27.

\textbf{Problem definition:} Given a query sequence \( T \) and a quantitative sequential database \( \mathcal{D} \), let \( \mathcal{D_T} \) denote the filtered database consisting of all sequences from \( \mathcal{D} \) that contain \( T \) as a subsequence. The total utility of the filtered database is denoted as \( u(\mathcal{D_T}) \). Let \( \xi \) be a user-specified parameter where \( 0 \leqslant \xi \leqslant 1 \). The minimum acceptable average utility is thus defined as \( \xi \times u(\mathcal{D_T}) \). Based on this, the targeted high average utility sequence querying (TAUSQ) or targeted high average utility sequential pattern mining (TAUSPM) problem is defined as the task of finding all targeted sequential patterns (TSPs) in the original database \( \mathcal{D} \) that both contain the query sequence \( T \) and have an average utility greater than the threshold \( \xi \times u(\mathcal{D_T}) \).

For example, in Table \ref{tb: database}, all sequences except \( \textit{QS}_4 \) contain the given query sequential pattern \( \langle \{d\}, \{e\} \rangle \). Therefore, the sequence \( \textit{QS}_4 \) is filtered out, resulting in a filtered database \( D_T \) = \{\( \textit{QS}_1, \textit{QS}_2, \textit{QS}_3, \textit{QS}_5 \)\}, with a total utility of \( u(D_T) \) = 333. If \( \xi \) = 0.1, then the minimum acceptable average utility becomes \( \xi \times u(\mathcal{D_T}) \) = 33.3. The sequence \( \langle \{cd\}, \{e\} \rangle \) is a targeted high average utility sequential pattern (TAUSP) since its average utility is \( au(\langle \{cd\}, \{e\} \rangle) \) = \( \frac{135}{3} \) = 45, which exceeds the threshold of 33.3. In summary, the formal problem studied in this paper is defined as follows: Given a quantitative sequential database, a query sequence, and a user-defined minimum average utility threshold, the task of TAUSPM is to enumerate all TAUSPs that contain the query sequence and whose average utility within the filtered database is greater than or equal to the specified threshold.

\section{Algorithm} \label{sec: algorithm}

In SPM, a typical approach begins by constructing a reasonable and compact projection database. To avoid multiple scanning and a combinatorial explosion, we adopt the classical pattern growth method \cite{pel2001prefixspan}. Moreover, various novel upper bounds and their variants are designed to effectively reduce the search space and enhance mining efficiency. The following sections provide a detailed description of the proposed algorithm.

\subsection{Pruning Strategies and Upper Bound Models} \label{sec: pS&ubs}

The proposed algorithm first identifies all the \textit{1}-sequences, whose average utility is equal to their utility, as the starting point. From these, candidate patterns are progressively extended. During this process, some pruning strategies and efficient data structures are employed to eliminate unpromising candidates and improve computational performance.

\begin{definition}[\textit{S}-Extension and \textit{I}-Extension \cite{yin2012uspan, pel2001prefixspan}] \label{def: extension}
  \rm Consider the last itemset \( X_k \): \( \{i_1,i_2,\cdots ,i_l\} \) in the sequence \( S \): \( \langle X_1,X_2,\cdots ,X_k\rangle \). Let \( X_k \) be the position for the \textit{extension} operation. For an appending item \( i \), if \( i \) is appended to \( X_k \) as \( i_{l + 1} \), the length of the sequence is increased by one, but the size of \( S \) remains static. However, if \( i \) is appended to \( S \) as \( X_{k + 1} \), both the size of \( S \) and the length of the sequence are increased by one. The former case is defined as \textit{I}-Extension and is notated as \( S \oplus i \). The latter case is defined as \textit{S}-Extension and is notated \( S \otimes i \).
\end{definition} 

For example, consider the sequence \( \textit{QS}_3 \) in Table \ref{tb: database}. Given a sequence \( s \) = \( \langle \{cd\} \rangle \) and a new appending item \( e \), the results of appending \( e \) are as follows: \( S \oplus i \) = \( \langle \{cde\} \rangle \), and \( S \otimes i \) = \( \langle \{cd\},\{e\} \rangle \). The newly generated sequences, after the \textit{extension} process, are treated as candidate patterns and form child nodes under the current node, \( \langle \{cd\} \rangle \), in the LQS tree. This process is analogous to the search procedure described in Ref. \cite{gan2020fast}. To guarantee the completeness and correctness of discovering TAUSPs, we also define an order for processing sequences based on the conditions outlined in Ref. \cite{gan2020fast}. For example, in the following cases, sequences on the left are always processed first. Consider the sequences \( \langle \{b\} \rangle \) and \( \langle \{bd\} \rangle \), where the sequence \( \langle \{bd\} \rangle \) is processed after \( \langle \{b\} \rangle \) due to its longer length. Similarly, for the sequences \( \langle \{bd\} \rangle \) and \( \langle \{b\},\{d\} \rangle \), the left sequence is obtained by performing an \textit{I}-Extension on \( \langle \{b\} \rangle \), while the right sequence results from an \textit{S}-Extension on \( \langle \{b\} \rangle \). Lastly, when comparing sequences such as \( \langle \{bc\} \rangle \) and \( \langle \{bd\} \rangle \), or \( \langle \{b\},\{c\} \rangle \) and \( \langle \{b\},\{d\} \rangle \), the left sequence is processed first because the item added to the left sequence in either a \textit{S}-Extension or an \textit{I}-Extension operation is lexicographically smaller than the item added to the right sequence.

\begin{definition}[extension item \cite{wang2016efficiently} and remaining \textit{q}-sequence \cite{wang2016efficiently, yin2012uspan}] \label{def: EI&RS}
  \rm Consider the \textit{instances} of \( S \): \( \langle X_1, X_2, \cdots, X_m \rangle \) and a \textit{q}-sequence \( \textit{QS} \): \( \langle Y_1, Y_2, \cdots, Y_n \rangle \), the \textit{instances} generally appear at several positions in \( \textit{QS} \). The set of positions is notated as \( P(S,\textit{QS}) \): \{\( p_1, p_2, \cdots, p_w \)\}. Let \( p_k \): \( \langle j_1, j_2, \cdots, j_m \rangle \) be one of positions, the \textit{extension position} \( j_m \) is the sequence number of \textit{q}-itemset in \( \textit{QS} \) which \textit{contains} \( X_m \). The \textit{extension item} is the \textit{q}-item which corresponds to the last item within \( X_m \). All the items that are behind the \textit{extension item} form a subsequence. We define the subsequence as the \textit{remaining sequence} of \( \textit{QS} \), designated as \( \textit{rs} \). The utility of \( \textit{rs} \) is notated as \( \textit{ru}(S, j_m, \textit{QS}) \).
\end{definition} 

\begin{definition}[longest query prefix and query suffix \cite{zhang2022tusq}] 
    \label{def: qPre &qSuf}
  \rm Consider a sequence \( S \): \( \langle X_1, X_2, \cdots, X_m \rangle \) is a prefix sequence in pattern growth. Let \( t \) be a prefix of the query sequence \( T \): \( \langle x_1, x_2, \cdots, x_v \rangle \) and \( q \) be an \textit{instance} of \( S \), then we have the \( q \) \textit{contains} \( t \). If and only if there exists no other subsequence of \( T \) which has \textit{instance} in \( q \) and whose length exceeds the length of \( t \), then \( t \) is defined as the longest query prefix of the query sequence \( T \), denoted as \( \textit{qPre}(T, S) \). The remaining part of \( T \), after removing \( \textit{qPre}(T, S) \), is referred to as the query suffix of the query sequence \( T \), denoted \( \textit{qSuf}(T, S) \).
\end{definition}

\begin{definition}[post-processing technique \cite{huang2024taspm} and pre-processing technique] \label{def: pp}
  \rm Since the introduction of target sequential pattern mining, most algorithms adopt two fundamental processing methods: preprocessing and postprocessing \cite{huang2024taspm}. In the preprocessing phase, an initial scan of the original database is carried out to determine whether a sequence includes the query sequence. Any sequence in the initial dataset that lacks the query sequence is filtered out. During the data preprocessing stage, after generating candidate patterns, each pattern is checked to verify if it includes the given query sequence. Patterns containing the query sequence are retained, while those that do not are discarded. When the pattern growth method is used to generate new patterns, both techniques significantly influence the efficiency. The combination of these methods helps control the search space and improve efficiency.
\end{definition} 

\begin{strategy}[sequence filter pruning strategy] \label{str: STR_1}
  \rm Given a specified query sequence \( T \), if the current sequence records in the database do not contain \( T \), filtering of the current sequence records is necessary. It is evident that sequence records not containing the query sequence will not generate targeted sequential patterns. By eliminating these irrelevant sequence records, memory consumption is reduced, thus enhancing efficiency. Moreover, for high average utility pattern mining, if the utility of the filtered database lies below a predefined utility threshold, targeted high average utility sequential patterns cannot be discovered from this database. Note that the specified minimum utility threshold is expressed as \( |T| \times \xi \times u(\mathcal{D_T}) \), where \( |T| \) is the length of the given target pattern \( T \).
\end{strategy}

Consider the database in Table \ref{tb: database} together with a query pattern \( T \) = \( \langle \{cd\}, \{e\} \rangle \). To achieve a more concise filtered database, it is clear that sequence \( \textit{QS}_4 \) should be filtered out. Without this filtering strategy, if \( \xi \) = 0.2 and the current sequence is \( \langle \{c\} \rangle \), the utility of \( \langle \{c\} \rangle \) would be 104, which exceeds the threshold \( \xi \times u(\mathcal{D}) \) = \( 0.2 \times 423\) = 84.6, potentially leading to further recursive growth. As a result, invalid operations would accumulate because the utility of \( \langle \{c\} \rangle \) in the target sequential pattern can reach at most 64, which does not exceed the threshold \( \xi \times u(\mathcal{D_T}) \) = \( 0.2 \times 333\) = 66.6. As another instance, consider the query sequence \( T \) = \( \langle \{d\},\{bcd\},\{ai\} \rangle \), with \( \xi \) also set to 0.2. After filtering the database \( \mathcal{D} \), the resulting filtered database \( \mathcal{D_T} \) will contain only the sequence \( \textit{QS}_1 \). Under this scenario, \( \mathcal{D_T} \) exhibits a utility of 52, falling below the minimum acceptable utility threshold calculated as \( \xi \times u(\mathcal{D_T}) \times |T| \) = \( 0.2 \times 52 \times 6 \) = 62.4. Therefore, no target sequences with high average utility will be discovered.

\begin{strategy}[prefix pattern pruning strategy] \label{str: STR_2}
  \rm In SPM algorithms, the pattern growth method \cite{pel2001prefixspan} is widely used. This approach generates new patterns by extending existing patterns through appending a new item to the tail of the prefix. Consider that, given a prefix, it is possible to filter the remaining part of the sequence accordingly. In targeted pattern mining, however, it calculates that the remaining part of the sequence contains the query suffix of the query sequence. Given the absence of the query suffix in the remaining part of the sequence corresponding to the given prefix, the pattern growth approach cannot be applied to generate patterns that contain the full query sequence. In line with Strategy 1, the current sequence should be excluded from further consideration. Moreover, if, based on the current prefix, the total utility of the filtered database falls below the specified utility threshold, it becomes impossible to discover any targeted high average utility sequential patterns.
\end{strategy} 

As exemplified in Table \ref{tb: database}, consider the query sequence \( T \) = \( \langle \{cd\} \rangle \) and the \( \xi \) set at 0.2; it is clear that the item \( c \) qualifies as a frequent item since its utility value of 104 surpasses the acceptable minimum utility threshold. Next, we compare the utility values by calculating the cumulative utility of sequences in the filtered database that include the given prefix. For this query sequence \( T \), the filter database encompasses all the sequences presented in Table \ref{tb: database}. When considering the prefix item \( h \), the filtered database is narrowed down to only \( \textit{QS}_5 \). It is evident that, under this strategy, the utility of the filtered database with the specified prefix is 63. Therefore, the item \( h \) will not be considered for further pattern extension.

\begin{strategy}[unpromising \textit{S}-Extension item pruning strategy] \label{str: STR_3}
  \rm This strategy is used to identify which sequential patterns can undergo recursive growth following an \textit{S}-Extension operation. It compares the current extension item with the current query item. If the current query item matches the current extension item, the longest query prefix and query suffix are updated accordingly. The remaining utility of all sequences with the current prefix \( s \) with the corresponding sequence's remaining portion containing the query suffix, is designated as \( \textit{ru}_\textit{suf}(s) \). Let \( u \) represent the utility of the prefix sequence. The current extension item cannot be used as an extension item for the pattern growth process, when the sum of \( u(s) + \textit{ru}_\textit{suf}(s) \) is below the minimum utility threshold.
\end{strategy} 

\begin{strategy}[unpromising \textit{I}-Extension item pruning strategy] \label{str: STR_4}
  \rm This strategy is used to determine which sequential patterns can continue recursive growth methods after undergoing \textit{I}-Extension operations. It compares the current extension item with the current query item. If the current query item matches the current extension item, it updates the longest query prefix and query suffix. The remaining utility of all sequences containing the current prefix \( s \) is computed by considering the rest of the sequence that contains the query suffix, denoted as \( \textit{ru}_\textit{suf}(s) \). Let \( u \) represent the prefix sequence utility. The current extension item cannot be used as an extension item for pattern growth, when the value of \( u(s) + \textit{ru}_\textit{suf}(s) \) falls below the prespecified minimum utility threshold. Note that for \textit{I}-Extension expansions, if the current query item appears before the current extension item, the longest query prefix and query suffix need to be reset.
\end{strategy}

For instance, referring to Table \ref{tb: database}, consider a current sequence \( s \) = \( \langle \{a\} \rangle \), the query sequence \( T \) = \( \langle \{cd\}, \{e\} \rangle \), and \( \xi \) = 0.1. It is evident that the item \( c \) can be extended through \textit{S}-Extension, and the resulting extended sequence \( {s'} \) is \( \langle \{a\}, \{c\} \rangle \). The query suffix for this extension is \( \textit{qSuf}(T, {s'}) \) = \( \langle \{d\}, \{e\} \rangle \). The utility of the prefix is given by \( u(s') \) = \( u(\langle \{a\}, \{c\} \rangle,\textit{QS}_2) + u(\langle \{a\}, \{c\} \rangle,\textit{QS}_3) \) = 18 + 34 = 52. The remaining utility is \( \textit{ru}_\textit{suf}(s') \) = \( \textit{ru}_\textit{suf}(\langle \{a\}, \{c\} \rangle,\textit{QS}_2) + \textit{ru}_\textit{suf}(\langle \{a\}, \{c\} \rangle,\textit{QS}_3) \) = 55 + 51 = 106. Therefore, we have \( u(s') + \textit{ru}_\textit{suf}(s') \) = 52 + 106 = 158, which exceeds the threshold \( \xi \times u(\mathcal{D_T}) \times (|{s'}| + |\textit{qSuf}|) \) =  0.1 $\times$ 333 $\times$ 4  = 133.2. Thus, the extended sequence \( {s'} \) can continue to grow a target sequential pattern with high average utility.

Similarly, in the case in Table \ref{tb: database}, for the current sequence \( s \) = \( \langle \{a\} \rangle \), the query sequence \( T \) = \( \langle \{cd\}, \{e\} \rangle \), and \( \xi \) = 0.1, the item \( c \) can also be extended through \textit{I}-Extension, forming the sequence \( {s'} \) = \( \langle \{ac\} \rangle \). The corresponding query suffix is \( \textit{qSuf}(T, {s'}) \) = \( \langle \{d\}, \{e\} \rangle \). In this case, the utility of the prefix is \( u(s') \) = \( u(\langle \{ac\} \rangle, \textit{QS}_2) \) = 26, and the remaining utility is \( \textit{ru}_\textit{suf}(s') \) = \( \textit{ru}_\textit{suf}(\langle \{ac\} \rangle,\textit{QS}_2) \) = 82. The total utility \( u(s') + \textit{ru}_\textit{suf}(s') \) = 26 + 82 = 108, which is below the threshold $\xi \times u(\mathcal{D_T})$ $\times$ $(|{s'}| + |\textit{qSuf}|)$ =  0.1 $\times$ 333 $\times$ 4 = 133.2. Therefore, the extended sequence \( {s'} \) cannot be further extended.

\begin{definition}[item match position (\textit{IIMatch}) and itemset match position (\textit{IMatch}) \cite{huang2024taspm}] 
\label{def: IIM&IM}
  \rm Throughout the entire pattern growth process, the item currently being extended is referred to as the current query item, and the itemset containing this item is referred to as the current query itemset. In TPM, it is crucial to track the matching status between the generated pattern and the query sequence, as this not only determines the pattern prefix but also plays a key role in the pruning strategy for suffix judgments. To record the match positions effectively, two flags are introduced: \textit{IMatch} and \textit{IIMatch}. The \textit{IMatch} flag stores the position of the current query itemset, while the \textit{IIMatch} flag stores the position of the current query item. These flags help monitor the progress of query matching. Once a generated pattern fully matches the query sequence, both flags are no longer updated.
\end{definition} 

These flags are initialized to 0. During pattern growth, if the current extension item matches the current query item, \textit{IIMatch} is updated to 1. Continuing to expand within the current itemset, each occurrence of a matched item increments \textit{IIMatch} by 1. Once the current query itemset is fully expanded, further extensions within the same itemset no longer update \textit{IIMatch}. The \textit{IMatch} remains unchanged unless \textit{IIMatch} equals the size of the current query itemset, at which point \textit{IMatch} is incremented by 1, and \textit{IIMatch} is reset to 0. If the extension position changes, meaning the current extension itemset changes, then \textit{IIMatch} is reset to 0. This mechanism allows efficient tracking of query matches without maintaining costly arrays or structures, which is particularly beneficial for long query sequences.

For example, suppose the query sequence is \( \langle \{cd\},\{ae\} \rangle \) and the current sequence is \( \langle \{ab\},\{c\} \rangle \). Upon an item \( d \) is extended to \( \langle \{ab\},\{c\} \rangle \) via an \textit{I}-Extension, the sequence transforms into \( \langle \{ab\},\{cd\} \rangle \), allowing for further \textit{I}-Extensions. Since the appending item \( d \) is the next \textit{extension item} in the query, and the \textit{extension position} is within the itemset containing \( c \), this operation updates the \textit{IIMatch} from 1 to 2. Once all items within the current itemset of \( \langle \{cd\},\{ae\} \rangle \) appear in the pattern, the \textit{IIMatch} is reset to 0, and \textit{IMatch} is updated from 0 to 1. Next, we proceed with another extension. If item \( a \) is extended, which is also the next part of the query and positioned in a different itemset from \( c \), this extension is performed via an \textit{S}-Extension. Consequently, the sequence becomes \( \langle \{ab\},\{cd\},\{a\} \rangle \) and \textit{IIMatch} is updated from 0 to 1. A special case arises when \textit{IMatch} reaches the size of the query sequence, indicating a complete match between a subsequence of the pattern and the query sequence. At this stage, further updates to \textit{IIMatch} and \textit{IMatch} are unnecessary.

From strategies \ref{str: STR_3} and \ref{str: STR_4}, it can be observed that when using a pattern growth approach to estimate the UBs of pattern utility for a query sequence, it is necessary to repeatedly verify whether the \textit{remaining sequence} \textit{contains} the corresponding \textit{qSuf}. To improve efficiency in this process, a novel data structure called the \textit{LI}-Table was introduced in Ref. \cite{zhang2022tusq}. Specifically, it stores the position of the final \textit{instance} of each itemset in \( T \) within the current \( \textit{QS} \). This transforms the complex problem of sequence matching into a simple numerical comparison, enabling what we call the position comparison method for more efficient sequence evaluation.

For example, consider a \textit{q}-sequence \( \textit{QS} \): \( \langle Y_1, Y_2, \cdots, Y_n \rangle \) of size \( n \), which contains the query sequence \( T \): \( \langle x_1, x_2, \cdots, x_v \rangle \). Starting from the last itemset of the sequence, \( Y_n \), we traverse the \textit{q}-sequence in reverse order and check whether each current itemset is the last itemset \( x_v \) of the query sequence \( T \). The \textit{LI}-Table documents the first match position. Next, we continue the search for the second-to-last itemset of \( T \), \( x_{v-1} \), within \( \textit{QS} \). Importantly, the search for each preceding itemset in \( T \) does not restart from the end of \( \textit{QS} \), but rather from the position previously found. This process continues until the positions of all itemsets in \( T \) have been recorded in the \textit{LI}-Table. By comparing the current \textit{extension position} with the position of the last \textit{instance} in the \textit{LI}-Table, we can efficiently determine whether \textit{qSuf} is present in the \textit{remaining sequence}, thus avoiding frequent scans of \( \textit{QS} \). If the current \textit{extension position} precedes the recorded position, the extension is promising. Otherwise, it is unpromising, as the \textit{remaining sequence} can not \textit{contain} \textit{qSuf}. This method offers faster querying than traditional subsequence checking.

In the context of HUSPM, pruning strategies are commonly combined with utility upper bounds to narrow the search space and boost efficiency. In some HAUPM algorithms, such as those in \cite{hong2009mining, hong2011effective}, the auub model was employed to estimate the sequence average utility \cite{hong2009mining, hong2011effective}. In these algorithms, high-utility items in transactions are used to replace the average utility of patterns. However, these approaches often perform poorly on datasets with uneven distributions in utility. To tackle this challenge, Lin et al. \cite{lin2017ehaupm} introduced the looser upper bound utility (lub) for discovering high average utility itemsets (HAUIs). The lub assumes that the utility of itemsets that may be extended in the \textit{remaining sequence} is equivalent to \( \textit{remu} \), the maximum utility of any item within the \textit{remaining sequence}. Its anti-monotonicity was formally proven in \cite{lin2017ehaupm}. Later, EHAUSM \cite{truong2020ehausm} extended the discussion on upper bounds by proposing the use of BiUB or \( \textit{\rm AMUB}_1 \) as the tighter bounds, depending on the scenario, offering more flexibility.

However, several challenges remain. Specifically, the utility estimation of the itemsets with the maximum utility is complicated by the fact that the \textit{remaining sequence} continually changes during the recursive mining process. In many cases, it is difficult to determine whether the current appending item is the one with maximal utility in the \textit{remaining sequence}, or to quickly identify the rank of any specific item in the \textit{remaining sequence} when it is sorted in descending order of utility. This requires multiple scans of the updated \textit{remaining sequence} to find the item with the maximum utility. Although using itemsets with higher utility based on utility ranking introduces less bias compared to estimating utility using the item with the maximum utility, the ranking process is both time-consuming and memory-intensive. These performance costs become especially problematic on datasets with certain data distributions. 

The method proposed in this paper involves two main components: filtering the original items and processing the filtered items. Notably, during the pattern growth process, it is unnecessary to determine the utility-based order of items in the \textit{remaining sequence}. The item with the maximum utility and the total utility of the sequence are also not required. For the TAUSPM, the process begins by checking whether the rest of the sequence containing the current prefix includes the query suffix corresponding to the longest query prefix. Then, for any \textit{extension item}, if the item utility is below the average utility of the prefix, the average utility of the generated pattern cannot increase. Moreover, any item in the \textit{remaining sequence} with utility inferior to the user-specified minimum acceptable average utility cannot help generate patterns with higher average utility from previous patterns with non-high average utility. To guide this evaluation, we propose a new measure that estimates the ability of the \textit{remaining sequence} to increase the average utility of the generated pattern. This evaluation metric computes the maximum additional utility increment provided by items in the \textit{remaining sequence} that meet the prespecified average utility threshold and support the growth of the average utility of the generated pattern.

\begin{definition}[remaining rising sequence] \label{def: RRS}
    \rm Consider a \textit{q}-sequence \( \textit{QS} \) \textit{containing} the query sequence \( T \), at the \textit{extension position} \( j_m \) within \( \textit{QS} \), sequence\( S \) has an \textit{instance}. The \textit{remaining sequence} is the rest after position \( p \): \( \langle j, j_2, \cdots, j_m \rangle \) to the end, denoted as \( \textit{rs} \), and \( \textit{qSuf}(T, S) \sqsubseteq \textit{rs} \). Its subsequence, consisting of items with utility values at least a predefined minimum threshold, is the \textit{remaining rising sequence} for this threshold, denoted as \( \textit{rrs} \). The utilities of \( \textit{rs} \) and \( \textit{rrs} \) are denoted as \( \textit{ru}(S, j_m, \textit{QS}) \) and \( u_\textit{rrs}(S, T, j_m, \textit{QS}) \), respectively.
\end{definition}

\begin{definition}[suffix remaining average utility] \label{def: SRAU}
    \rm Consider a \textit{q}-sequence \( \textit{QS} \) and query sequence \( T \), at position \( p \): \( \langle j, j_2, \cdots, j_m \rangle \), sequence \( S \) has an \textit{instance}. Its \textit{extension position} is \( j_m \), and the corresponding \textit{remaining sequence} is \( \textit{rs} \) that spans from \( p \) to the end, and \( \textit{qSuf}(T, S) \sqsubseteq rs \). Let \( \textit{rrs} \) be a subsequence in \( \textit{rs} \) containing only items with utility exceeding \( \xi \times u(\mathcal{D_T}) \). The suffix remaining average utility of the sequence \( S \) at position \( p \) for \( T \), denoted \( \textit{SRAU}(S, T, p, \textit{QS}) \), is formulated as
\begin{displaymath}
\begin{split}
\textit{SRAU}(S, T, p, \textit{QS}) 
= \left\{\begin{array}{rcl} \frac{u(S, T, p, \textit{QS}) + u_\textit{rrs}(S, j_m, \textit{QS})}{|S|}, & \textit{rs} \neq \emptyset \wedge \textit{qSuf}(T, S) \sqsubseteq \textit{rs} \\ 0, & otherwise \end{array}\right. \nonumber
\end{split}
\end{displaymath}

Let \( p_i \) denote a specific position of \( S \) with respect to \( T \) in \( \textit{QS} \). Then, we define \( \textit{SRAU}(S, T, \textit{QS}) \) = \( \max{\{\textit{SRAU}(S, T, p_i, \textit{QS})\}} \) as the \( \textit{SRAU} \) of \( S \) with respect to \( T \) in the \textit{q}-sequence \( \textit{QS} \). Finally, the \( \textit{SRAU} \) value of \( S \) with respect to \( T \) in the database \( \mathcal{D} \), denoted as \( \textit{SRAU}(S, T) \) = \( \sum_{S \sqsubseteq \textit{QS} \wedge \textit{QS} \in \mathcal{D_T}}{\textit{SRAU}(S, T, \textit{QS})} \), is defined as the upper bound of average utility.
\end{definition} 

For example, referring to Table \ref{tb: database}, consider a current pattern \( s \) = \( \langle \{b\} \rangle \), a query sequence \( T \) = \( \langle \{cd\}, \{e\} \rangle \), and \( \xi \) = 0.15. It is clear that the item \( c \) can be extended through \textit{S}-Extension or \textit{I}-Extension, resulting in the extended sequences \( {s'}_1 \) = \( \langle \{bc\} \rangle \) and \( {s'}_2 \) = \( \langle \{b\}, \{c\} \rangle \), respectively. For \( {s'}_1 \), we have \( u({s'}_1) + \textit{ru}_\textit{suf}({s'}_1) \) = \( u({s'}_1, \textit{QS}_1) + u({s'}_1, \textit{QS}_2) + u({s'}_1, \textit{QS}_3) + \textit{ru}({s'}_1, \textit{QS}_1) + \textit{ru}({s'}_1, \textit{QS}_2) + \textit{ru}({s'}_1, \textit{QS}_3) \) = 14 + 23 + 50 + 25 + 55 + 51 = 218. Similarly, for \( {s'}_2 \), we have \( u({s'}_2) + \textit{ru}_\textit{suf}({s'}_2) \) = \( u({s'}_2, \textit{QS}_1) + u({s'}_2,\textit{QS}_3) + \textit{ru}({s'}_2, \textit{QS}_1) + \textit{ru}({s'}_2, \textit{QS}_3) \) = 20 + 35 + 25 + 51 = 131. Both extended sequences appear to satisfy the threshold \( \xi \times u(\mathcal{D_T}) \times |{s'}_1| \) = \( \xi \times u(\mathcal{D_T}) \times |{s'}_2| \) = \( 0.15 \times 333 \times 2 \) = 99.9. However, when using the evaluation metric \( \textit{rrs} \) to measure the \textit{remaining utility}, the results change. For \( {s'}_1 \), we have \( u({s'}_1) + \textit{rrs}_\textit{suf}({s'}_1) \) = \( u({s'}_1, \textit{QS}_1) + u({s'}_1, \textit{QS}_2) + u({s'}_1, \textit{QS}_3) + u_\textit{rrs}({s'}_1, \textit{QS}_1) + u_\textit{rrs}({s'}_1, \textit{QS}_2) + u_\textit{rrs}({s'}_1, \textit{QS}_3) \) = 14 + 23 + 50 + 14 + 50 + 21 = 172. For \( {s'}_2 \), \( u({s'}_2) + \textit{rrs}_\textit{suf}({s'}_2) \) = \( u({s'}_2, \textit{QS}_1) + u({s'}_2, \textit{QS}_3) + u_\textit{rrs}({s'}_2, \textit{QS}_1) + u_\textit{rrs}({s'}_2, \textit{QS}_3) \) = 20 + 35 + 14 + 21 = 90. The utility value of the latter falls below the threshold of 99.9. Consequently, only \( {s'}_1 \) can grow into a valid target sequential pattern with high average utility.

\begin{theorem} \label{theo: THEO_1}
  \rm Consider the query sequence \( T \), a sequence \( S \) \( \neq \) \( \langle \rangle \) and its extension \( {S'} \) in database \( \mathcal{D} \). Both sequences satisfy the conditions \( \textit{qSuf}(T, S) \sqsubseteq \textit{rs}(S, T) \) and \( \textit{qSuf}(T, S') \sqsubseteq \textit{rs}(S', T) \). If \( \textit{SRAU}(S, T) \leqslant \xi \times u(\mathcal{D_T}) \) then \( au({S'}, T) \leqslant \xi \times u(\mathcal{D_T}) \).
\end{theorem}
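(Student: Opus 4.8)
I would prove the sharper inequality $au(S',T)\le \textit{SRAU}(S,T)$, which together with the hypothesis gives the claim. The first step is a reduction to a per-sequence estimate: writing $|S'|=|S|+1$, one has $|S'|\cdot au(S',T)=\sum_{QS\in\mathcal{D_T},\,S'\sqsubseteq QS}u(S',QS)$ and $\textit{SRAU}(S,T)=\sum_{QS\in\mathcal{D_T},\,S\sqsubseteq QS}\textit{SRAU}(S,T,QS)$ with all summands nonnegative, and since $S'\sqsubseteq QS$ implies $S\sqsubseteq QS$, it is enough to prove $u(S',QS)\le |S'|\cdot \textit{SRAU}(S,T,QS)$ for each $QS\in\mathcal{D_T}$ containing $S'$ and then sum (the sequences containing $S$ but not $S'$ only enlarge the right-hand side).

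Next, I would fix such a $QS$ and an instance $p'$ of $S'$ attaining $u(S',QS)$. Deleting the single appended item $i$ from $p'$ recovers an instance $p$ of $S$ with extension position $j_m$, and additivity of instance utility (Definition~\ref{def: instance}) gives $u(S',p',QS)=u(S,p,QS)+u(i,j_i,QS)$, where $j_i$ is the itemset index of $i$. By Definition~\ref{def: extension}, an $I$-Extension places $i$ after the extension item of $S$ inside $Y_{j_m}$ and an $S$-Extension places $i$ in a strictly later itemset, so $i$ always lies in the remaining sequence $\textit{rs}$ of $S$ at $p$, and $\textit{rs}(S',p',QS)$ is a suffix of it. I would then argue that the hypothesis $\textit{qSuf}(T,S')\sqsubseteq \textit{rs}(S',T)$, applied at $p'$, forces $\textit{qSuf}(T,S)\sqsubseteq \textit{rs}(S,p,QS)$, so $p$ is an admissible position for $S$ and the non-degenerate branch of Definition~\ref{def: SRAU} holds at $p$, giving $\textit{SRAU}(S,T,p,QS)=\frac{u(S,p,QS)+u_\textit{rrs}(S,j_m,QS)}{|S|}\le \textit{SRAU}(S,T,QS)$.

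The final step is a two-case bound on $u(S,p,QS)+u(i,j_i,QS)$ in terms of $|S'|\cdot\textit{SRAU}(S,T,p,QS)$. Let $a=u(S,p,QS)/|S|$. If $u(i,j_i,QS)\le a$, appending $i$ cannot raise the average, so $\frac{u(S,p,QS)+u(i,j_i,QS)}{|S|+1}\le \frac{|S|a+a}{|S|+1}=a\le \textit{SRAU}(S,T,p,QS)$ (using $u_\textit{rrs}\ge 0$). If $u(i,j_i,QS)>a$, then $i$ is an item of $\textit{rs}$ that would grow the pattern's average, hence — this being exactly the design intent of the remaining rising sequence, Definition~\ref{def: RRS} — $i$ is retained in $\textit{rrs}$, so $u(i,j_i,QS)\le u_\textit{rrs}(S,j_m,QS)$ and $\frac{u(S,p,QS)+u(i,j_i,QS)}{|S|+1}\le \frac{u(S,p,QS)+u_\textit{rrs}(S,j_m,QS)}{|S|}=\textit{SRAU}(S,T,p,QS)$. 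In both cases $u(S',QS)\le |S'|\cdot\textit{SRAU}(S,T,QS)$, and summing over $QS\in\mathcal{D_T}$ finishes the proof.

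I expect the main obstacle to be the middle step's bookkeeping: showing that an optimal instance of $S'$ restricts to an \emph{admissible} instance of $S$ (so the query-suffix side condition is inherited and the non-degenerate branch of $\textit{SRAU}$ can be invoked), and that the appended item genuinely belongs to the remaining sequence under both extension modes. A second, more delicate point is the membership rule for $\textit{rrs}$: the dichotomy above needs every average-raising item of the remaining sequence to be counted in $u_\textit{rrs}$, and reconciling this with the threshold $\xi\,u(\mathcal{D_T})$ in Definition~\ref{def: SRAU} uses the hypothesis once more, since $\textit{SRAU}(S,T)\le \xi\,u(\mathcal{D_T})$ with nonnegative summands forces $a=u(S,p,QS)/|S|\le \xi\,u(\mathcal{D_T})$, i.e. the prefix average never exceeds the global bar. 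The remaining arithmetic — convexity of $t\mapsto(|S|a+t)/(|S|+1)$ and $u_\textit{rrs}\ge 0$ — is routine.
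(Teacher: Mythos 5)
Your overall architecture (reduce to a per-sequence bound against $\textit{SRAU}(S,T,\textit{QS})$, then sum) is reasonable, and Case~1 of your dichotomy is fine. The gap is in Case~2. You argue that any appended item $i$ with $u(i,j_i,\textit{QS})>a$, where $a=u(S,p,\textit{QS})/|S|$ is the prefix average, must be ``retained in $\textit{rrs}$'' so that $u(i,j_i,\textit{QS})\leqslant u_{\textit{rrs}}(S,j_m,\textit{QS})$. But Definition~\ref{def: RRS} admits an item into $\textit{rrs}$ only if its utility is at least the \emph{global} threshold $\xi\,u(\mathcal{D_T})$, not the prefix's current average. Your own patch shows $a\leqslant\xi\,u(\mathcal{D_T})$, which cuts the wrong way: it leaves open the regime $a<u(i,j_i,\textit{QS})<\xi\,u(\mathcal{D_T})$, in which $i$ raises the pattern's average yet is excluded from $\textit{rrs}$. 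If $\textit{rrs}$ is empty there, $u_{\textit{rrs}}=0<u(i,j_i,\textit{QS})$ and your per-sequence inequality $u(S',\textit{QS})\leqslant|S'|\cdot\textit{SRAU}(S,T,\textit{QS})$ fails outright, so the sharper claim $au(S',T)\leqslant\textit{SRAU}(S,T)$ is not established by this route (and, read per sequence, is simply false).

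The paper avoids this by never comparing the appended item's utility to the prefix average. It works with the excess over the global threshold, $\textit{exu}(s,T)=u(s,T)-|s|\times\xi\times u(\mathcal{D_T})$ for a general extension sequence $s$, and the analogous $\textit{exu}_{\textit{rrs}}(S,T)=u_{\textit{rrs}}(S,T)-|\textit{rrs}|\times\xi\times u(\mathcal{D_T})$. Items below the threshold contribute negative excess, so dropping them from $\textit{rrs}$ only helps, giving $\textit{exu}(s,T)\leqslant\textit{exu}_{\textit{rrs}}(S,T)$ termwise; the troublesome intermediate items are absorbed because their excess is nonpositive, which keeps $au(S',T)$ below $\xi\times u(\mathcal{D_T})$ even at positions where it exceeds the local $\textit{SRAU}$ value. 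To repair your argument, replace the dichotomy ``$u(i)\lessgtr a$'' with ``$u(i)\lessgtr\xi\,u(\mathcal{D_T})$'' and carry the comparison against the threshold rather than against $\textit{SRAU}(S,T,\textit{QS})$ alone (equivalently, adopt the excess-utility bookkeeping). A secondary point: you assume $|S'|=|S|+1$, whereas the paper's proof treats an arbitrary extension sequence $s$ with $|s|\geqslant1$; your single-step version would additionally need an induction to cover all descendants.
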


\begin{proof} \label{proof: PROOF_1}
  \rm Assume that a sequence \( {S'} \) can be extended from its prefix \( S \) with an extension sequence \( s \). \( s \) is a subsequence of the \textit{remaining sequence} \( \textit{rs} \), and \( \textit{qSuf}(T, S) \sqsubseteq \textit{rs} \). Let \( \textit{exu} \) represent the excess part of utilities exceeding the threshold. Then, in \( \textit{QS} \), the excess utility of \( s \) can be notated as \( \textit{exu}(s, T) \). Let \( \xi \times u(\mathcal{D_T}) \) be the threshold, we obtain \( \textit{exu}(s, T) \) = \( u(s, T) - (|s| \times \xi \times u(\mathcal{D_T})) \). Subsequently, we have the excess utility of \textit{remaining rising sequence} of \( S \) is \( {exu}_\textit{rrs}(S, T) \) = \( u_\textit{rrs}(S, T) - (|\textit{rrs}| \times \xi \times u(\mathcal{D_T})) \). It is obviously that \( \textit{exu}(s, T) \leqslant {exu}_\textit{rrs}(S, T) \). As the problem statement of TAUSPM/TAUSQ, we have \( |\textit{rs}| \geqslant |s| \geqslant 1 \). Then, we derive
	\begin{displaymath}
	\begin{split}
	au({S'}, T) &= \frac{\sum{u({S'}, T, \textit{QS})}}{|{S'}|} \leqslant \frac{\sum{u(S, T, \textit{QS})} + \sum{u(s, T, \textit{QS})}}{|S| + |s|} \\
	&= \xi \times u(\mathcal{D_T}) + \frac{\textit{exu}(S, T) + \textit{exu}(s, T)}{|S| + |s|} \\
	&\leqslant \xi \times u(\mathcal{D_T}) + \frac{\textit{exu}(S, T) + {exu}_\textit{rrs}(S, T)}{|S|} \\
	&= \frac{[\textit{exu}(S, T) + |S| \times \xi \times u(\mathcal{D_T})] + {exu}_\textit{rrs}(S, T)}{|S|} \\
	&\leqslant \frac{u(S, T) + u_\textit{rrs}(S, T)}{|S|}. \nonumber
	\end{split}
	\end{displaymath}
\end{proof}
	
Thus, for \( S \sqsubseteq {S'} \), we have \( au({S'}, T) \leqslant \textit{SRAU}(S, T) \) in \( \mathcal{D} \). It can be shown that this \( \textit{SRAU}(S, T) \) is one of the average utility UBs and enables removing unpromising items in the \textit{remaining sequence}.

\begin{strategy}[depth pruning strategy] \label{str: STR_5}
  \rm Consider a query sequence \( T \) and a \textit{q}-sequence \( S \), if \(\textit{SRAU}(S, T) \) falls below the prespecified minimum acceptable average utility, \( \xi \times u(\mathcal{D_T}) \), then there is no need to check any descendant sequences extending from \( S \). In other words, TAUSQ can terminate the extension of the \textit{q}-sequence \( S \).
\end{strategy} 
	
\begin{definition}[terminated descendants' average utility] \label{def: TDAU}
  \rm In \textit{q}-sequence \( \textit{QS} \), let \( \textit{SRAU}(S, T, \textit{QS}) \) be the suffix remaining average utility of \( S \), where \( T \) is the query sequence. Through one \textit{extension} operation, the sequence \( S \) is expanded to a sequence \( {S'} \). This entails that a node in the LQS-tree denotes \textit{S}, with the node for \( {S'} \) serving as its child. The \( \textit{TDAU}({S'}, T, \textit{QS}) \) is terminated descendants' average utility of \( {S'} \) for \( T \) in \( \textit{QS} \), and is formulated as
	\begin{displaymath}
		\begin{split}
			&\textit{TDAU}({S'}, T, \textit{QS}) 
			= \left\{\begin{array}{rcl} \textit{SRAU}(S, T, \textit{QS}), & {S \sqsubseteq \textit{QS}} \wedge {{S'} \sqsubseteq \textit{QS}} \wedge \textit{qSuf}(T, S) \sqsubseteq \textit{rs} \\ 0, & otherwise \end{array}\right. \nonumber
		\end{split}
	\end{displaymath}
\end{definition} 
Then, the \textit{TDAU} of a \textit{q}-sequence \( S \) with respect to \( T \) the database\( \mathcal{D} \), denoted as \( \textit{TDAU}(S, T) \) = \( \sum_{{S \sqsubseteq \textit{QS}} \wedge {\textit{QS} \in \mathcal{D}}}{\textit{TDAU}(S, T, \textit{QS})} \), is defined as another upper bound of average utility.

As an example, take the database presented in Table \ref{tb: database}. Let a sequence be \( s \) = \( \langle \{b\}, \{cd\} \rangle \), with the query sequence \( T \) = \( \langle \{cd\}, \{e\} \rangle \) and the parameter \( \xi \) = 0.1. The sequence \( {s'} \) = \( \langle \{b\}, \{cd\}, \{i\} \rangle \) is generated form \( s \) by an \textit{S}-Extension. It is evident that both \textit{q}-sequences \( \textit{QS}_1 \) and \( \textit{QS}_3 \) \textit{contain} \( s \) and \( {s'} \). Next, we get \( \textit{TDAU}({s'}, T, \textit{QS}_1) \) = 0, since the corresponding \textit{rs} is null and does not \textit{contain} \textit{qSuf}(T, {s'}). Therefore, \( \textit{TDAU}({s'}, T) \) = \( \textit{TDAU}({s'}, T, \textit{QS}_1) \) + \( \textit{TDAU}({s'}, T, \textit{QS}_3) \) = \( 0 + \textit{SRAU}({s}, T, \textit{QS}_3) \) = \( \frac{0 + 60}{3} \) = 20, which does not exceed the minimum acceptable average utility threshold, \( \xi \times u(\mathcal{D}) \) = 33.3.

\begin{theorem} \label{theo: THEO_2}
  \rm Consider the query sequence \( T \) and a sequence \( {S'} \neq \langle \rangle \) in \( \mathcal{D} \), assume a sequence \( {S''} \) = \( {S'} \) or is extended from the sequence \( {S'} \), and both sequences satisfy the conditions \( \textit{qSuf}({S'}, T) \subset \textit{rs}({S'}, T) \) and \( \textit{qSuf}({S''}, T) \subset \textit{rs}({S''}, T) \). If \( \textit{TDAU}({S'}, T) \leqslant \xi \times u(\mathcal{D_T}) \) then \( \textit{au}({S''}, T) \leqslant \xi \times u(\mathcal{D_T}) \).
\end{theorem}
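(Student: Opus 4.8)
The plan is to reduce Theorem~\ref{theo: THEO_2} to Theorem~\ref{theo: THEO_1} by arguing one \textit{q}-sequence at a time and exploiting the fact that, by Definition~\ref{def: TDAU}, whenever the per-sequence term $\textit{TDAU}(S', T, \textit{QS})$ is nonzero it equals $\textit{SRAU}(S, T, \textit{QS})$, where $S$ is the parent of $S'$ in the LQS-tree (so $S'$ is a one-step \textit{I}-Extension or \textit{S}-Extension of $S$). First I would expand both sides into per-sequence sums: as in the proof of Theorem~\ref{theo: THEO_1}, $\textit{au}(S'', T) = \sum_{\textit{QS}} u(S'', T, \textit{QS})/|S''|$, and by Definition~\ref{def: TDAU}, $\textit{TDAU}(S', T) = \sum_{\textit{QS}} \textit{TDAU}(S', T, \textit{QS})$ with every summand nonnegative. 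A term $u(S'', T, \textit{QS})/|S''|$ is nonzero only when $S'' \sqsubseteq \textit{QS}$ and $\textit{qSuf}(T, S'') \sqsubseteq \textit{rs}(S'', T, \textit{QS})$. Fix such a $\textit{QS}$. Since $S''$ equals $S'$ or is obtained from $S'$ by pattern growth (the case $S'' = S'$ being the trivial one-step extension of $S$), and $S'$ is a one-step extension of $S$, we have $S \sqsubseteq S' \sqsubseteq S''$, so $\textit{QS}$ also contains $S'$ and $S$.

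The crucial step is a monotonicity property of target-feasibility: if $\textit{qSuf}(T, S'') \sqsubseteq \textit{rs}(S'', T, \textit{QS})$, then $\textit{qSuf}(T, S) \sqsubseteq \textit{rs}(S, T, \textit{QS})$. I would establish this by unwinding Definitions~\ref{def: EI&RS} and \ref{def: qPre &qSuf} together with the \textit{IMatch}/\textit{IIMatch} bookkeeping of Definition~\ref{def: IIM&IM}: extending a prefix only weakly advances the matched portion of $T$, so $\textit{qPre}(T, S)$ is a prefix of $\textit{qPre}(T, S'')$ and hence $\textit{qSuf}(T, S'')$ is a suffix of $\textit{qSuf}(T, S)$; moreover, restricting the chosen instance of $S''$ to its first $|S|$ itemsets yields an instance of $S$ whose \textit{remaining sequence} contains, in order, the items appended while growing $S$ into $S''$ (which realize the extra matched prefix of $T$ beyond $\textit{qPre}(T, S)$) followed by $\textit{rs}(S'', T, \textit{QS})$. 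Concatenating the two embeddings gives $\textit{qSuf}(T, S) \sqsubseteq \textit{rs}(S, T, \textit{QS})$, and then Definition~\ref{def: TDAU} yields $\textit{TDAU}(S', T, \textit{QS}) = \textit{SRAU}(S, T, \textit{QS})$ for this $\textit{QS}$.

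Now, since $S \sqsubseteq S''$ and both $S$ and $S''$ are target-feasible in $\textit{QS}$, the excess-utility estimate in the proof of Theorem~\ref{theo: THEO_1} applies verbatim to this single \textit{q}-sequence (the inequality $\textit{exu}(s, T) \leqslant \textit{exu}_\textit{rrs}(S, T)$ is local to $\textit{QS}$), giving $u(S'', T, \textit{QS})/|S''| \leqslant \textit{SRAU}(S, T, \textit{QS})$. Combining this with the previous paragraph, $u(S'', T, \textit{QS})/|S''| \leqslant \textit{TDAU}(S', T, \textit{QS})$ for every $\textit{QS}$ that contributes to $\textit{au}(S'', T)$; for all other $\textit{QS}$ the left-hand side is $0 \leqslant \textit{TDAU}(S', T, \textit{QS})$. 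Summing over all $\textit{QS}$ and using $\textit{TDAU}(S', T, \textit{QS}) \geqslant 0$,
\begin{displaymath}
\textit{au}(S'', T) = \sum_{\textit{QS}} \frac{u(S'', T, \textit{QS})}{|S''|} \leqslant \sum_{\textit{QS}} \textit{TDAU}(S', T, \textit{QS}) = \textit{TDAU}(S', T) \leqslant \xi \times u(\mathcal{D_T}),
\end{displaymath}
which is the assertion.

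The main obstacle is the monotonicity lemma for target-feasibility. The delicate cases are \textit{I}-Extensions landing inside an itemset of $\textit{QS}$ that is only partially matched, and the reset of $\textit{qPre}/\textit{qSuf}$ described in Strategy~\ref{str: STR_4} when the appended item precedes the current query item; in each one must verify that the shorter prefix $S$ still leaves enough of $\textit{QS}$ after its extension item to host the (possibly longer) $\textit{qSuf}(T, S)$. Once this bookkeeping is settled, the rest is a routine per-sequence invocation of Theorem~\ref{theo: THEO_1} together with the nonnegativity of $\textit{TDAU}$.
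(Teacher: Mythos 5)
Your proof is correct and follows essentially the same route as the paper's: identify \( \textit{TDAU}({S'}, T, \textit{QS}) \) with \( \textit{SRAU}(S, T, \textit{QS}) \) for the parent \( S \) of \( {S'} \) in the LQS-tree, observe that \( S \) is also a prefix of \( {S''} \), invoke the per-sequence bound \( \textit{au}({S''}, T, \textit{QS}) \leqslant \textit{SRAU}(S, T, \textit{QS}) \) from Theorem \ref{theo: THEO_1}, and sum over all \( \textit{QS} \). The one place you go beyond the paper is the explicit target-feasibility monotonicity step (that \( \textit{qSuf}(T, {S''}) \sqsubseteq \textit{rs}({S''}) \) forces \( \textit{qSuf}(T, S) \sqsubseteq \textit{rs}(S) \) for the parent \( S \)), which the paper's proof silently assumes when it equates \( \textit{TDAU}({S'}, T, \textit{QS}) \) with \( \textit{SRAU}(S, T, \textit{QS}) \); your version closes that gap rather than taking a different route.
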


\begin{proof} \label{proof: PROOF_2}
  \rm Consider sequences \( {S'} \) and \( S \) in \textit{q}-sequence \textit{QS}, both of them satisfy the conditions \( \textit{qSuf}({S'}, T) \subset \textit{rs}({S'}, T) \) and \( \textit{qSuf}({S''}, T) \subset \textit{rs}({S''}, T) \). Let \( {S'} \) be generated from a sequence \( S \) via a single \textit{extension} step. Then, based on Definition \ref{def: TDAU}, we have \( \textit{TDAU}({S'}, T, \textit{QS}) \) = \( \textit{SRAU}(S, T, \textit{QS}) \). Consider any sequence \( {S''} \) that is extended from \( {S'} \) or \( {S''} \) = \( {S'} \), we have \( S \) also a prefix of \( {S''} \). By comparison with Definition \ref{def: SRAU}, we see that \( \textit{au}({S''}, T, \textit{QS}) \leqslant \textit{SRAU}(S, T, \textit{QS}) \). So that we also have \( \textit{au}({S''}, T) \leqslant \textit{TDAU}({S'}, T) \). Therefore, the proposed reduced sequence average utility is one of UBs of the sequence average utility.
\end{proof}

\begin{strategy}[width pruning strategy] \label{str: STR_6}
  \rm Consider a query sequence \( T \) and a \textit{q}-sequence \( {S'} \), if \(\textit{TDAU}({S'}, T) \) falls below the prespecified minimum acceptable average utility, \( \xi \times u(\mathcal{D_T}) \), then there is no need to further explore \( {S'} \) or any of its descendant sequences. In other words, the exploration of the \textit{q}-sequence \( S \) can be terminated at this point in TAUSQ.
\end{strategy}

To improve pruning strategy efficiency, a variant of \textit{SRAU} has been proposed. It no longer strictly adheres to the definition of the upper bound but can be shown to be effective for the pruning strategy. Assume that \( \frac{u(S, T) + u_\textit{rrs}(S, T)}{|S| + |\textit{rrs}|_d} \leqslant \xi \times u(\mathcal{D_T}) \), where $|rrs|_d$ denotes the count of distinct items within all $rrs$ in the database. Then, we derive
	\begin{displaymath}
	\begin{split}
	&u(S, T) + u_\textit{rrs}(S, T) \leqslant \xi \times u(\mathcal{D_T}) \times (|S| + |\textit{rrs}|_d) \\
	&u(S, T) + u_\textit{rrs}(S, T) - \xi \times u(\mathcal{D_T}) \times |\textit{rrs}|_d \leqslant \xi \times u(\mathcal{D_T}) \times |S| \\
    &u(S, T) + u_\textit{rrs}(S, T) - \xi \times u(\mathcal{D_T}) \times |\textit{rrs}| \leqslant \xi \times u(\mathcal{D_T}) \times |S| \\
	&u(S, T) + {exu}_\textit{rrs}(S, T) \leqslant \xi \times u(\mathcal{D_T}) \times |S| \\
	&\frac{u(S, T) + {exu}_\textit{rrs}(S, T)}{|S|} \leqslant \xi \times u(\mathcal{D_T}). \nonumber
	\end{split}
	\end{displaymath}

Based on the aforementioned theorems and proofs for \textit{SRAU}, we have
	\begin{displaymath}
	\begin{split}
	\textit{au}({S'}, T) &= \sum{\textit{au}({S'}, T, \textit{QS})} = \frac{\sum{u({S'}, T, \textit{QS})}}{|{S'}|}\leqslant \frac{\sum{u(S, T, \textit{QS})} + \sum{u(s, T, \textit{QS})}}{|S|} \\
	&= \xi \times u(\mathcal{D}) + \frac{\textit{exu}(S, T) + \textit{exu}(s, T)}{|S|} \\
	&\leqslant \xi \times u(\mathcal{D}) + \frac{\textit{exu}(S, T) + {exu}_\textit{rrs}(S, T)}{|S|} \\
	&= \frac{u(S, T) + {exu}_\textit{rrs}(S, T)}{|S|}. \nonumber
	\end{split}
	\end{displaymath}

Thus, if \( \frac{u(S, T) + u_\textit{rrs}(S, T)}{|S| + |\textit{rrs}|_d} \leqslant \xi \times u(\mathcal{D_T}) \), then we can derive \( \textit{au}({S'}, T) \leqslant \xi \times u(\mathcal{D_T}) \). 	
	
In fact, both the \textit{remaining sequence} and the query sequence \( T \) serve as crucial aspects in TAUSQ. They can be regarded as the starting points for improving algorithmic efficiency. According to the problem definition of TAUSQ, all TAUSPs must contain the query sequence. Considering the length of the query suffix |\textit{qSuf}|, a variant of \textit{SRAU} model, denoted as \textit{vSRAU}, is defined as follows:
\begin{displaymath}
\begin{split}
&\textit{vSRAU}(S, T, p, \textit{QS}) \\
&= \left\{\begin{array}{rcl} \frac{u(S, T, p, \textit{QS}) + u_{\textit{rrs} \cup \textit{qSuf}}(S, j_m, \textit{QS})}{|S| + |\textit{qSuf}|}, & \textit{rs} \neq \emptyset \wedge \textit{qSuf}(T, S) \sqsubseteq rs \wedge |\textit{rrs}|_d \leqslant |\textit{qSuf}(T, S)| \\ \frac{u(S, T, p, \textit{QS}) + u_\textit{rrs}(S, j_m, \textit{QS})}{|S| + |\textit{rrs}|_d}, & \textit{rs} \neq \emptyset \wedge \textit{qSuf}(T, S) \sqsubseteq \textit{rs} \wedge |\textit{rrs}|_d > |\textit{qSuf}(T, S)| \\ 0, & otherwise \end{array}\right. \nonumber
\end{split}
\end{displaymath}
where \( u_{\textit{rrs} \cup \textit{qSuf}}(S, j_m, \textit{QS}) \) is the utility of the subsequence formed by merging the \textit{remaining rising sequence} and the query suffix of the query pattern \( T \) for a prefix sequence \( S \).

Based on this variant model, we also have a variant of \textit{TDAU}, called \textit{vTDAU}:
	\begin{displaymath}
		\begin{split}
			\textit{vTDAU}({S'}, T, \textit{QS}) 
            = \left\{\begin{array}{rcl} \textit{vSRAU}(S, T, \textit{QS}), & {S \sqsubseteq \textit{QS}} \wedge {S' \sqsubseteq \textit{QS}} \wedge \textit{qSuf}(T, S) \sqsubseteq \textit{rs} \\ 0, & otherwise \end{array}\right. \nonumber
		\end{split}
	\end{displaymath}
    
where \( \textit{vSRAU} = \frac{u(S, T, p, \textit{QS}) + u_{\textit{rrs} \cup \textit{qSuf}}(S, j_m, \textit{QS})}{|S| + |\textit{qSuf}|} \). Note that achieving a tighter estimation of \( |\textit{rrs}| \) typically requires additional data structures to record relevant information, which may lead to increased memory usage and computational overhead. Considering the characteristics of TPM task, it is more practical to retain only the \textit{vSRAU} entries that satisfy the condition \( |\textit{rrs}| \leqslant |\textit{qSuf}(T, S)| \). Although this design choice may slightly weaken the pruning effectiveness, it simplifies the data structures and reduces the complexity of the width-based pruning strategy. Moreover, the experimental results presented later confirm the feasibility and effectiveness of this simplified design.

Most utility-based mining algorithms assume all utility values are positive. In pattern growth mining approaches, extending patterns with positive utility items increases utility, while extending with negative utility items decreases it. Hence, certain preprocessing pruning strategies, which rely on the total utility of sequences, limit the general applicability of these algorithms. Additionally, many utility-based mining methods rely on defining strict and tight upper bounds to overestimate potential pattern utility. However, this becomes challenging when utilities can be negative. This negative utility scenario complicates the process of accurately estimating the upper bound, as traditional models developed under the positive utility assumptions no longer apply directly. 

When using the newly defined \textit{remaining rising sequence} for evaluation, whether utilities in the \textit{remaining sequence} are positive or negative does not affect the evaluation result. Because the evaluation metric depends on relative numerical magnitudes rather than absolute values. In this work, strategies \ref{str: STR_2}, \ref{str: STR_3}, and \ref{str: STR_4}, which estimate utility and filter sequences based on the total utility of a sequence and a sequence remaining utility, are not applicable to datasets containing negative utility items. To address this, one modification to strategy \ref{str: STR_2} is to consider only items with positive utility within sequences, which also requires adjusting the input data format accordingly. Alternatively, strategy \ref{str: STR_2} can be discarded entirely. For strategies \ref{str: STR_3} and \ref{str: STR_4}, replacing the parameter \( ru \) with the newly defined \( u_\textit{rrs} \) extends the applicability of the algorithm to datasets with negative utility items.

\subsection{Data Structure for TAUSPM}
\label{sec: dataStruc}

The following part of this section discusses the data structures employed in the proposed strategies and the associated calculations. In HUSPM, utilizing a projected database rather than the initial database for multiple scanning is a common and effective method. The challenge is to efficiently record necessary information in the compact data structure. By filtering with various strategies, the projected database size is maintained at an acceptable level, thus enhancing the overall efficiency of the algorithm.

A \textit{q}-matrix structure is used to represent \textit{q}-sequences in the original database \cite{zhang2022tusq}. This structure is indexed by the identities of items and itemsets. Each \textit{q}-sequence is mapped into two parts: item utility and the corresponding remaining sequence utility, recorded in the utility and \textit{rs} utility matrices, respectively.

The targeted chain \cite{zhang2022tusq} is introduced for recording essential information for utility and upper bound calculations, offering a compact representation of the projected database. Unlike the projected data structure used in the HUSP problem \cite{zhang2021tkus}, this targeted chain adds the length of the longest query prefix matching the pattern's prefix as indispensable information. This addition is sufficient for discovering targeted high-utility sequential patterns. However, when average utility serves as the evaluation metric, further length information is required—not only for the pattern prefix but also for the \textit{remaining sequence} within the projected database. In the method proposed here, this necessary information also includes the length and utility of the \textit{rrs}.

Moreover, differing from Ref. \cite{zhang2022tusq}, the method employs two flags to record the information about \textit{qPre}. These flags enable directly querying of relevant information within the \textit{q}-matrix structure, further enhancing the efficiency afforded by the projection database approach. As noted earlier in the strategy discussion, the length of the longest query prefix can vary even within the same head table, depending on different pattern extensions. The method employing two flags allows this variation to be intuitively and efficiently recorded and reflected during mining. In the projected database example presented in Table \ref{fig: tChain}, the head table includes four fields. First, \textit{QSID} denotes the identifier of the \textit{q}-sequence. Second, \textit{SRAU} serves as the proposed evaluation metric for average utility with respect to the specific query sequence. Third, \textit{IMatch} and fourth, \textit{IIMatch} are two flags of \textit{qPre}. The size of the targeted list aligns with the count of \textit{extension positions} of the \textit{instance} in the \textit{q}-sequence. In the first entry of the targeted list, the unique identifier of the itemset associated with the \textit{extension position} is termed \textit{EID}. The remaining two fields, \textit{Util} and \textit{RrsUtil}, record the value of the \textit{instance} utility and the corresponding \textit{remaining rising sequence} utility at this \textit{extension position}, respectively. It should be noted that the \textit{rrs} used in the proposed method is a global parameter. Accordingly, the utility of the \textit{rrs} recorded in \textit{RrsUtil} is actually an estimated value obtained based on the previous expansion result. When calculating the upper bound for the current expansion, this estimated utility allows easy adjustment—by subtracting the utility of items excluded from \textit{rrs}—to derive the accurate utility of the \textit{rrs}. Moreover, during this process, another important parameter can also be derived, the length of the \textit{remaining rising sequence}. Although it is not directly computed from local parameters and therefore is not explicitly recorded in the targeted List, this parameter remains instrumental in supporting evaluation and pruning strategies within the mining task.

\begin{figure}[ht]
    \centering
    \includegraphics[clip,scale=0.98]{./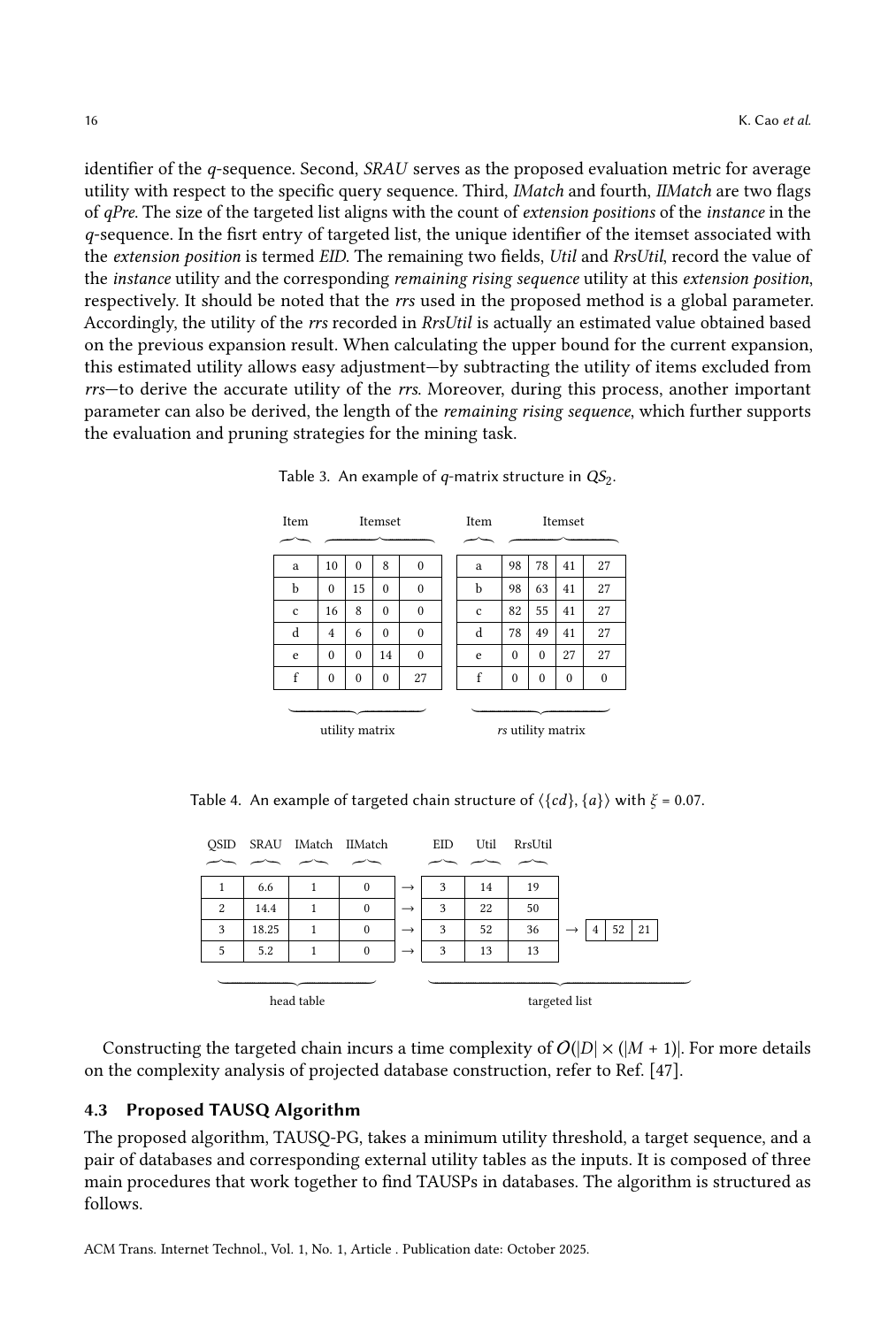}
    \caption{An example of \textit{q}-matrix structure in \( \textit{QS}_2 \).}
    \label{fig: qMatrix}
\end{figure}

\begin{figure}[ht]
    \centering
    \includegraphics[clip,scale=0.98]{./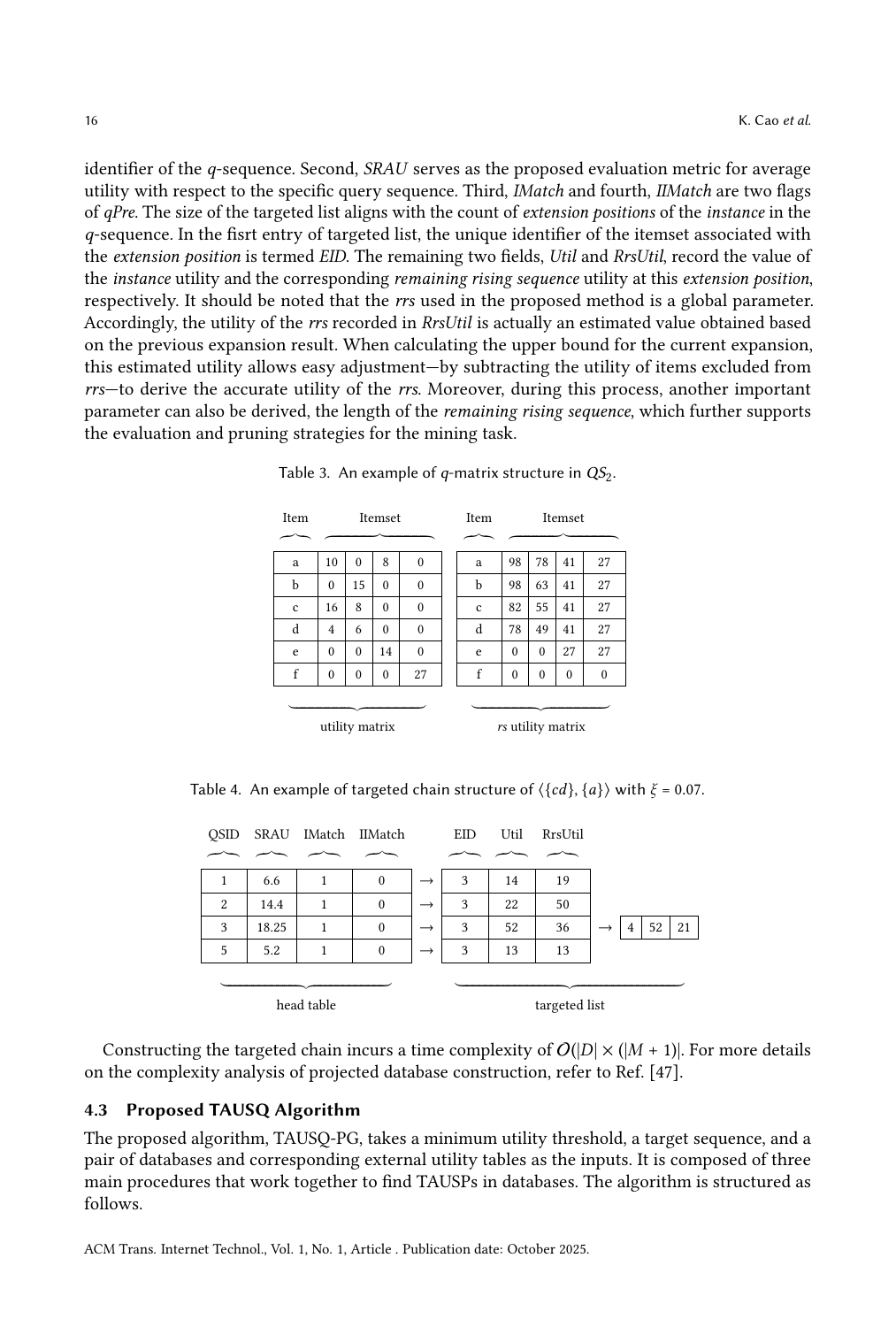}
    \caption{An example of targeted chain structure of \( \langle \{cd\}, \{a\} \rangle \) with \( \xi \) = 0.07.}
    \label{fig: tChain}
\end{figure}

Constructing the targeted chain incurs a time complexity of $\mathcal{O}$(|$D$| $\times$ (|$M$ + 1)|. For more details on the complexity analysis of projected database construction, refer to Ref. \cite{zhang2022tusq}.

\subsection{Proposed TAUSQ Algorithm}
\label{sec: tausq-pg}

The proposed algorithm, TAUSQ-PG, takes a minimum utility threshold, a target sequence, and a pair of databases and corresponding external utility tables as the inputs. It is composed of three main procedures that work together to find TAUSPs in databases. The algorithm is structured as follows.

\begin{algorithm}[ht]
    \small
    \caption{The TAUSQ-PG algorithm}
    \label{alg: 1}
    \LinesNumbered
    \KwIn{A quantitative sequential database \( \mathcal{D} \); A threshold parameter, \( \xi \); A query sequence, \( T \).} 
    \KwOut{TAUSPs in \( \mathcal{D} \).}
        scan \( \mathcal{D} \) to:\\
        \quad 1) Filter out redundant sequences to construct the filtered database \( \mathcal{D}_T \); Constructing the \textit{q}-matrix of each \textit{q}-sequence in \( \mathcal{D}_T \)\\
        \quad 2) For each \textit{q}-sequence \( QS \) in \( \mathcal{D}_T \): Construct its the \textit{q}-matrix\\
        \quad 3) Calculate the utility of \( \mathcal{D}_T \), and the utility value and \textit{SRAU} of each \textit{1}-sequence in \( \mathcal{D}_T \)\\
 	\quad 4) Construct the \textit{LI}-Table based on \( T \) and the projected databases of all \textit{1}-sequences\\
	\For {\rm \textbf{each} \( s \) \( \in \) \textit{1}-sequences}{
		\If{{\( \textit{au}(s) \geqslant \xi \times u(\mathcal{D}_T) \)} \( \wedge \) {\( \textit{IIMatch} = |T| \)}}{
    		update TAUSPs \( \leftarrow \) {TAUSPs \( \bigcup \) \( s \)}\\
		}
		\If{\( \textit{SRAU}(s, T) \geqslant \xi \times u(\mathcal{D}_T) \)}{
    		call PGrowth(\( s \), \( \textit{proDB}(s) \), TAUSPs) \\
		} 
	}
	\textbf{return} TAUSPs
\end{algorithm}

In the first part of the algorithm, the original quantitative sequential database \( \mathcal{D} \) is scanned, and a projection database, \textit{proDB}, is constructed based on the filtered database \( \mathcal{D_T} \). This step follows strategy \ref{str: STR_1} for filtering and serves as the foundation for the following procedures, where all the distinct items and their utility and \textit{ru} are stored in the \textit{proDB}, which is then used in the next stage, the \textit{PGrowth} procedure. The construction of the projection database involves organizing the sequence data into a more accessible form for efficient mining in the subsequent steps.

\begin{algorithm}[ht]
    \small
    \caption{The \textit{PGrowth} algorithm}
    \label{alg: 2}
    \LinesNumbered
    \KwIn{A projected database, \textit{proDB}(\( S \)); A prefix of pattern, \( S \).} 
    \KwOut{TAUSPs in \( \mathcal{D} \).}
        \For {\rm \textbf{each} targeted list \( \textit{tL} \in \textit{proDB} \)}{
	scan \textit{proDB}(\( S \)) to get the \textit{q}-matrix associated with the targeted list\\
	\quad 1) get the collection of \textit{I}-Extension items for \( S \), \textit{iList}\\
	\quad 2) get the collection of \textit{S}-Extension items for \( S \), \textit{sList}\\
         }
	\For {\rm \textbf{each} item \( i \in \textit{iList} \)}{
		\If{\( \textit{TDAU}(S, T) < \xi \times u(\mathcal{D}_T) \)}{
            continue
		}
		call \textit{AUCalcu}(\( S \oplus i \), \( \textit{proDB}(S) \), TAUSPs)\\
	}
	\For {\rm \textbf{each} item \( i \in \textit{sList} \)}{
		\If{\( \textit{TDAU}(S, T) < \xi \times u(\mathcal{D}_T) \)}{
            continue
		}
		call \textit{AUCalcu}(\( S \otimes i \), \( \textit{proDB}(S) \), TAUSPs)\\
	}
\end{algorithm}

The second procedure, \textit{PGrowth}, commences by constructing candidate extension item lists. To facilitate the process efficiently, the algorithm utilizes the \textit{LI}-Table data structure and Strategy \ref{str: STR_2} for filtering. Between Lines 6 and 17, it filters these lists by comparing the minimum acceptable average utility against the value of \textit{TDAU} and applying pruning strategies to eliminate unpromising candidates. Next, the algorithm generates new candidate sequences via either \textit{I}-Extension or \textit{S}-Extension. For each candidate, the \textit{AUCalcu} procedure computes its actual average utility and the value of \textit{SRAU}, which are used to identify sequences likely to form high-utility patterns.

\begin{algorithm}[ht]
    \small
    \caption{The \textit{AUCalcu} algorithm}
    \label{alg: 3}
    \LinesNumbered
    \KwIn{A projected database, \textit{proDB}(\( S \)); A sequence extended by the prefix \( S \), \( {S'} \).} 
    \KwOut{TAUSPs in \( \mathcal{D} \).}
	\( \textit{proDB}({S'}) \) \( \leftarrow \) \{{\textit{proDB} of \( {S'} \)}|{\( {S'} \sqsubseteq QS \) \( \wedge \) \( QS \in \textit{proDB}(S) \)}\}\\
	calculate \( \textit{au}({S'}) \) and \( \textit{SRAU}({S'}) \)\\
	\If {{\( \textit{au}({S'}) \geqslant \xi \times u(\mathcal{D}_T) \)} \( \wedge \) {\( \textit{IIMatch} = |T| \)}}{
		update TAUSPs \( \leftarrow \) {TAUSPs \( \bigcup \) \( {S'} \)}\\
	}
	\If{\( \textit{SRAU}(S', T) \geqslant \xi \times u(\mathcal{D}_T) \)}{
		call PGrowth(\( {S'} \), \( \textit{proDB}({S'}) \), TAUSPs)\\
	} 
\end{algorithm}

As shown in the final procedure, \textit{AUCalcu}, operates by first creating a new projection database for the candidate sequence \( {S'} \). The average utility and \textit{SRAU} of \( {S'} \) are calculated, and these two evaluation parameters are respectively compared against a predefined utility threshold \( \xi \times u(\mathcal{D}_T) \). Once the actual average utility of \( {S'} \) meets or exceeds the prespecified threshold and the flag \textit{IIMatch} attains the value \( |T| \), the sequence qualifies as a TAUSP. Additionally, when the \textit{SRAU} of \( {S'} \) satisfies the predefined threshold \( \xi \times u(\mathcal{D}_T) \), the corresponding sequence will be identified as a potential prefix of a TAUSP. The algorithm continues to generate candidate sequences by extending the current prefix and calling the \textit{PGrowth} procedure recursively. Upon the generation of all candidate sequences, the algorithm returns the set of TAUSPs and terminates.

\subsection{Complexity Analysis} \label{sec: complexity}

Suppose the quantitative sequential database is composed of \( |\mathcal{D}| \) \textit{q}-sequences. There are \( |\mathcal{D}_T| \) \textit{q}-sequences that are contained \( T \) in this database. Assume that the average number of items in \textit{q}-sequence \( QS \) is \( |QS| \). This value in \( \mathcal{D}_T \) is \( |{QS}_T| \). Let \( |I| \) be the number of distinct items in the original database \( |\mathcal{D}| \), then we have the number of distinct items in \( |\mathcal{D}_T| \) is denoted as \( |I_T| \). First of all, starting with the first scanning for the original database, the first step takes \( \mathcal{O}(|\mathcal{D}| \times |QS|) \). The memory complexity is also \( \mathcal{O}(|\mathcal{D}| \times |QS|) \) to construct a \textit{q-matrix} and the corresponding \textit{LI}-Table. Then, the function \textit{PGrowth} is called recursively, and the set of TAUSPs is returned. 

In the second function, \textit{PGrowth}, all items in the filtered projection database are read, and the \textit{iList} and \textit{sList} are built at first. Then, it takes \( \mathcal{O}(|\mathcal{D}_T| \times |{QS}_T|) \) to calculate the \textit{TDAU} of each extension item, and to remove the unpromising ones with low \textit{TDAU}. After the item is appended to the prefix, the next function, AUCalcu, is called to calculate the average utility of the generated candidate sequence. In the function \textit{AUCalcu}, the \textit{RSAU} and the average utility of the generated sequence are calculated for each appending item. Thus, it takes \( \mathcal{O}(|\mathcal{D}_T| \times |{QS}_T|) + \mathcal{O}(|\mathcal{D}_T| \times |{QS}_T|) \), which equals \( \mathcal{O}(|\mathcal{D}_T| \times |{QS}_T|) \). In this step, its memory complexity is \( \mathcal{O}(1) \).

Let \( |L_T| \) be the longest generated sequence length in \( |\mathcal{D}_T| \). During the recursive call of Algorithm 2, the maximum depth and the number of times of recursively calling are \( |L_T| \) and \( |I_T|^{|L_T|} \). During the process of prefix expansion before Algorithm 3 is called, each item in \textit{iList} and \textit{sList} is appended to the prefix. At this time, in the worst case, none of them can be removed. The corresponding time complexity is the sum of all the time complexities of the calling processing, and the memory complexity is \( \mathcal{O}(|I_T|) \). The maximum number of recursive calls of AUCalcu is \( |I_T| \). Therefore, the memory complexity and time complexity of function \textit{PGrowth} are \( \mathcal{O}(|\mathcal{D}_T| \times |{QS}_T| + |I_T|) \) and \( \mathcal{O}(|\mathcal{D}_T| \times |{QS}_T| + |I_T| \times |\mathcal{D}_T| \times |{QS}_T|) \).

Based on the above, the time complexity of TAUSQ is \( \mathcal{O}(|\mathcal{D}| \times |QS|) + |I_T|^{|L_T|} \mathcal{O}(|\mathcal{D}_T| \times |{QS}_T| + |I_T| \times |\mathcal{D}_T| \times |{QS}_T|) \), equivalent to \( \mathcal{O}(|\mathcal{D}||QS| + |I_T|^{|L_T|}|\mathcal{D}_T||{QS}_T|) \). The memory complexity of HAUSP-PG is \( \mathcal{O}(|\mathcal{D}| \times |QS|) + |L_T|\mathcal{O}(|\mathcal{D}_T| \times |{QS}_T| + |I_T|) \), equivalent to \( \mathcal{O}(|\mathcal{D}||QS| + |L_T||\mathcal{D}_T||{QS}_T| + |L_T||I_T|) \). Since \( |{QS}_T| \leqslant |L_T| \), and in the worst case of TAUSQ task — where all \textit{q}-sequences contain the query sequence \( T \) — the maximum time and memory complexities are respectively \( \mathcal{O}(|I|^{|L|}|\mathcal{D}||L|) \) and \( \mathcal{O}(|L|^{2}|\mathcal{D}| + |L||I|) \).

\section{Experiments} \label{sec: experiment}

The performance of the proposed algorithm is assessed with the results of the experiment in this section. The experimental design consists of three parts:
\begin{itemize}
	\item  Comparative experiments are conducted to demonstrate the effectiveness of the targeted querying approach and the efficiency of the proposed algorithm in the context of TAUSPM.

	\item  Based on the ablation experimental results, we analyze how the proposed variants of upper bound models contribute to performance optimization.
	
	\item  Based on the experimental results, we further evaluate the performance of algorithms under varying target sequence lengths.
\end{itemize}	

All algorithms are implemented in Java, and the source code is available at https://github.com/HNUSCS-DMLab/TAUSPM. The experiments are performed on a cloud virtual machine equipped with an AMD EPYC 7542 32-Core CPU and Linux version 5.4.0-166-generic.x86\_64 operating system.

\subsection{Data Description}
\label{sec: data}
\small
\begin{table}[ht]
	\caption{Features of datasets.}  
	\label{table: datasets}
	\centering
    \resizebox{0.76\linewidth}{!}{ 
	\begin{tabular}{@{}ccccccc@{}}
	\toprule
		Dataset			& $|D|$	& $|I|$	& \textit{AvgLen}	& \textit{MaxLen}	& \textit{AvgSeqSize}	& \textit{AvgSetSize} \\
	\midrule
		Bible				& 36369	& 13905	& 21.64			& 100 			& 21.64				& 1.0 \\
		Leviathan			& 5834 	& 9025 	& 33.81			& 100 			& 33.81				& 1.0 \\
		Sign				& 730 	& 267 	& 51.99			& 94 				& 51.99				& 1.0 \\
		{Kosarak}\_{10K}	& 10000 	& 10094 	& 8.14 			& 608 			& 8.14				& 1.0 \\
		{SynDataset}\_{40K}	& 40000 	& 7584 	& 26.85 			& 18 				& 6.20				& 4.33 \\
		{SynDataset}\_{80K}	& 79718 	& 7584 	& 26.80 			& 18 				& 6.19 				& 4.32 \\	\bottomrule
	\end{tabular}}
\end{table} 

For the experiments, we utilize four real-world and two synthetic datasets, all accessible for download from SPMF (http://www.philippe-fournier-viger.com/spmf/). Table \ref{table: datasets} outlines the key features of these datasets, where $|D|$ and $|I|$ signify, respectively, the count of \textit{q}-sequences and the number of distinct items in the original dataset. \textit{AvgLen} represents the average length of \textit{q}-sequence. \textit{MaxLen} is the maximal length of \textit{q}-sequence in the original dataset. \textit{AvgSetSize} and \textit{AvgSeqSize} indicate the average number of \textit{q}-items in one \textit{q}-itemset and the average count of \textit{q}-itemsets in one \textit{q}-sequence respectively.

The Bible and Leviathan datasets are both transformed text datasets, constructed from portions of the books \textit{The Bible} and \textit{Leviathan}, respectively. In these datasets, each sequence corresponds to a sentence, while each item represents a word. The sequence lengths are moderately distributed. The Sign dataset is a sign language dataset, and the version used in this study is derived from the original American Sign Language (ASL) data created by a research at Boston University. This dataset is characterized by relatively long sequences. The Kosarak dataset, a typical clickstream dataset, originates from a Hungarian online news portal. Its most notable feature is the presence of extremely long sequences. In addition, two synthetic datasets, \( {SynDataset}\_{40K} \) and \( {SynDataset}\_{80K} \), are used in the experiments. They contain 40,000 and 79,718 sequences, respectively, with the former being a complete subset of the latter. The experiments conducted on the six datasets provide a comprehensive evaluation of the proposed algorithm's performance in TAUSPM.

\subsection{Speed Performance and Efficiency Analysis}
\label{sec: runtime}

EHAUSM is recognized as the first algorithm designed for mining HAUSPs in the general case \cite{truong2020ehausm}. Based on this algorithm, two baselines, \( {\rm EHAUSM}^+  \) and \( {\rm EHAUSM}^- \), are designed for comparative purposes. Specifically, \( {\rm EHAUSM}^+  \) follows the same recursive querying method as the proposed algorithm \textit{\rm TAUSQ-PG}, where target queries are repeatedly processed during recursion.  In contrast, \( {\rm EHAUSM}^- \) applies a filtering process to the original database based on the target sequence using Strategy \ref{str: STR_1}, performed only at the initial stage. It is worth noting that the original EHAUSM algorithm \cite{truong2020ehausm} performs a preliminary pruning using the AMUB upper bound model before applying its designed tighter upper bounds. This initial filtering proves especially effective for certain datasets, such as the Kosarak dataset. To better highlight the effect of filtering strategies for TPM, the implementation of both baselines in this experiment omits this preliminary pruning. This modification has a negligible impact on most datasets and does not compromise the validity of comparative results or the experimental objective. The target sequences for the six datasets are set to <{356},{10},{10},{10}>, <{8},{17},{8}>, <{8},{9}>, <{11},{218},{6},{148}>, <{1857,4250}>, and <{1857,4250}>, respectively.

\begin{figure}[ht]
	\centering
	\begin{minipage}[t]{0.98\textwidth}
			\includegraphics[clip,scale=0.17]{./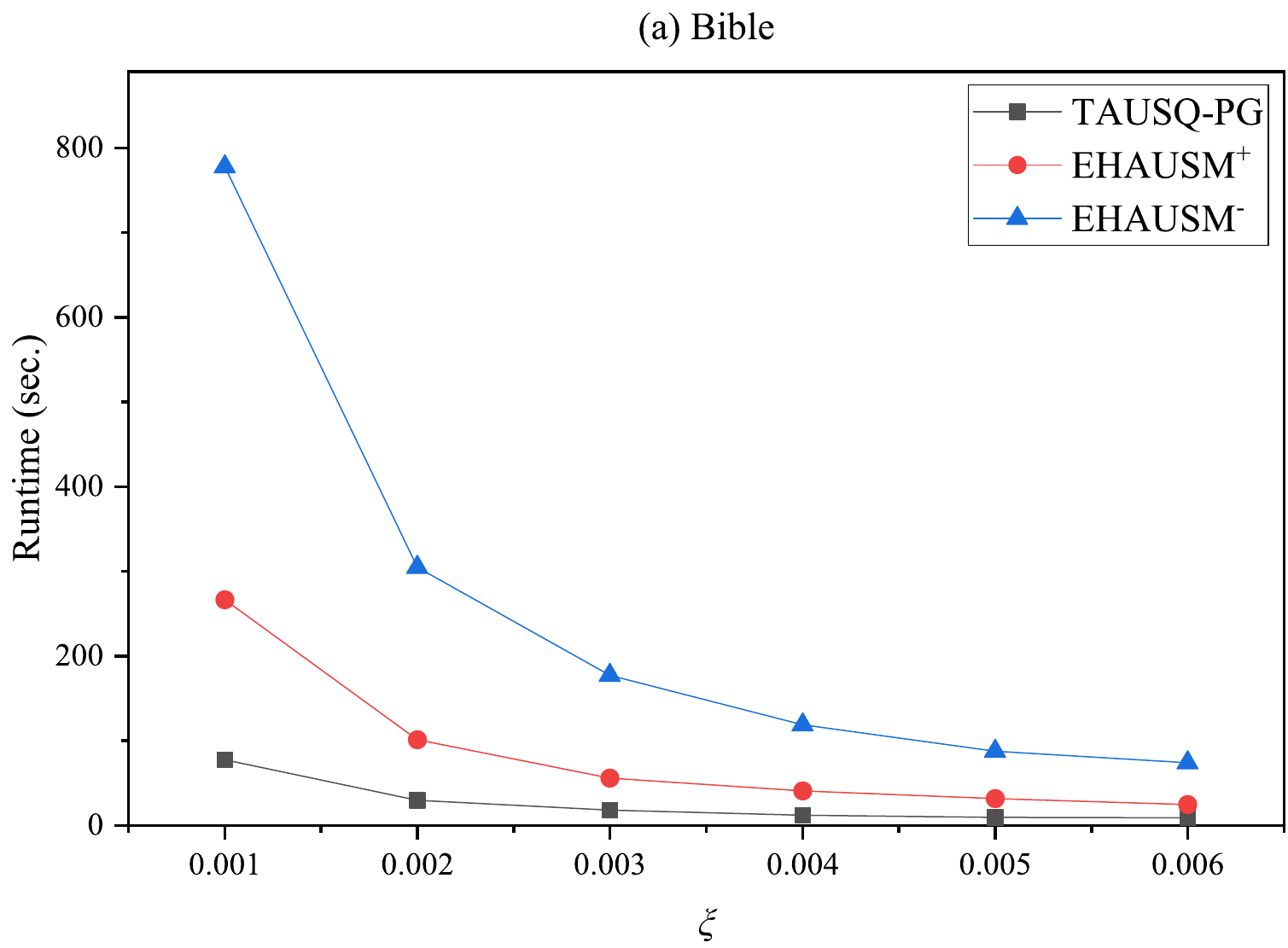}
			\label{fig: 51a}
			\includegraphics[clip,scale=0.17]{./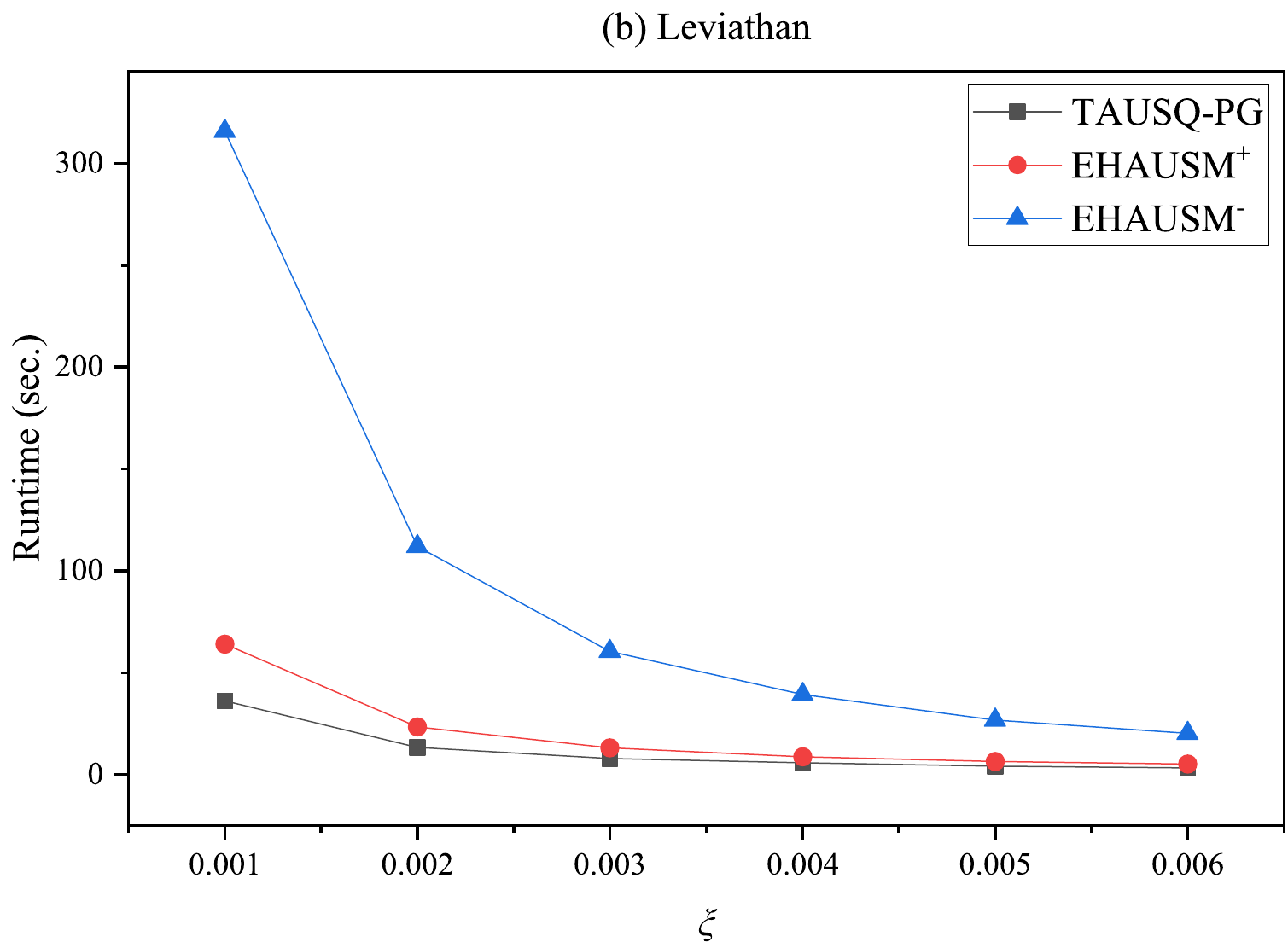}
			\label{fig: 51b}
			\includegraphics[clip,scale=0.17]{./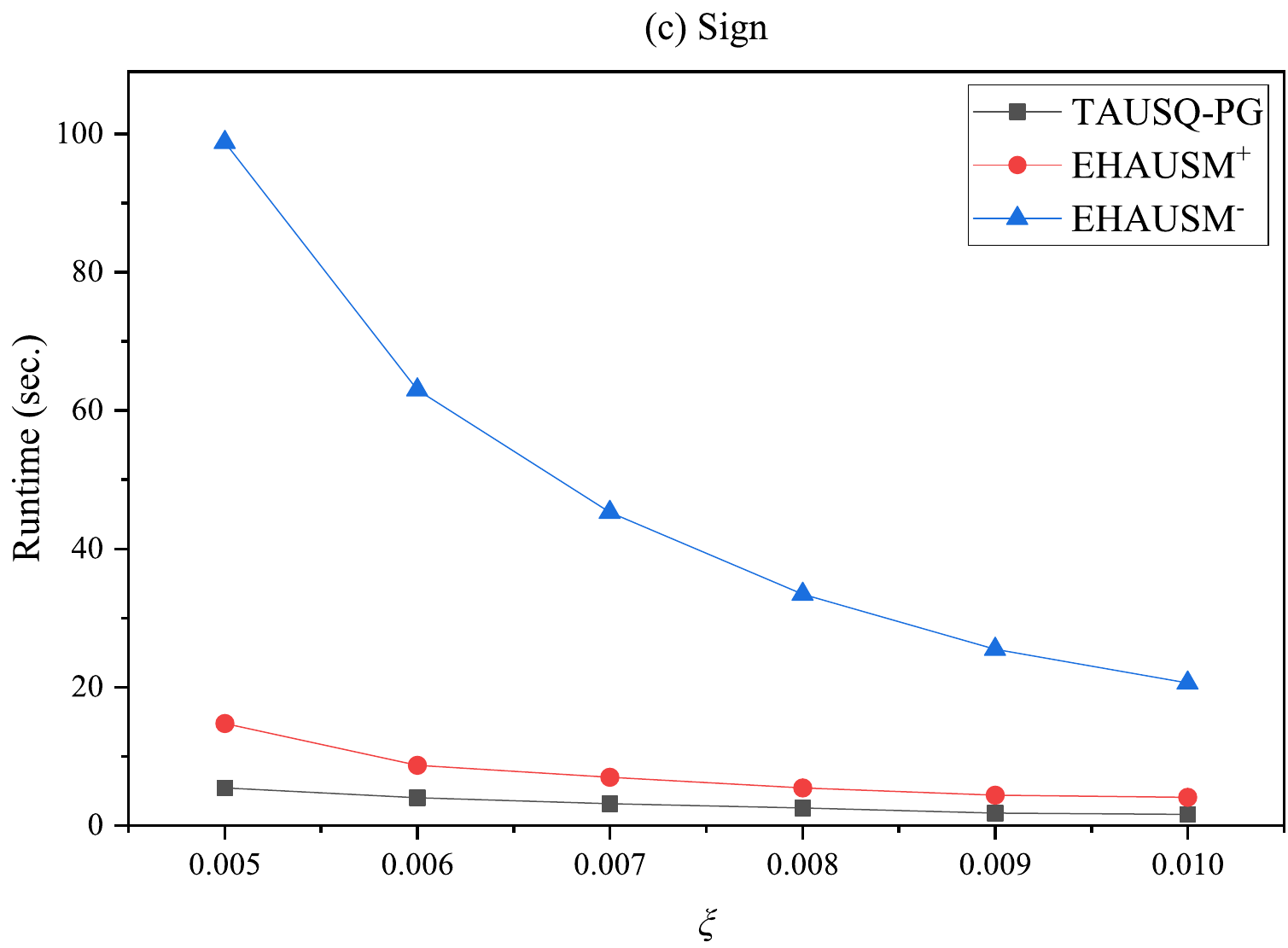}
			\label{fig: 51c}
	\end{minipage}
	\begin{minipage}[t]{0.98\textwidth}
			\includegraphics[clip,scale=0.17]{./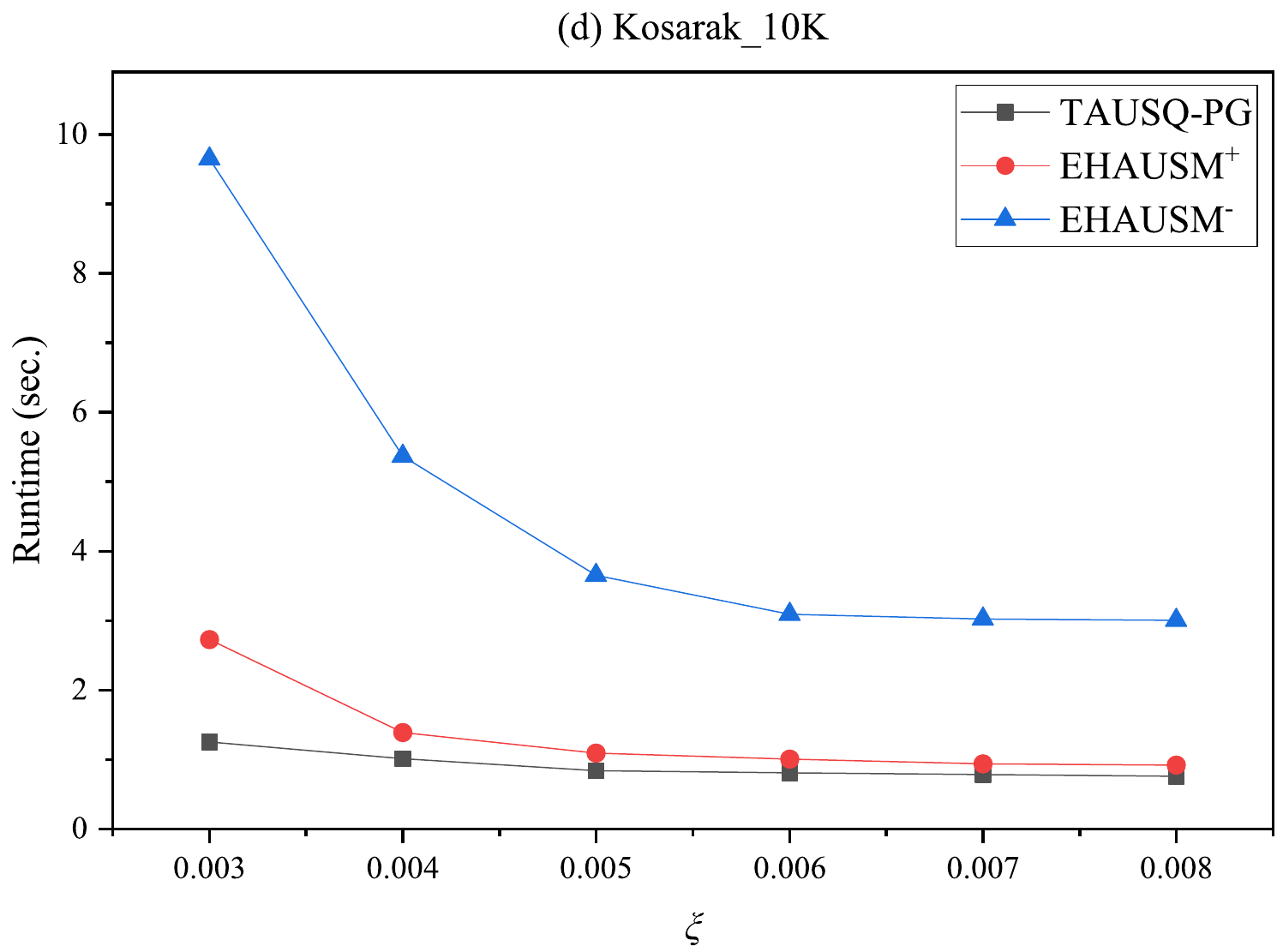}
			\label{fig: 51d}
			\includegraphics[clip,scale=0.17]{./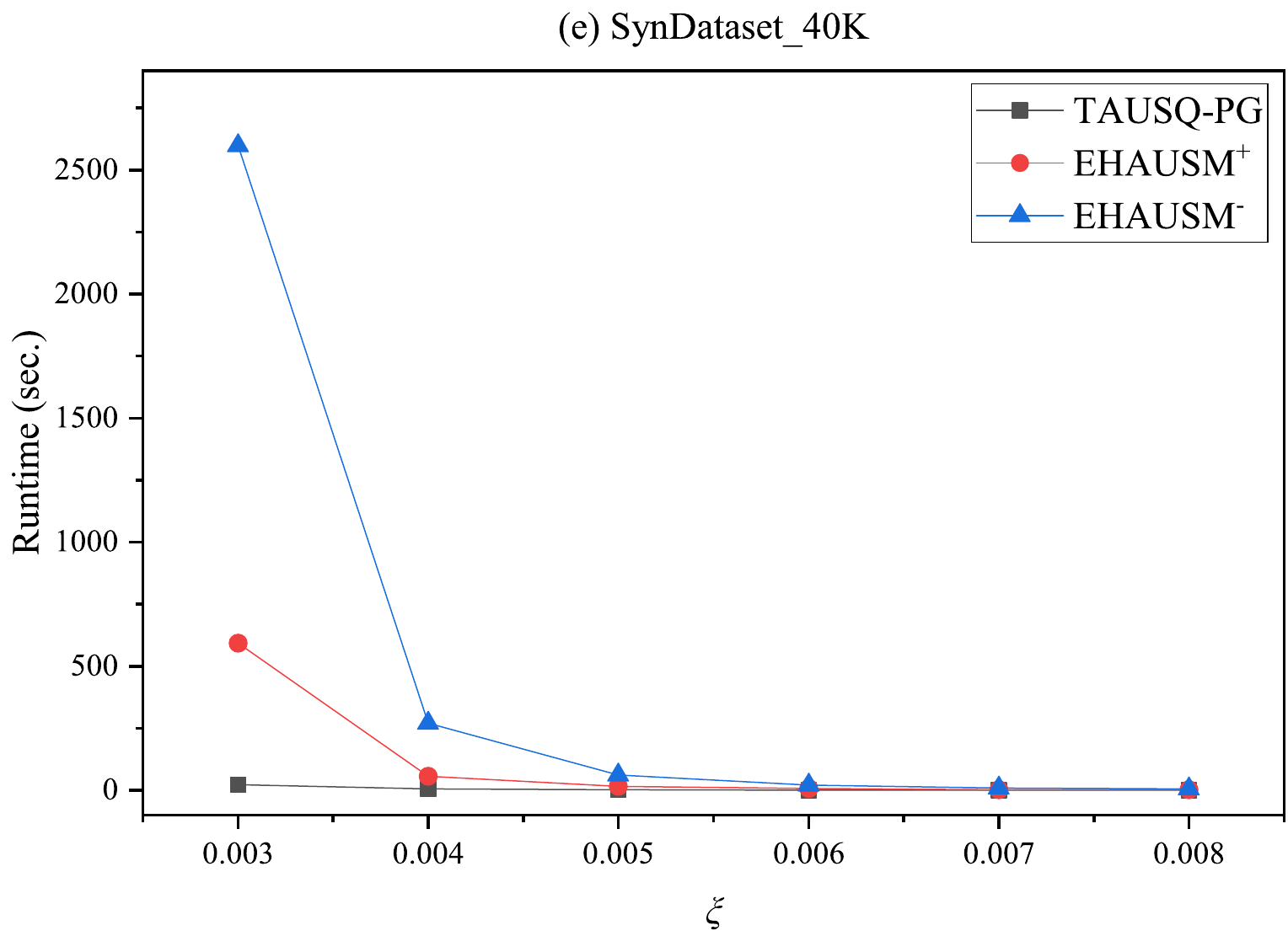}
			\label{fig: 51e}
			\includegraphics[clip,scale=0.17]{./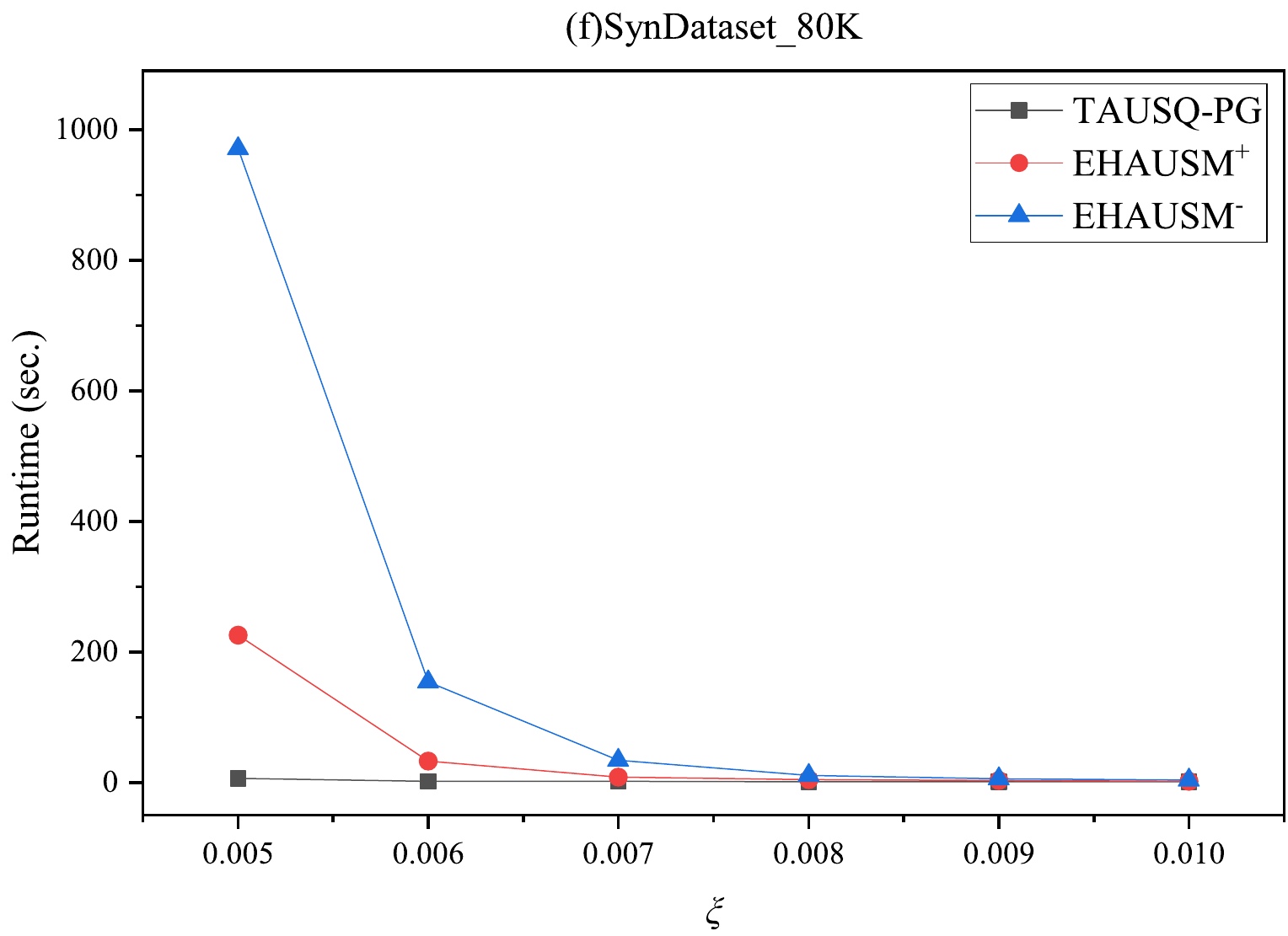}
			\label{fig: 51f}
	\end{minipage}
	\caption{Runtime for various thresholds.}
	\label{fig: 51}
\end{figure}

As shown in Fig. \ref{fig: 51}, increasing the value of \( \xi \) raises the average utility threshold, thereby reducing the runtime for all algorithms. However, \( {\rm EHAUSM}^- \) consistently incurs the highest runtime across all settings, highlighting its inefficiency due to the absence of recursive target filtering. \( {\rm EHAUSM}^+  \) demonstrates improved performance, but still underperforms the proposed TAUSQ-PG. Comparing Fig. \ref{fig: 51}(a) and Fig. \ref{fig: 51}(b), which represent datasets with moderate sequence lengths and similar characteristics, the proposed algorithm shows slightly lower runtime on the larger-scale Bible dataset (Fig. \ref{fig: 51}(a)). For datasets with longer sequences, such as in Fig. \ref{fig: 51}(c), TAUSQ-PG maintains a consistent runtime advantage. This efficiency gap becomes more pronounced in Fig. \ref{fig: 51}(e) and Fig. \ref{fig: 51}(f), where TAUSQ-PG achieves the lowest runtime while the baselines experience a sharp increase as \( \xi \) decreases.

These results demonstrate that the recursive target-querying mechanism adopted in both \( {\rm EHAUSM}^+  \) and TAUSQ-PG is effective in improving mining efficiency. Among them, TAUSQ-PG is particularly well-suited for the TAUSPM task, consistently outperforming the baselines in runtime across diverse datasets.

\subsection{Number of Candidates}
\label{sec: candidate}

The number of candidate sequences is a critical metric for evaluating the search space explored by an algorithm. In the experimental datasets, all three algorithms identify a comparable number of TAUSPs, indicating a consistent level of completeness. However, due to differences in algorithmic strategies, the count of candidate sequences generated by different algorithm varies significantly.

\begin{figure}[ht]
	\centering
	\begin{minipage}[t]{0.98\textwidth}
			\includegraphics[clip,scale=0.15]{./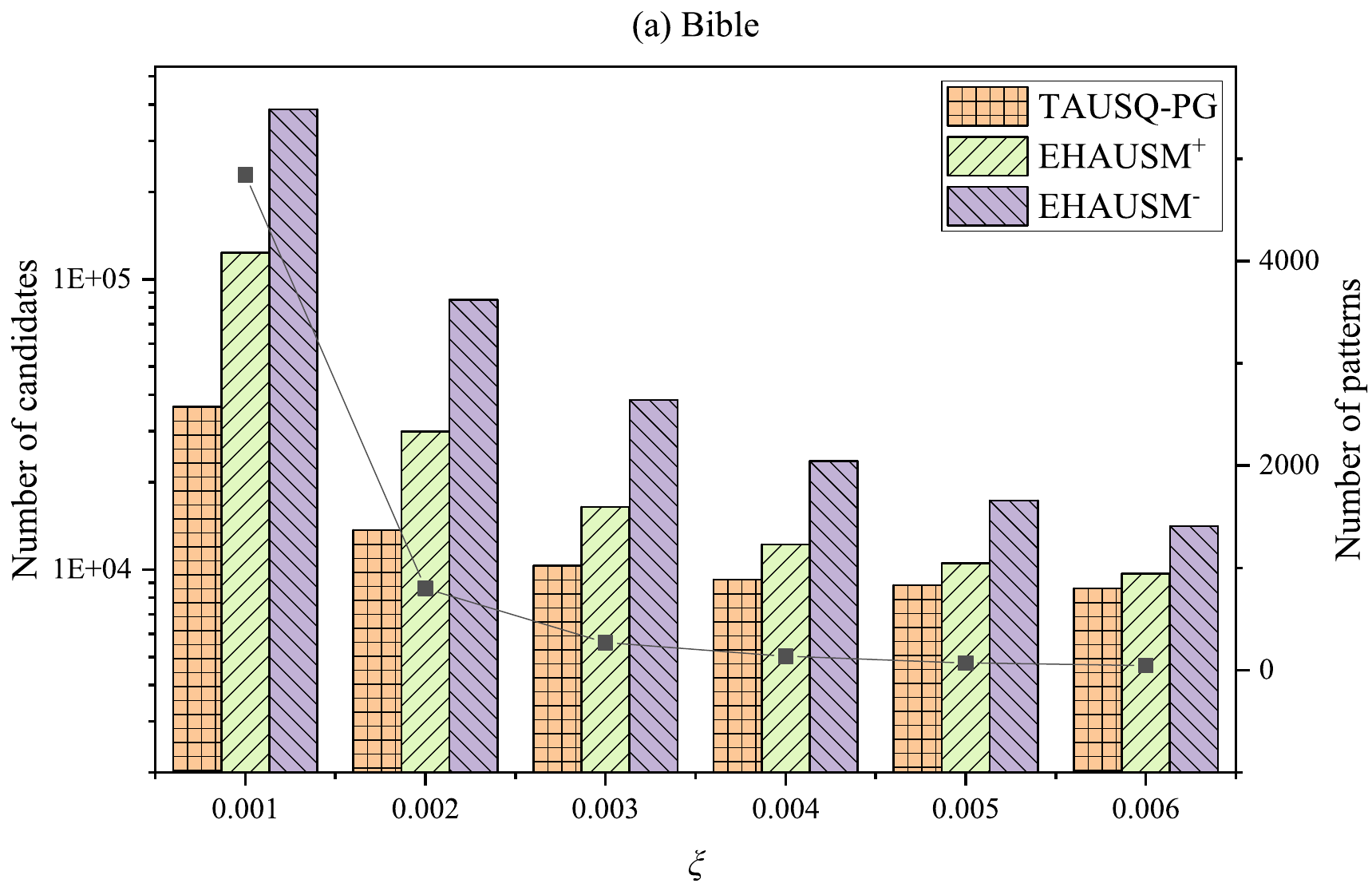}
			\label{fig: 52a}
			\includegraphics[clip,scale=0.15]{./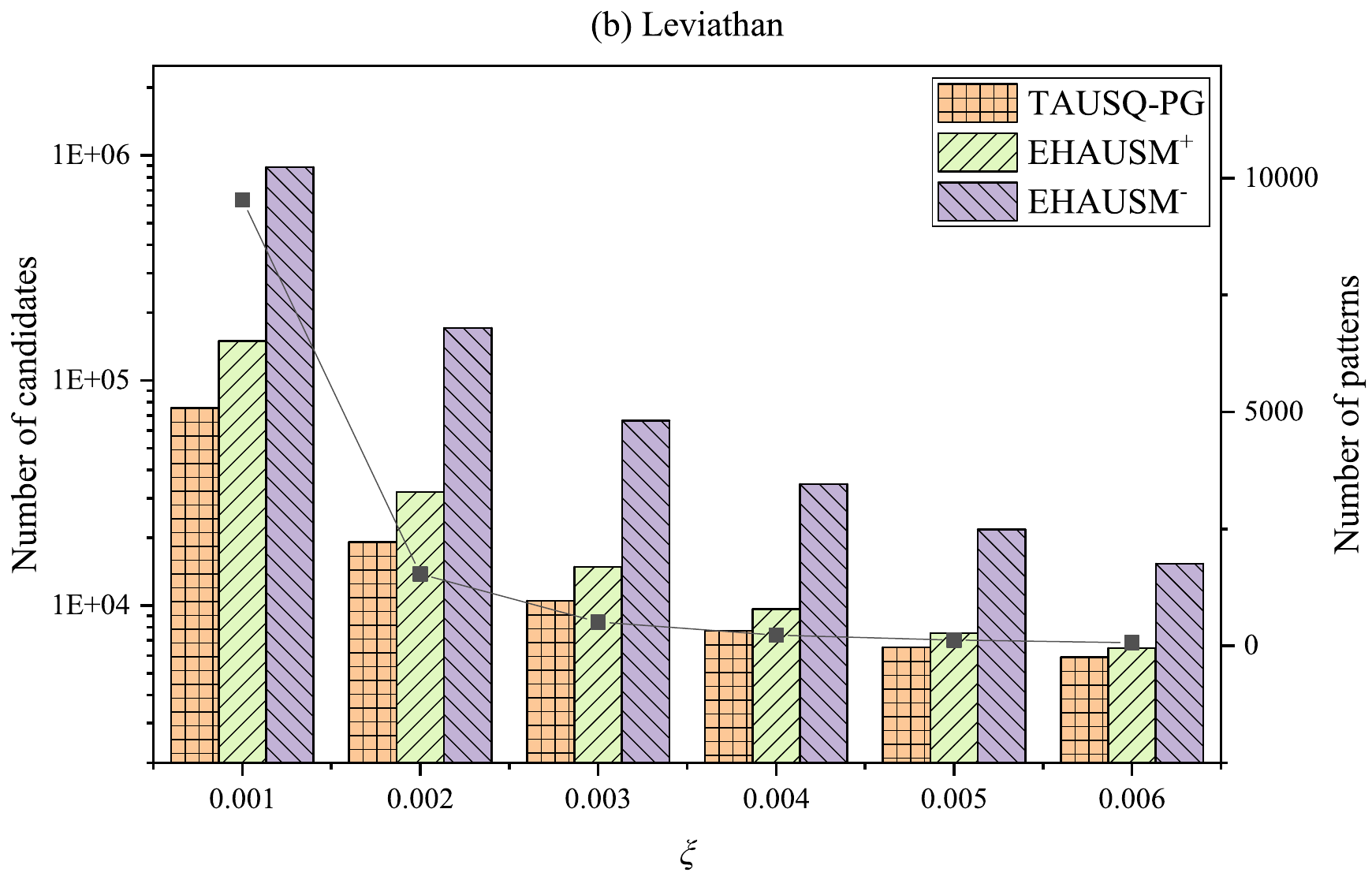}
			\label{fig: 52b}
			\includegraphics[clip,scale=0.15]{./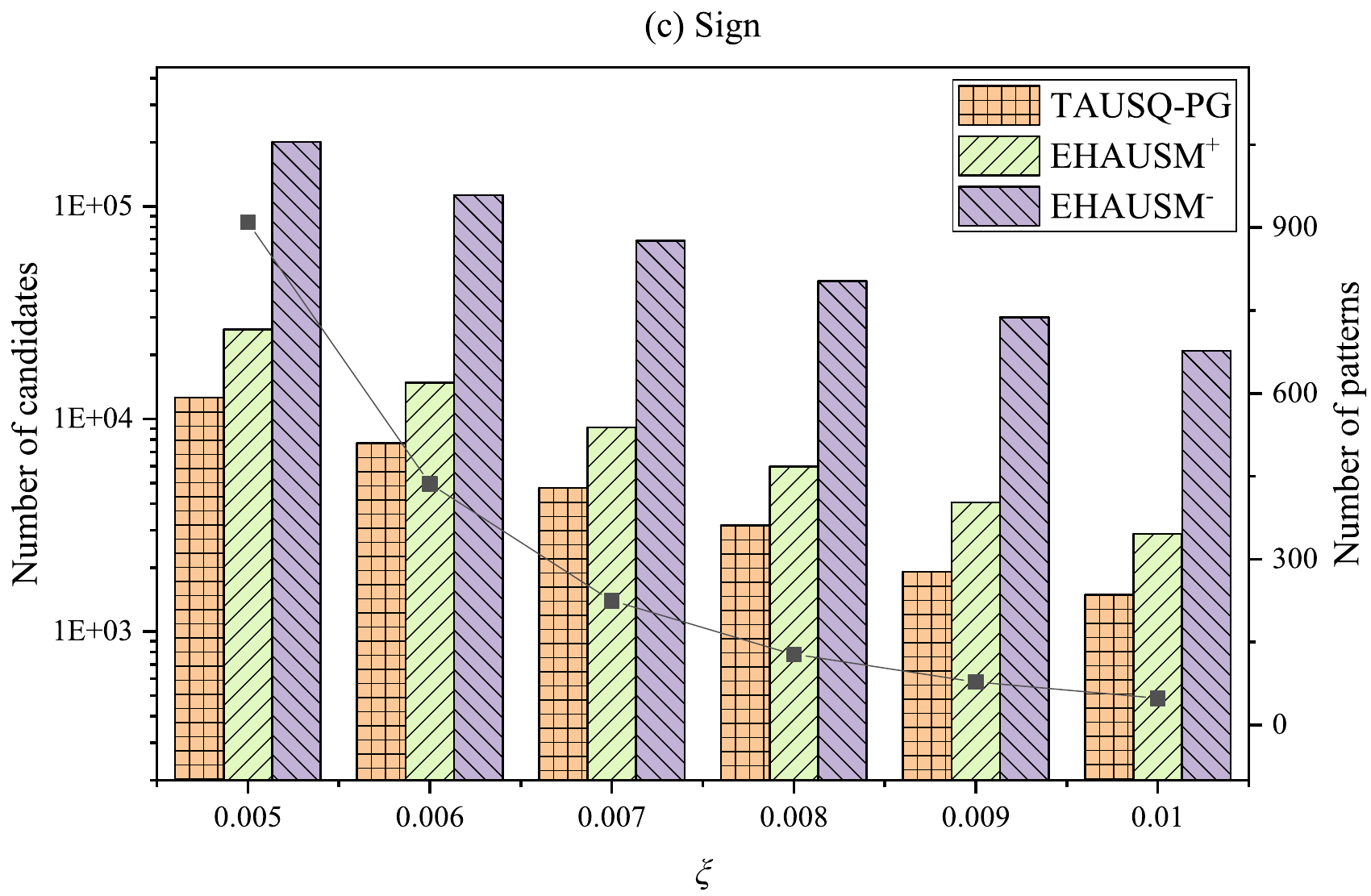}
			\label{fig: 52c}
	\end{minipage}
	\begin{minipage}[t]{0.98\textwidth}
			\includegraphics[clip,scale=0.15]{./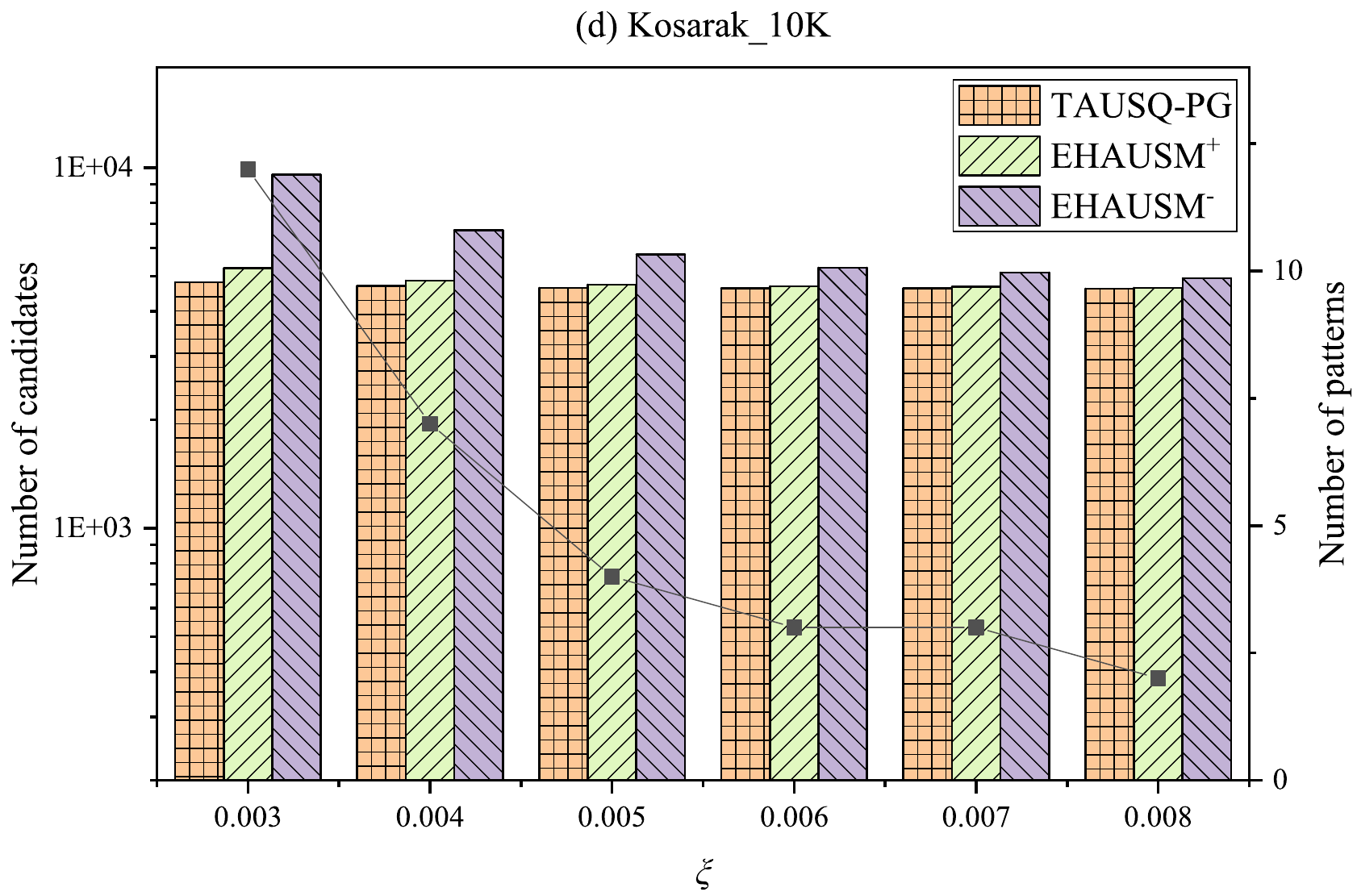}
			\label{fig: 52d}
			\includegraphics[clip,scale=0.15]{./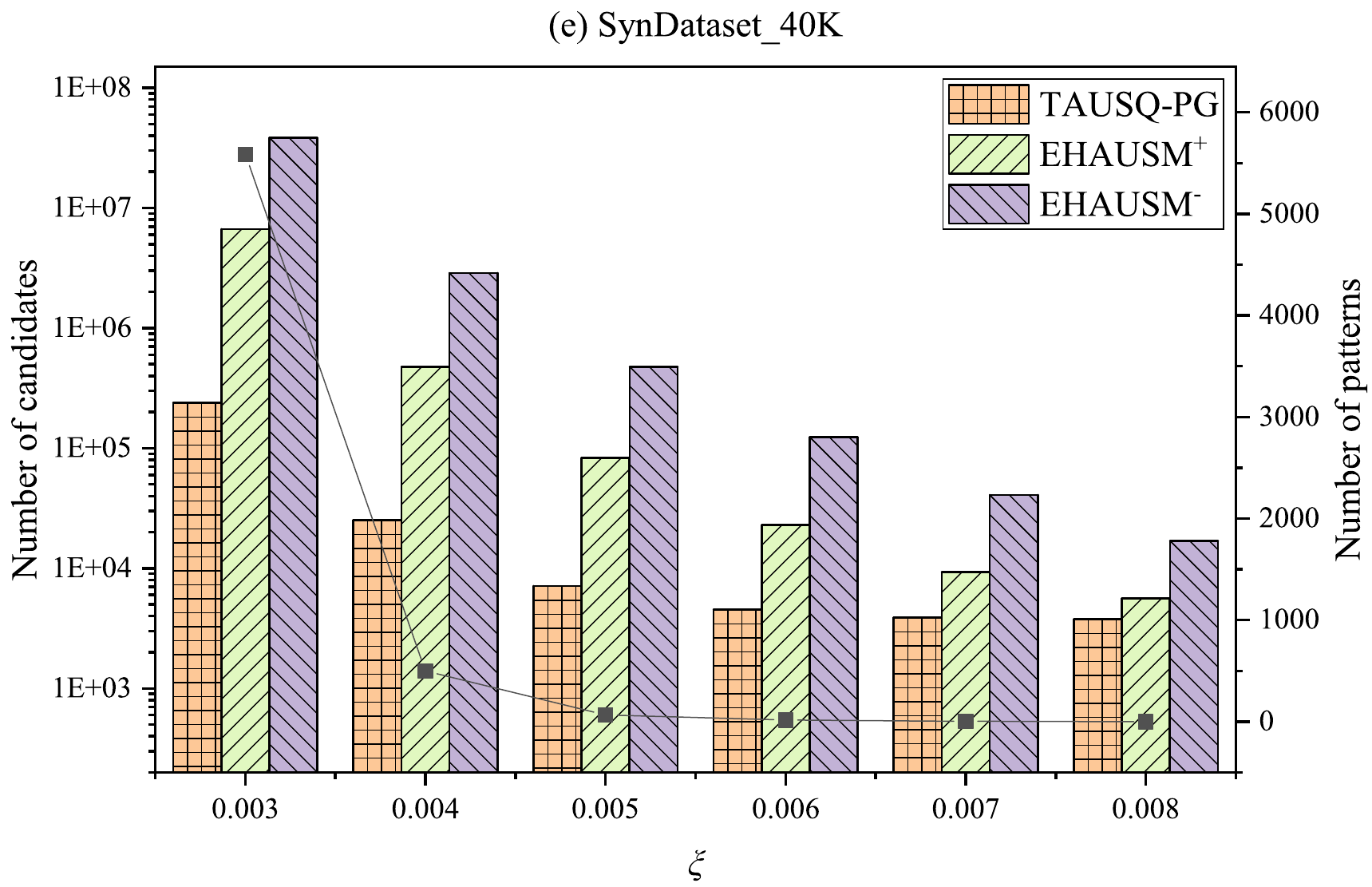}
			\label{fig: 52e}
			\includegraphics[clip,scale=0.15]{./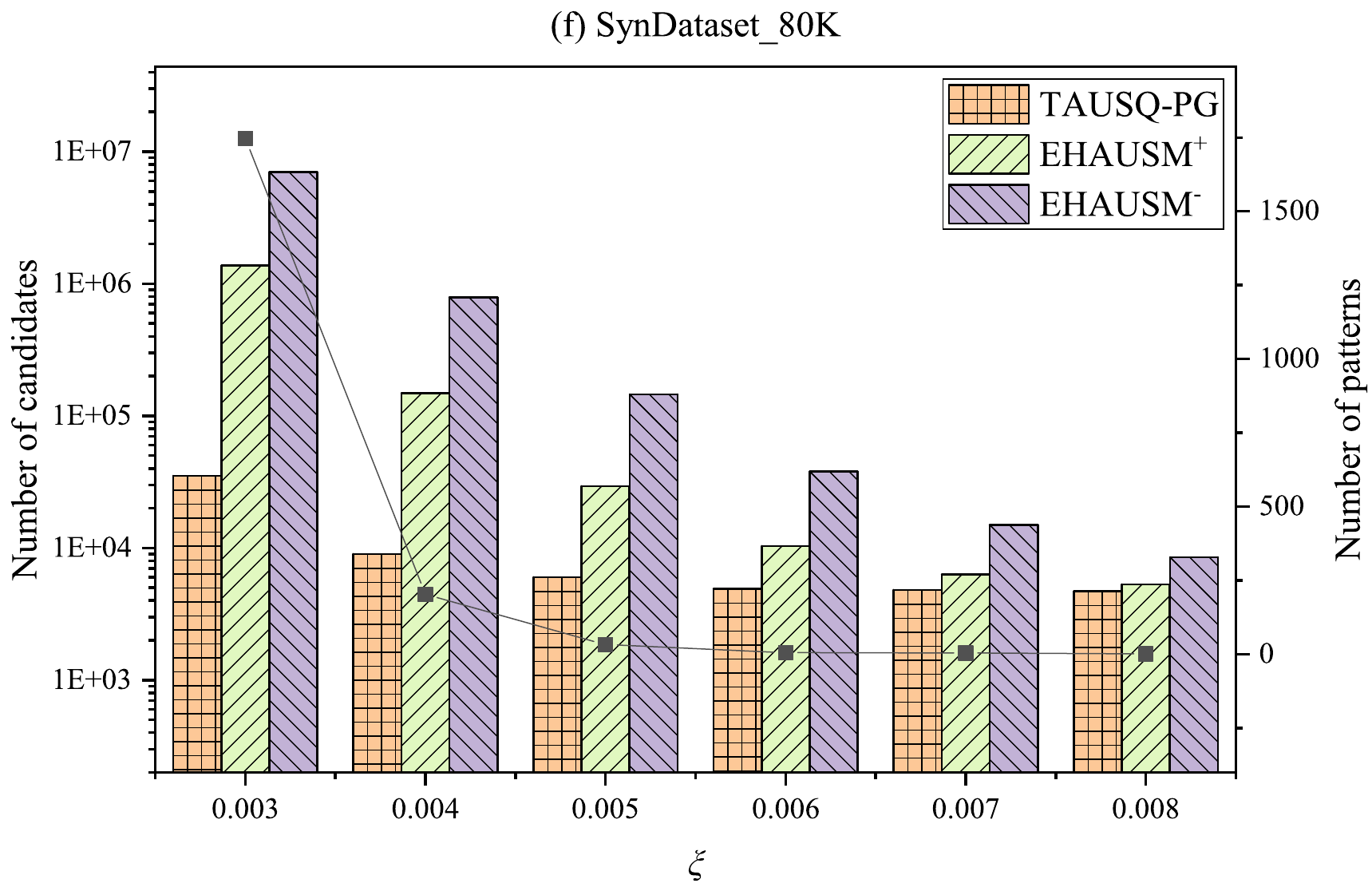}
			\label{fig: 52f}
	\end{minipage}
	\caption{Generated candidate sequences for various thresholds.}
	\label{fig: 52}
\end{figure}

In Fig. \ref{fig: 52}, the TAUSQ-PG consistently generates fewer candidate sequences than both \( {\rm EHAUSM}^+  \) and \( {\rm EHAUSM}^- \) across all datasets. As shown in both Fig. \ref{fig: 52}(a) and Fig. \ref{fig: 52}(b), as the parameter \( \xi \) increases, the number of candidate sequences decreases for both \( {\rm EHAUSM}^+  \) and \( {\rm EHAUSM}^- \). Even as the performance gap narrows, the number of candidates generated by these two baselines remains consistently higher than that of TAUSQ-PG. In Fig. \ref{fig: 52}(b), all three algorithms effectively constrain the search space size, and their performances are relatively close. This advantage is particularly evident in Fig. \ref{fig: 52}(c), where the dataset has a relatively small total volume but contains many long sequences. Similar benefits are observed in synthetic datasets shown in Fig. \ref{fig: 52}(e) and Fig. \ref{fig: 52}(f), which have a much larger number of sequences and the \textit{AvgSetSize} exceeds 1.0.

Although all three algorithms incorporate preprocessing to filter the original dataset, their varying approaches to target querying and the high average utility sequence mining task lead to different levels of effectiveness in reducing the search space. The proposed TAUSQ-PG leverages a pattern growth framework integrated with a tighter variant of the upper bound model. This design enables it to dynamically and efficiently prune unpromising candidate sequences during the mining process. The comparison of candidate sequence generation aligns well with the runtime results discussed above, further highlighting the advantages of the proposed algorithm in addressing the TAUSPM task.

\subsection{Memory Overhead Evaluation}
\label{sec: memory}

Memory usage is a critical metric for evaluating the resource efficiency of pattern mining algorithms. As shown in the experimental results in Fig. \ref{fig: 53}, memory consumption generally increases with the value of \( \xi \) before stabilizing. Across all datasets, the proposed algorithm TAUSQ-PG consistently demonstrates lower memory usage compared to both \( {\rm EHAUSM}^+  \) and \( {\rm EHAUSM}^- \).

\begin{figure}[ht]
	\centering
	\begin{minipage}[t]{0.98\textwidth}
			\includegraphics[clip,scale=0.17]{./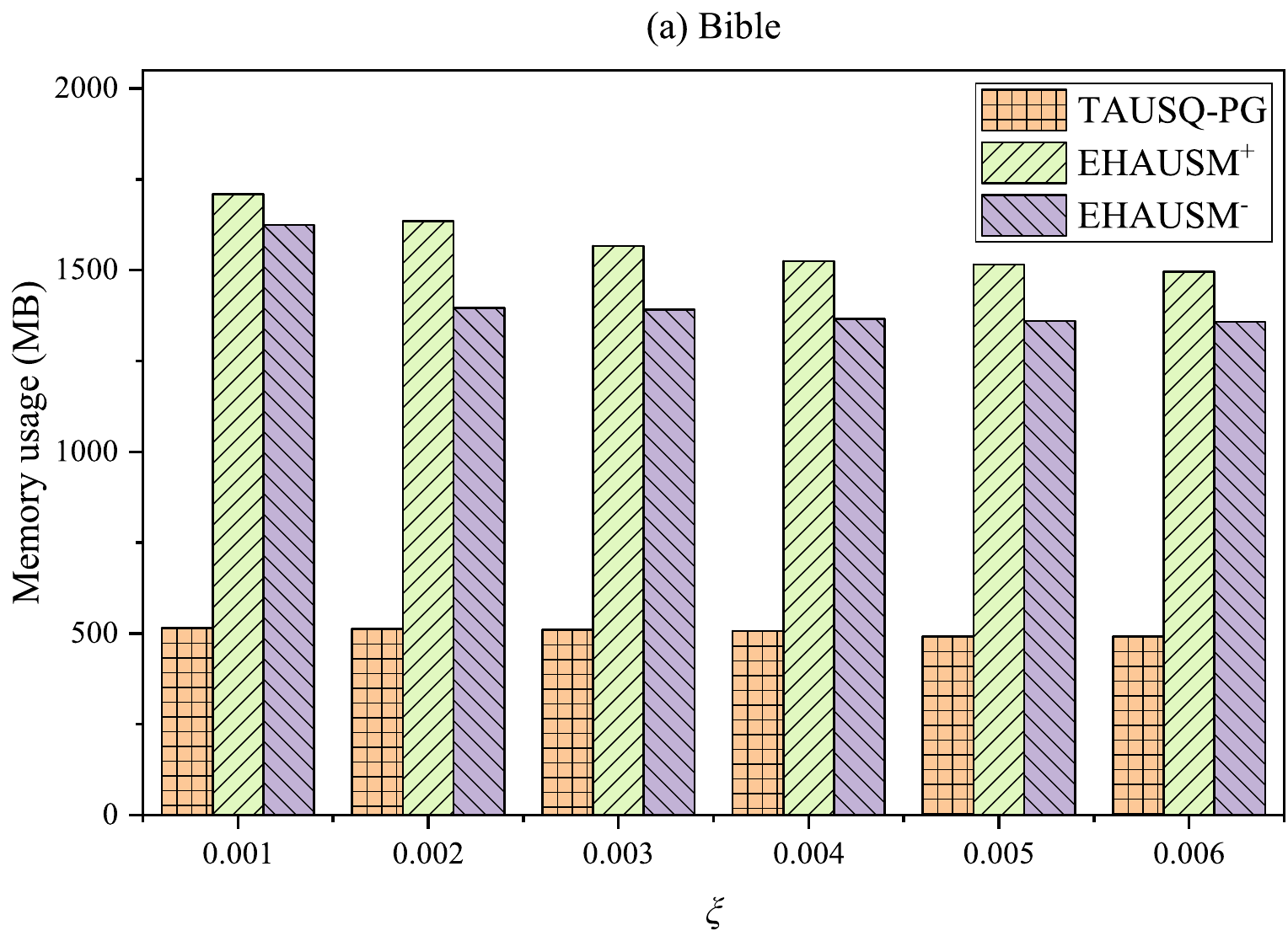}
			\label{fig: 53a}
			\includegraphics[clip,scale=0.17]{./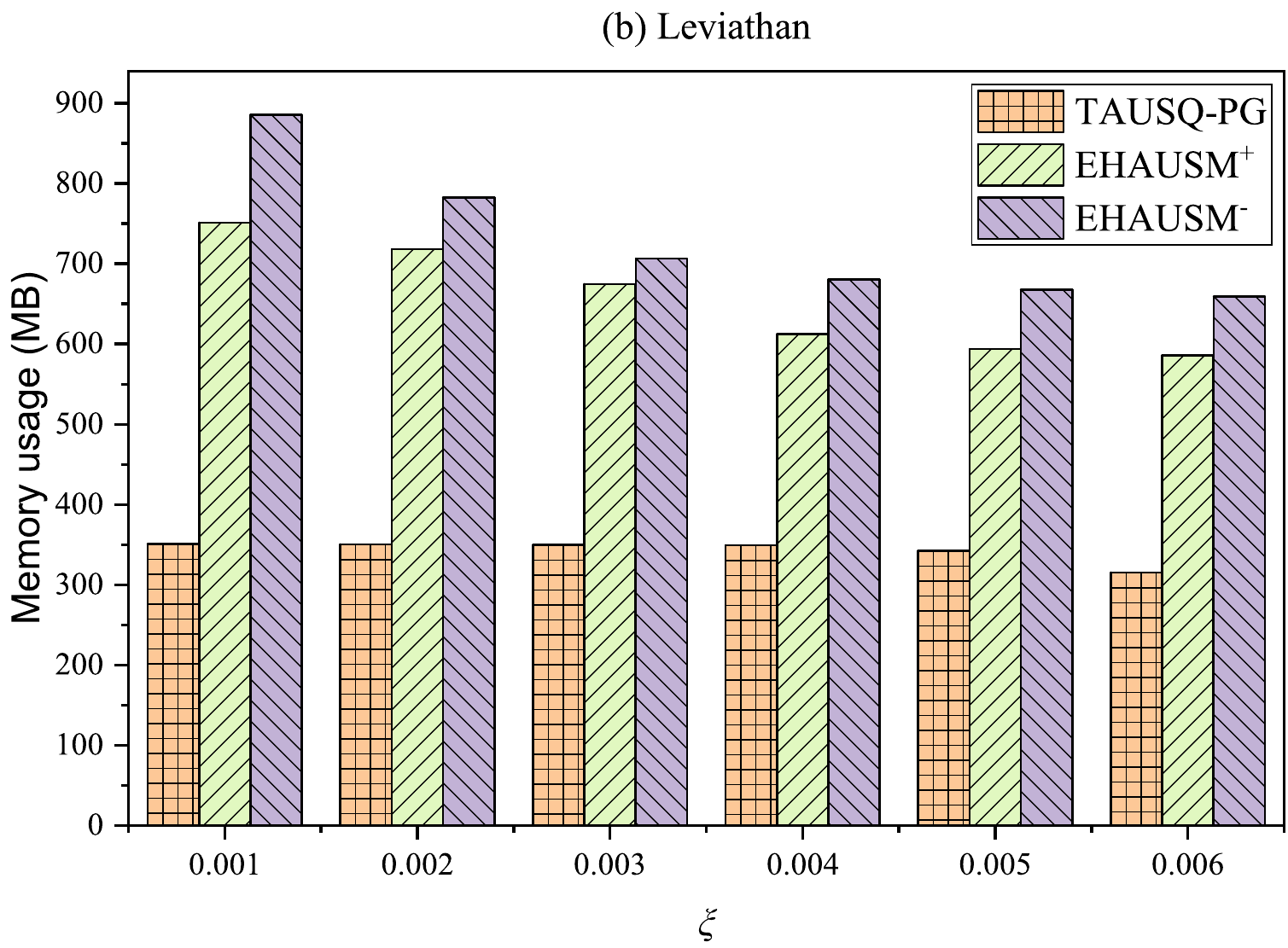}
			\label{fig: 53b}
			\includegraphics[clip,scale=0.17]{./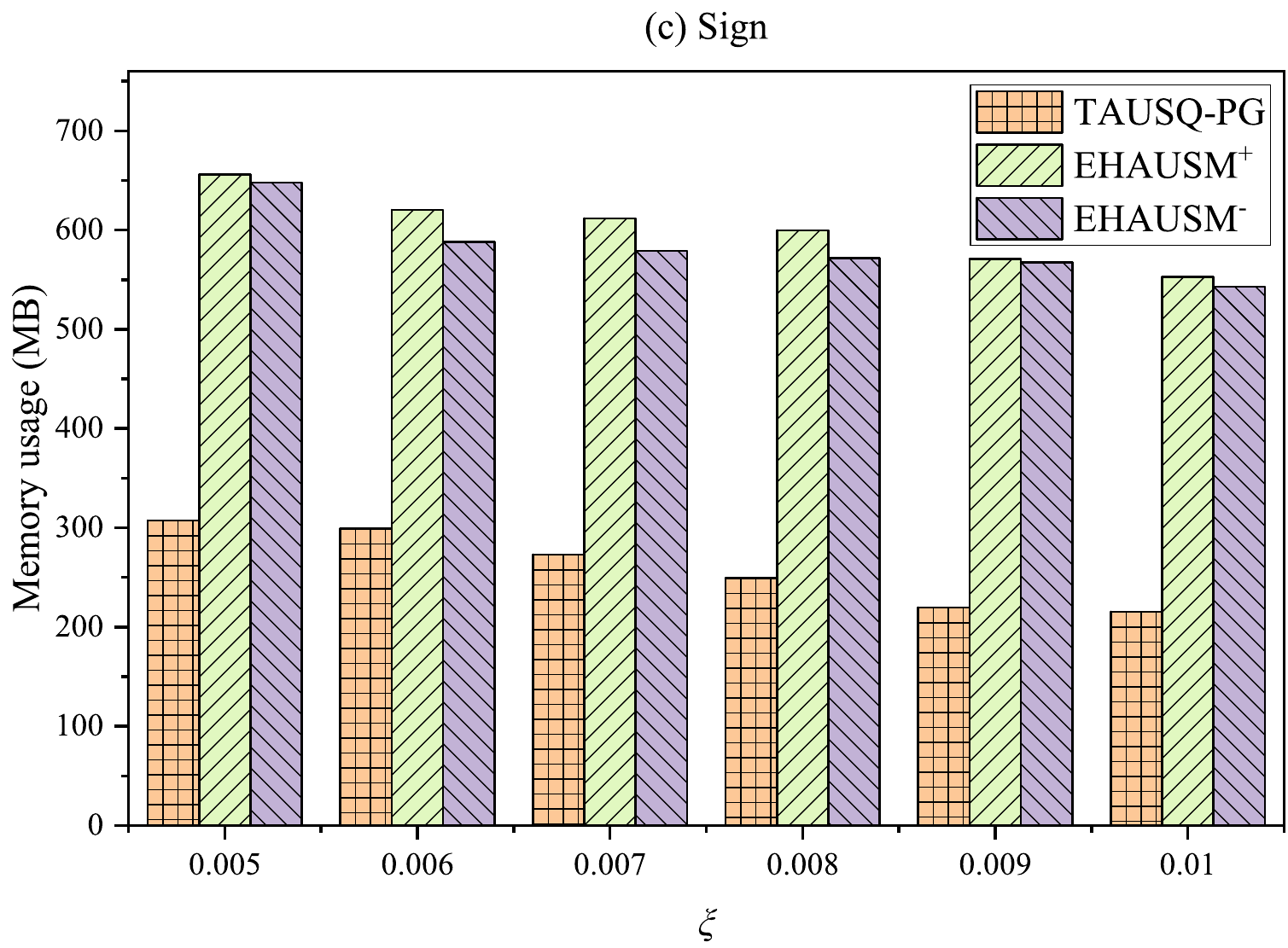}
			\label{fig: 53c}
	\end{minipage}
	\begin{minipage}[t]{0.98\textwidth}
			\includegraphics[clip,scale=0.17]{./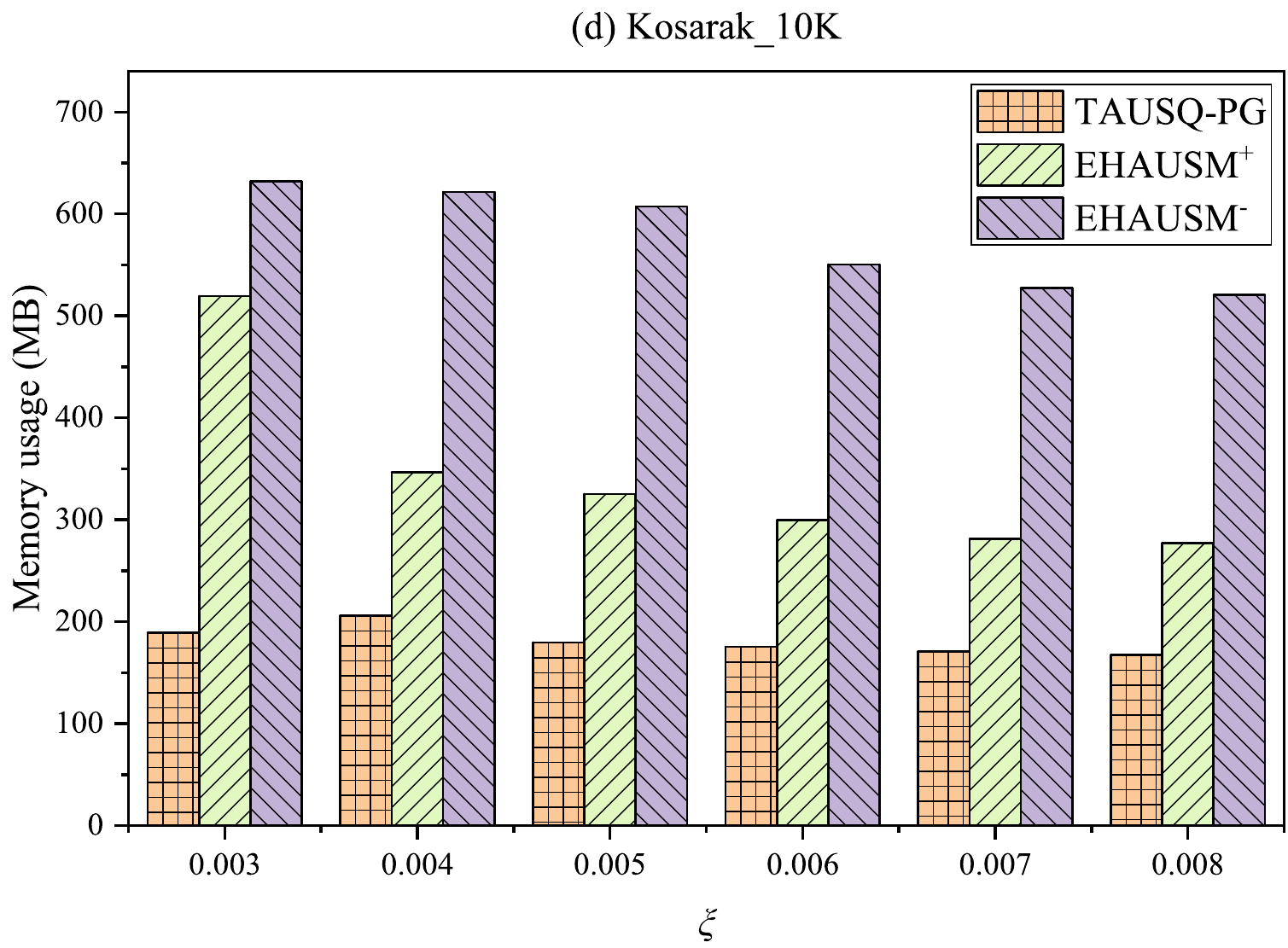}
			\label{fig: 53d}
			\includegraphics[clip,scale=0.17]{./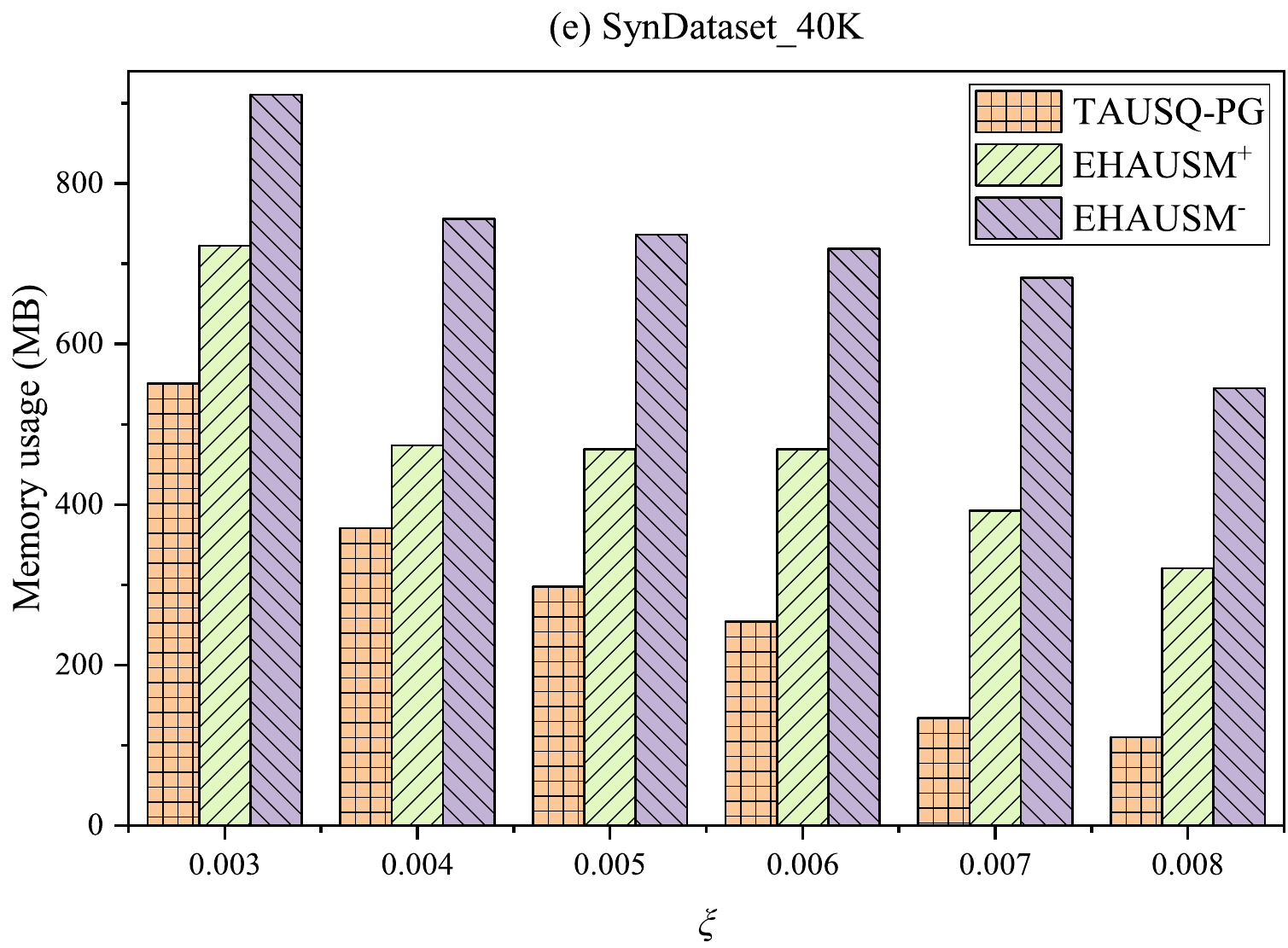}
			\label{fig: 53e}
			\includegraphics[clip,scale=0.17]{./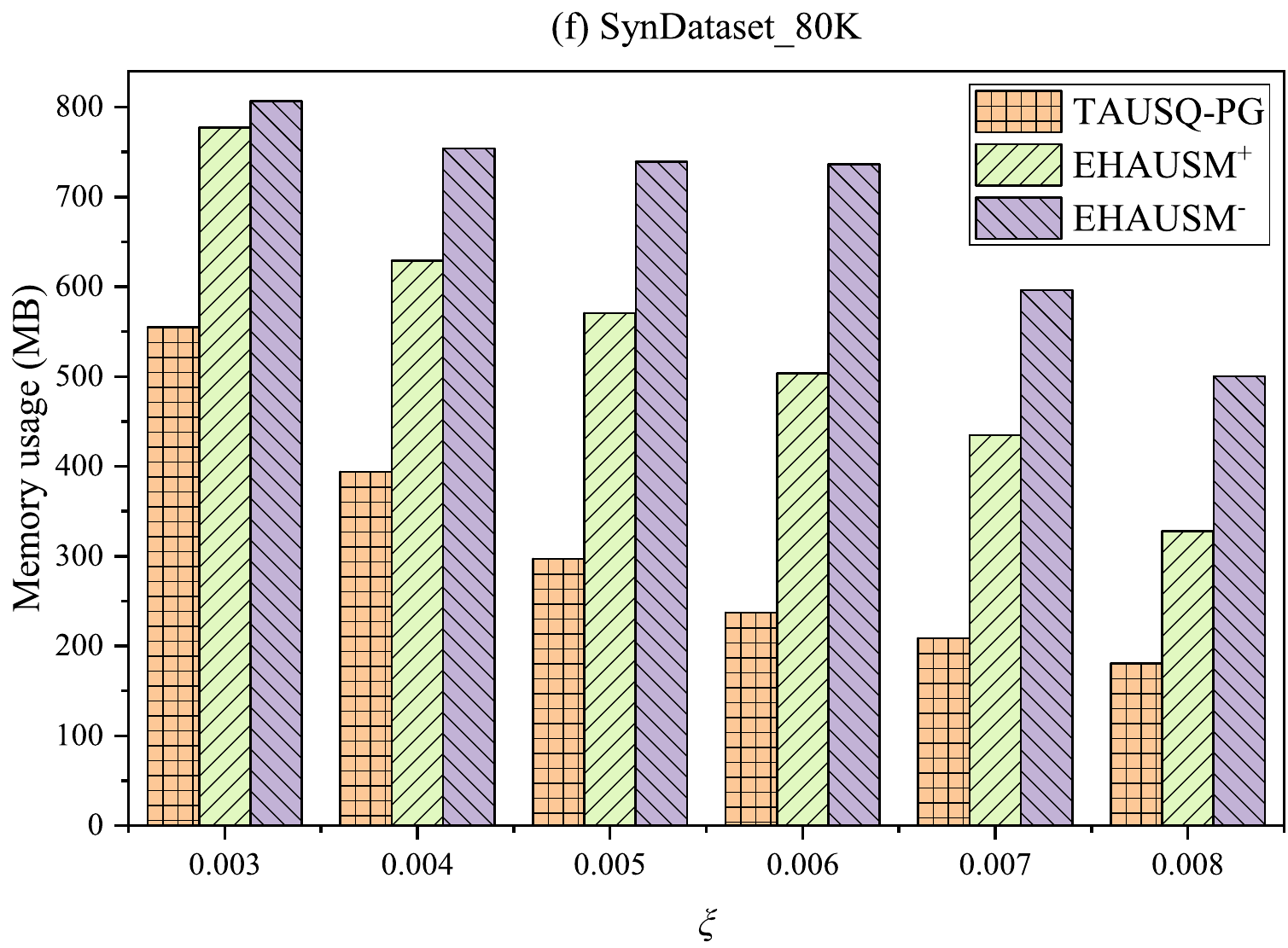}
			\label{fig: 53f}
	\end{minipage}
	\caption{Memory usage for various thresholds.}
	\label{fig: 53}
\end{figure}

In Fig. \ref{fig: 53}(a) and Fig. \ref{fig: 53}(b), memory usage remains relatively stable across the tested parameter range for all three algorithms. Nevertheless, TAUSQ-PG maintains a clear advantage in memory efficiency, consuming less memory in all cases. Interestingly, in Fig. \ref{fig: 53}(a) and Fig. \ref{fig: 53}(c), although TAUSQ-PG still shows the lowest memory usage overall, \( {\rm EHAUSM}^+  \) consumes slightly more memory than \( {\rm EHAUSM}^- \), with a non-negligible gap. This is somewhat counterintuitive, as \( {\rm EHAUSM}^+  \) generally demonstrates better control over search space size, as previously shown in Fig. \ref{fig: 52}. Another notable observation arises from Fig. \ref{fig: 52}(d), where the differences in the number of candidate sequences among the algorithms are relatively minor. In contrast, Fig. \ref{fig: 53}(d) reveals more pronounced differences in memory consumption.

These results suggest that although limiting the number of candidate sequences is generally effective in reducing memory usage, the additional memory overhead introduced by strategies for TPM may become significant. In cases where the search space is relatively small, this overhead remains minimal. However, in larger search spaces, stronger pruning mechanisms are necessary to offset the extra memory cost. Therefore, in Fig. \ref{fig: 53}(e) and Fig. \ref{fig: 53}(f), the memory efficiency of different methods becomes even more evident. The proposed TAUSQ-PG has a noticeable reduction in memory usage compared to the baselines.

\subsection{Ablation Analysis of Upper Bound Models}
\label{sec: ablationAnalysis}

In this subsection, we conduct an ablation study by varying the upper bound models used in the proposed algorithm to evaluate their effectiveness and necessity. The goal is to understand how different upper bound modeling strategies influence the performance of TAUSQ-PG under the average utility framework. Inspired by prior work in utility-oriented research \cite{zhang2022tusq}, where ablation analysis has been employed to assess the impact of different pruning strategies, we apply a similar methodology in the context of average-utility-based mining. As a necessary extension, we further evaluate how incorporating different lengths into the upper-bound models affects pruning effectiveness and overall runtime.

Based on the proposed algorithm TAUSQ-PG, we design three baseline variants for comparison: \( {\rm TAUSQ}_{\rm rrs} \), \( {\rm TAUSQ}_{\rm qSuf} \), and \( {\rm TAUSQ}_{\rm none} \). The variant \( {\rm TAUSQ}_{\rm rrs} \) considers only the length of the prefix and \textit{rrs} subsequence in the upper bound estimation, while \( {\rm TAUSQ}_{\rm qSuf} \) incorporates only the length of the prefix and \textit{qSuf}. In contrast, \( {\rm TAUSQ}_{\rm none} \) includes only the length of the prefix and disregards both \textit{rrs} and \textit{qSuf}.

\begin{figure}[ht]
	\centering
	\begin{minipage}[t]{0.98\textwidth}
			\includegraphics[clip,scale=0.17]{./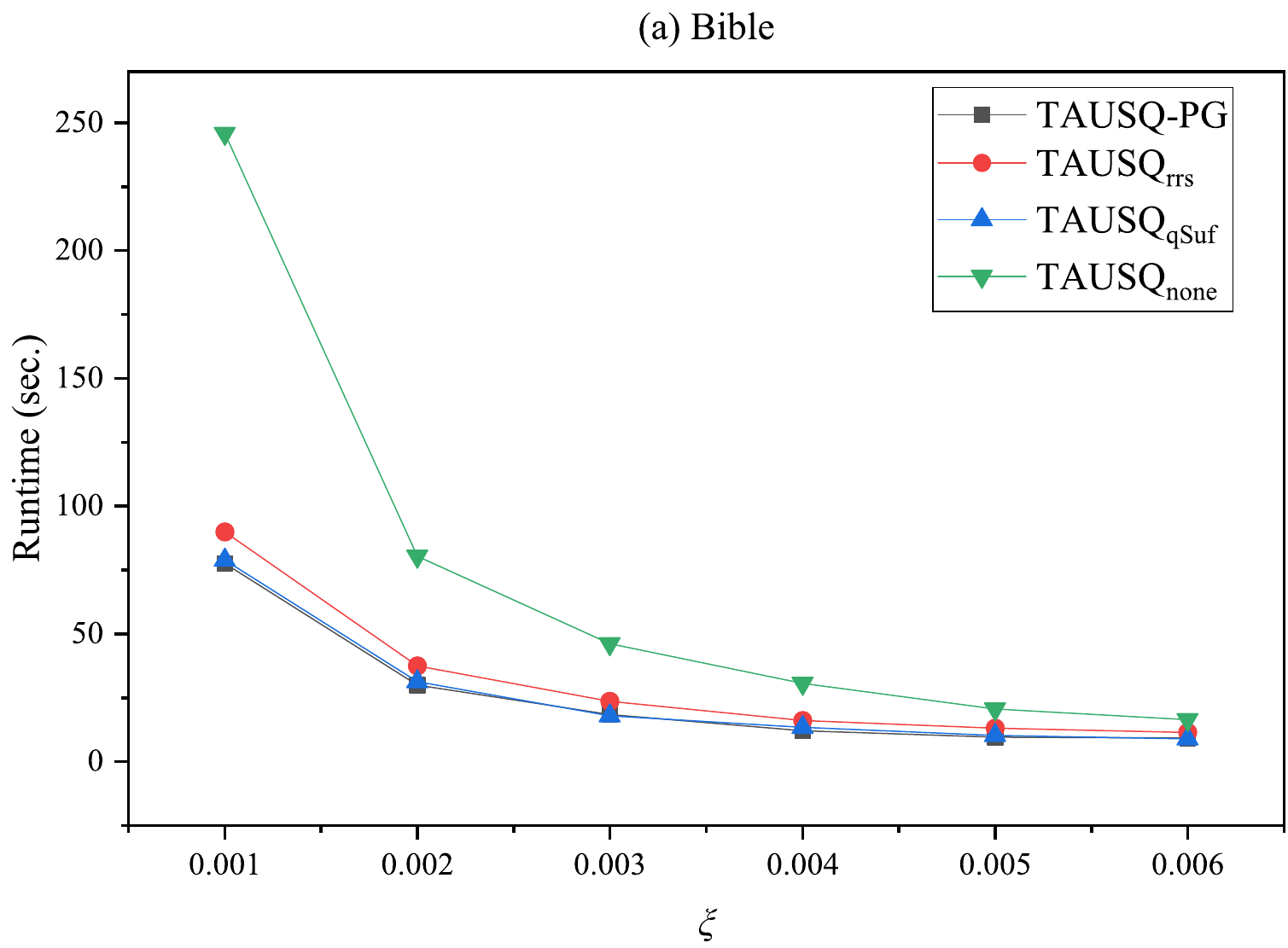}
			\label{fig: 54a}
			\includegraphics[clip,scale=0.17]{./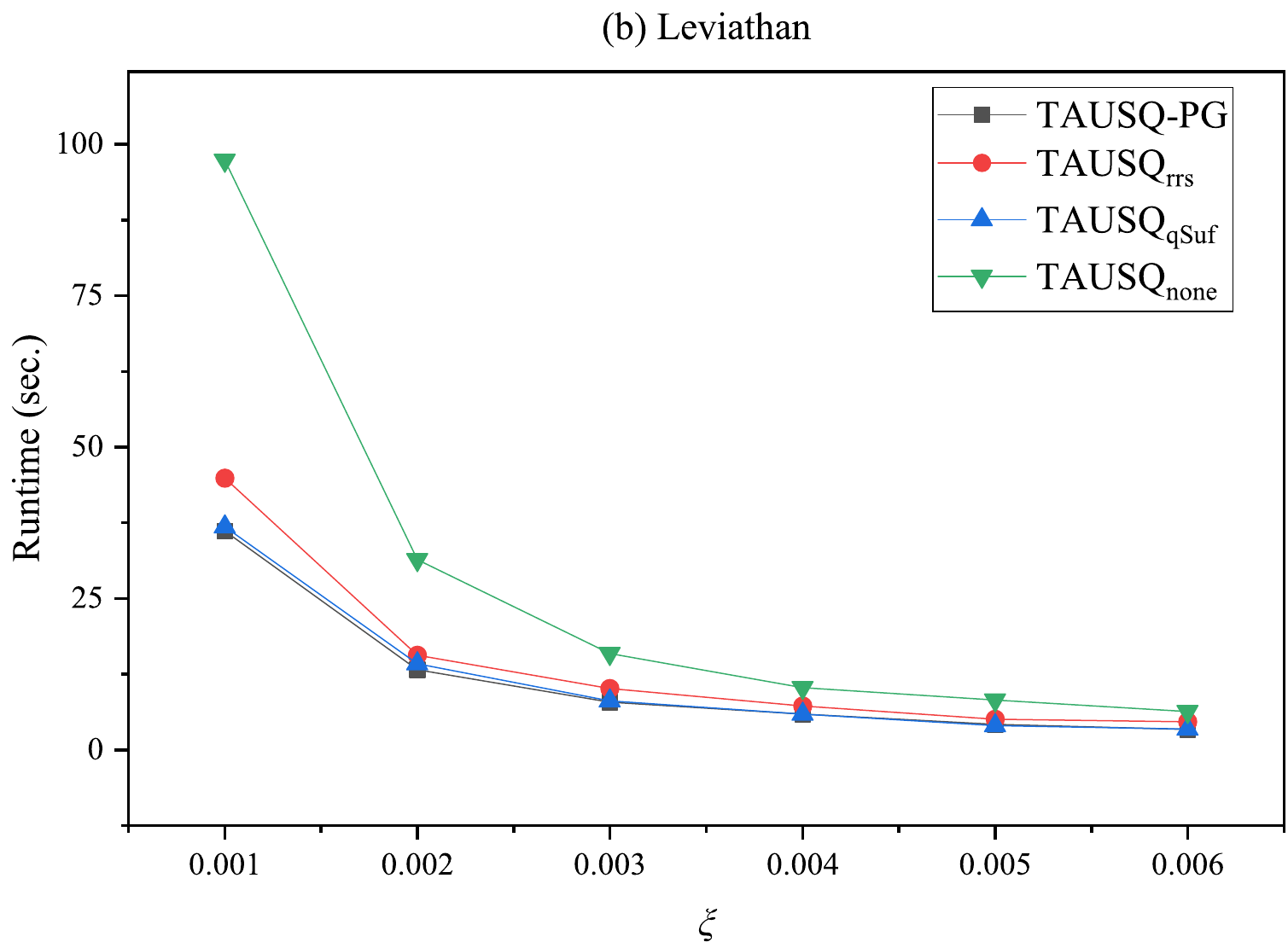}
			\label{fig: 54b}
			\includegraphics[clip,scale=0.17]{./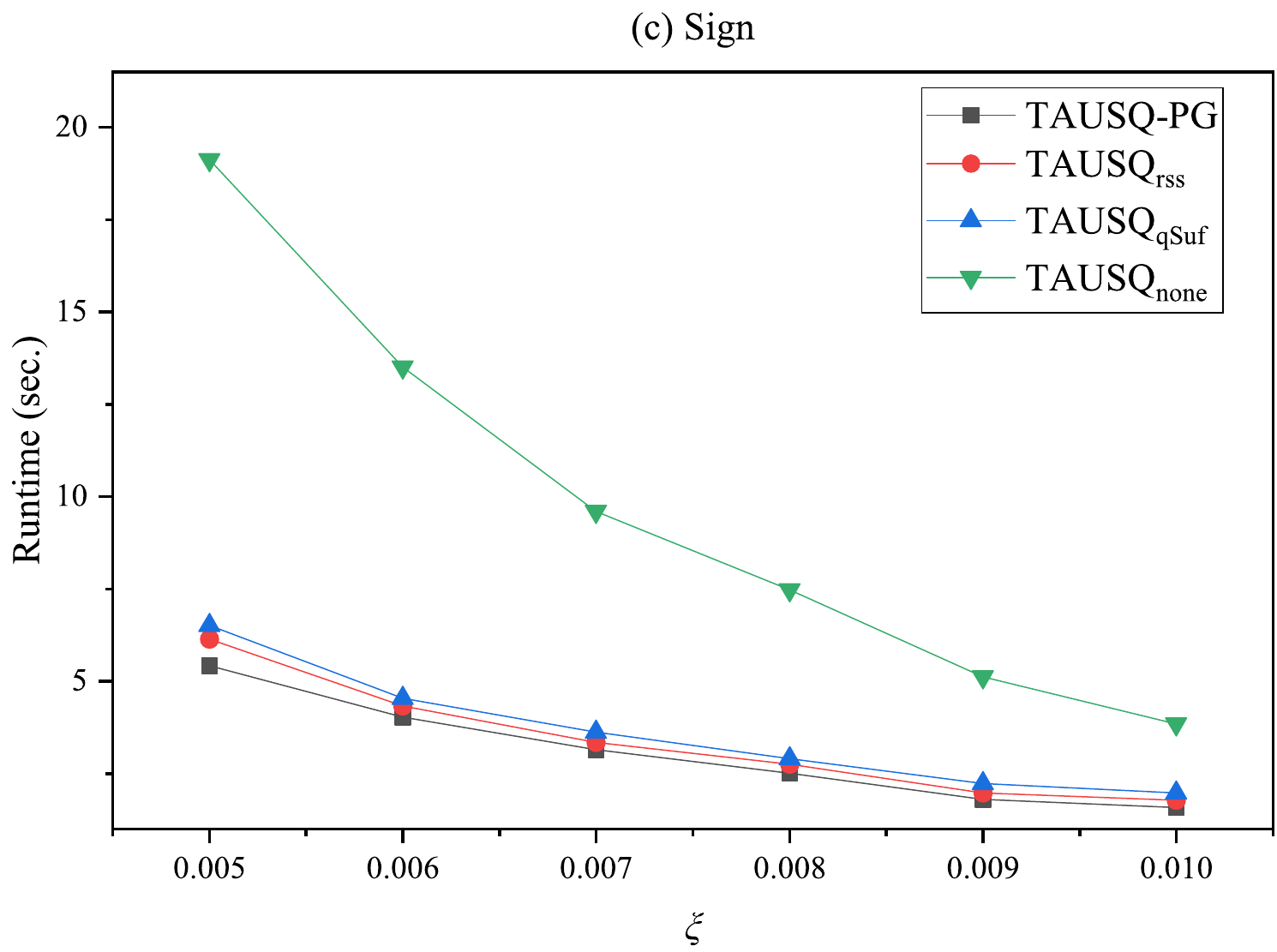}
			\label{fig: 54c}
	\end{minipage}
	\begin{minipage}[t]{0.98\textwidth}
			\includegraphics[clip,scale=0.17]{./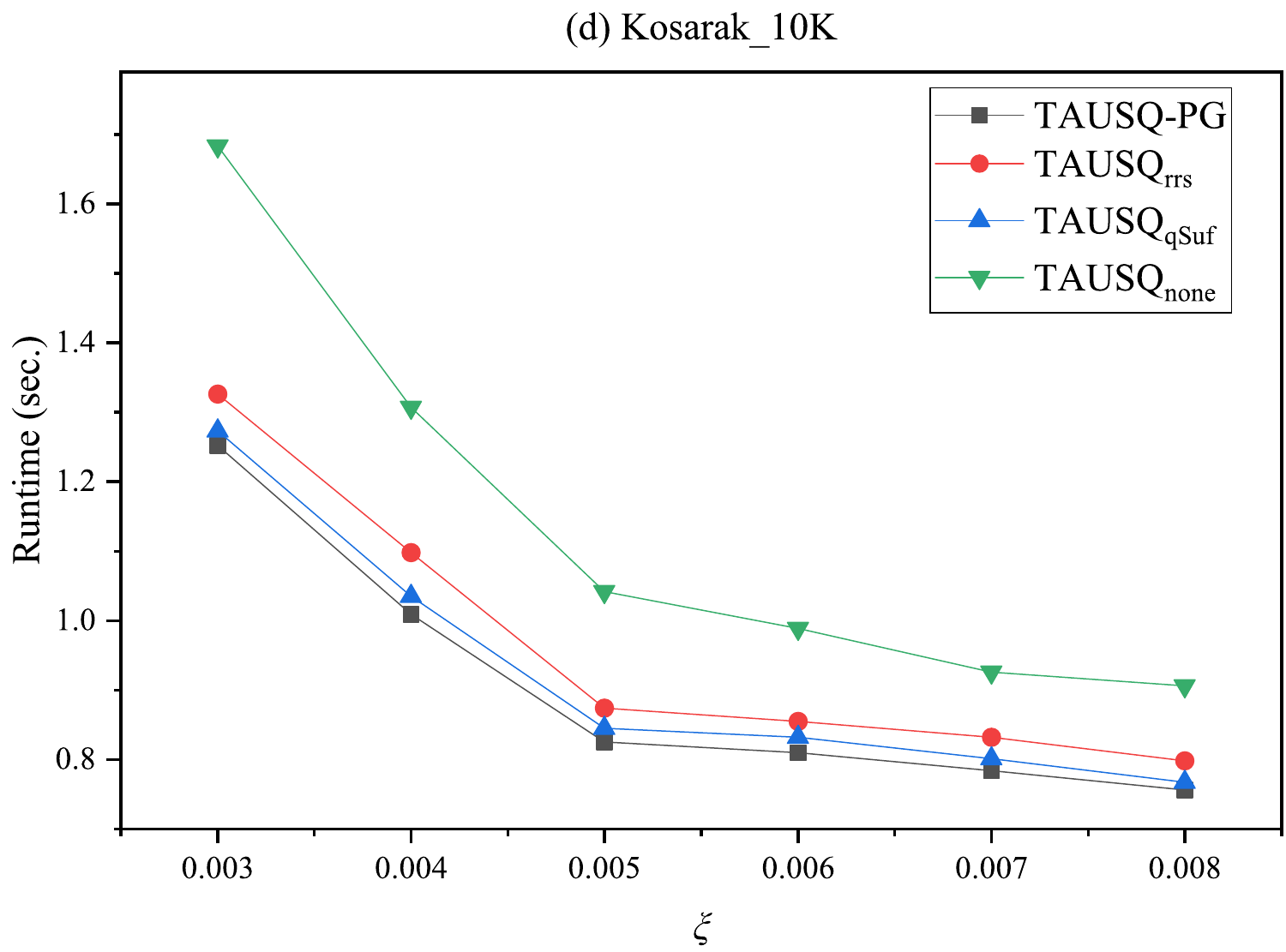}
			\label{fig: 54d}
			\includegraphics[clip,scale=0.17]{./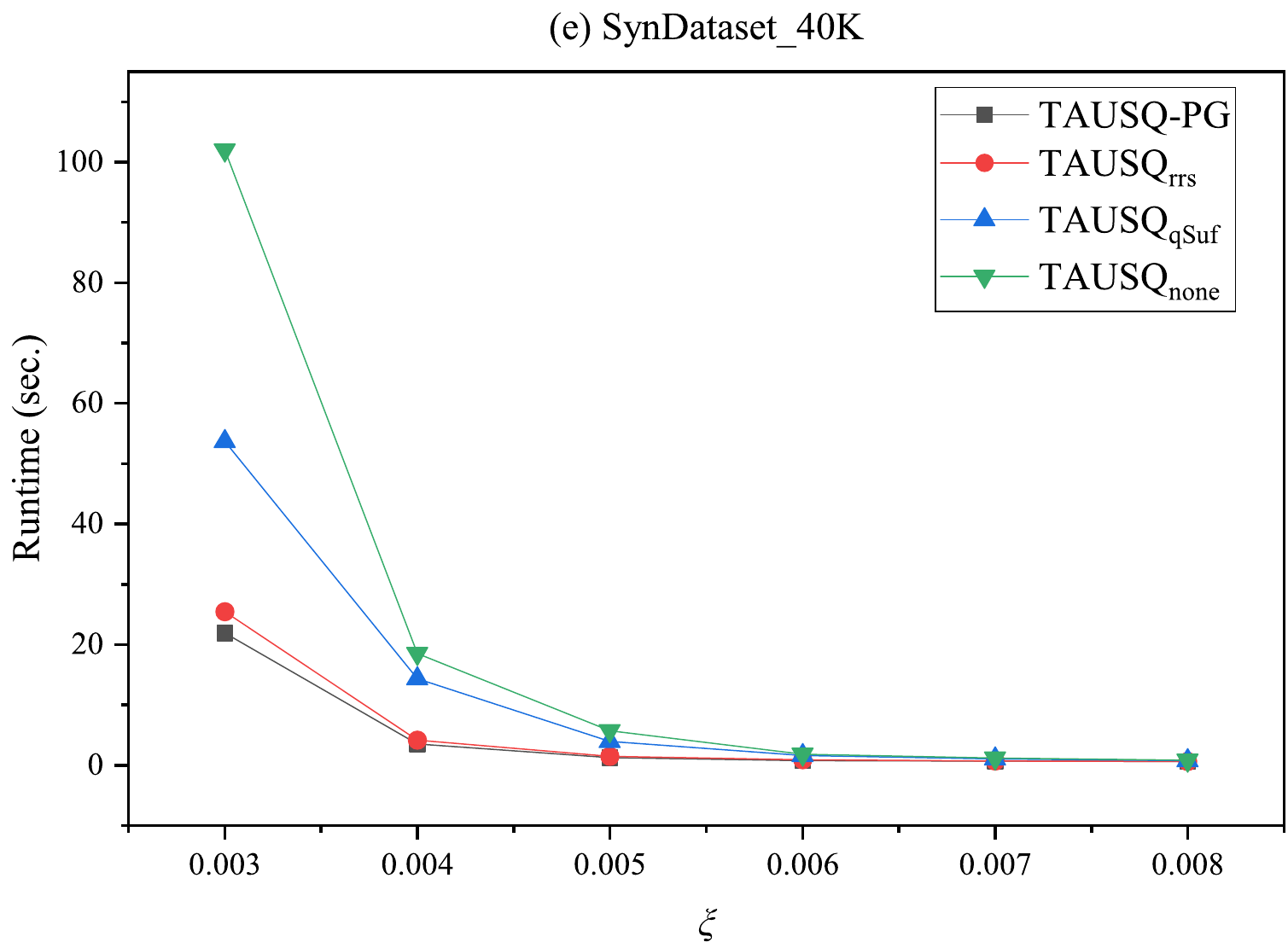}
			\label{fig: 54e}
			\includegraphics[clip,scale=0.17]{./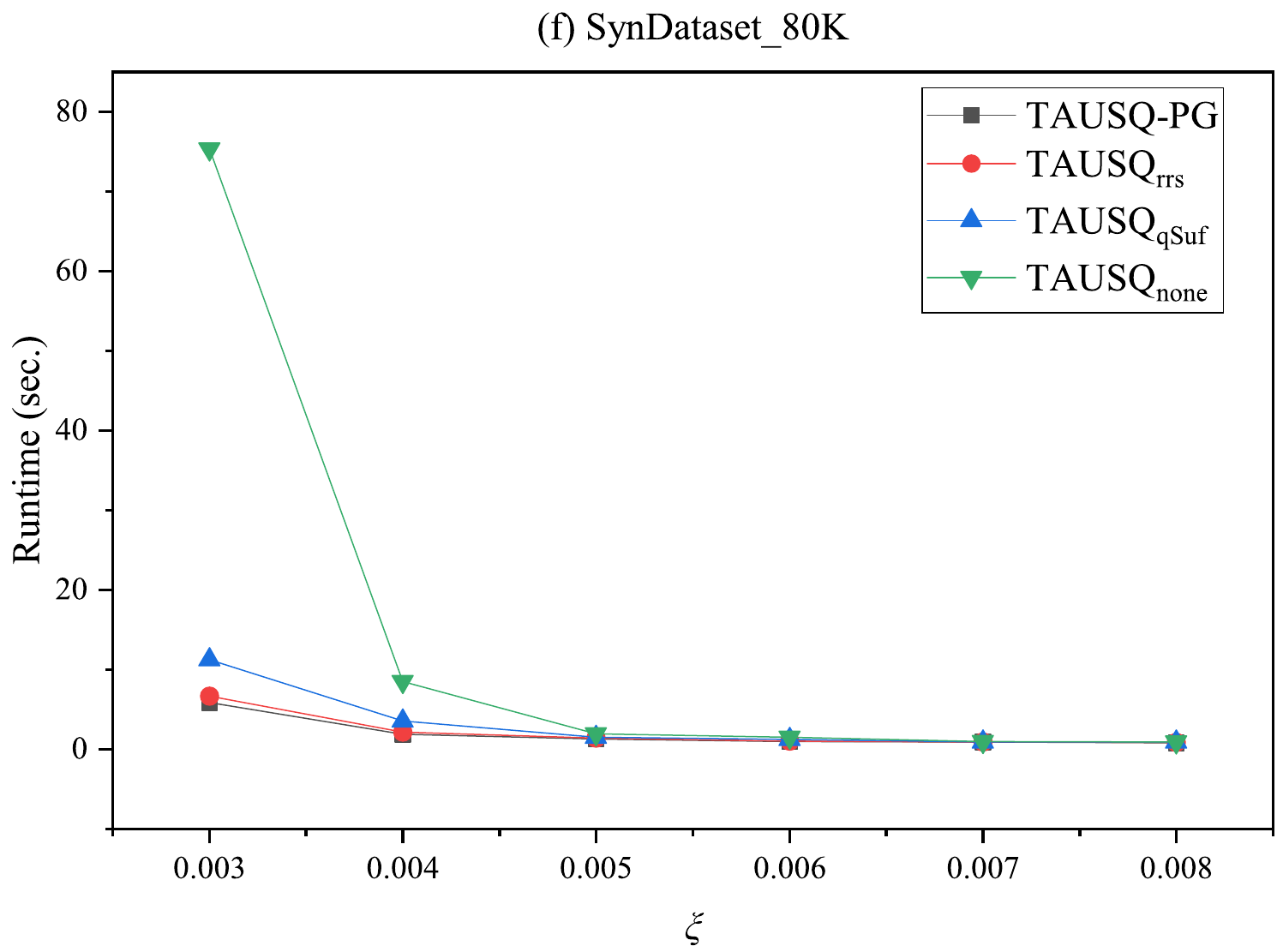}
			\label{fig: 54f}
	\end{minipage}
	\caption{Runtime for various upper bound models.}
	\label{fig: 54}
\end{figure}

\begin{figure}[ht]
	\centering
	\begin{minipage}[t]{0.98\textwidth}
			\includegraphics[clip,scale=0.15]{./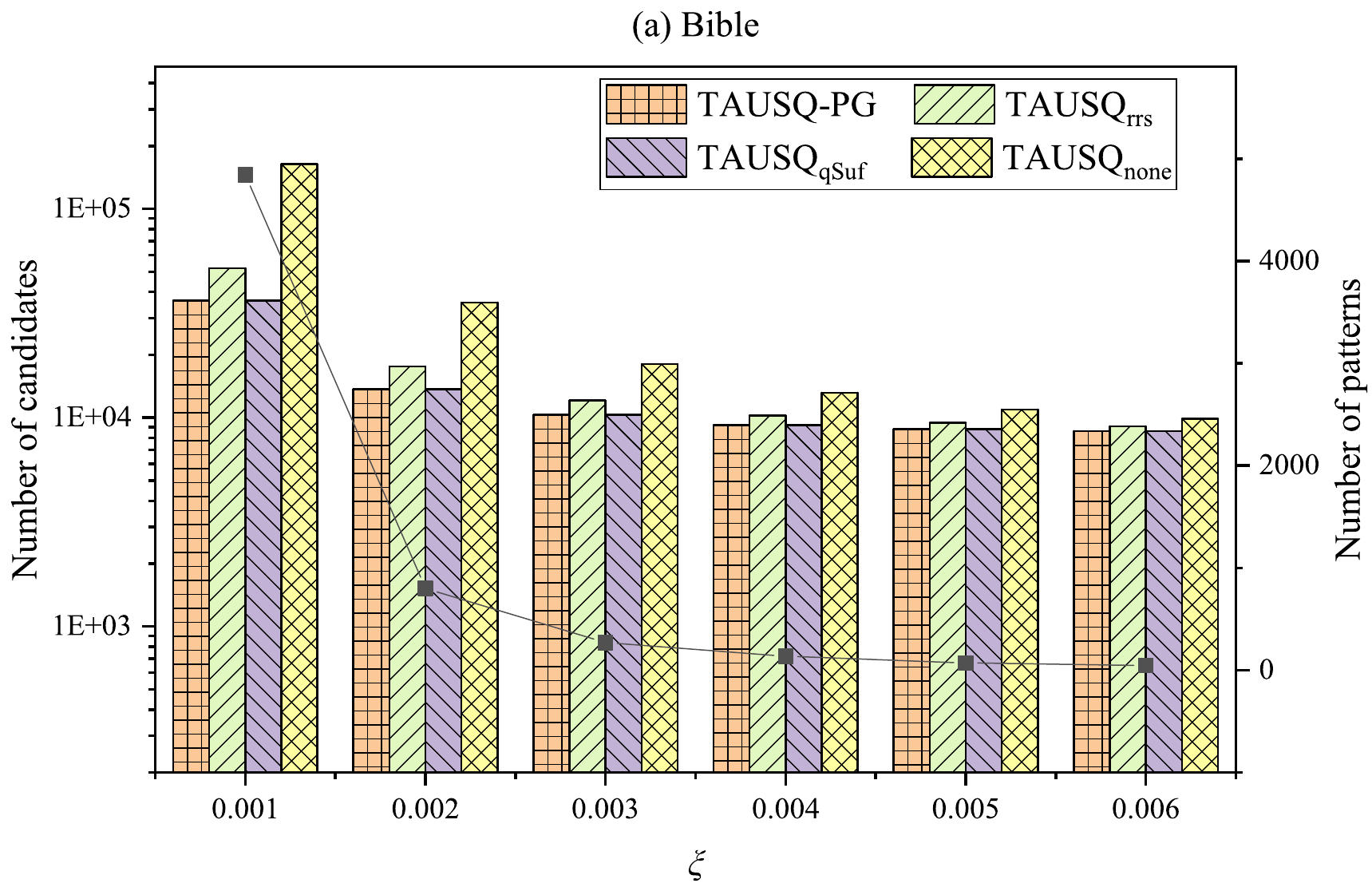}
			\label{fig: 55a}
			\includegraphics[clip,scale=0.15]{./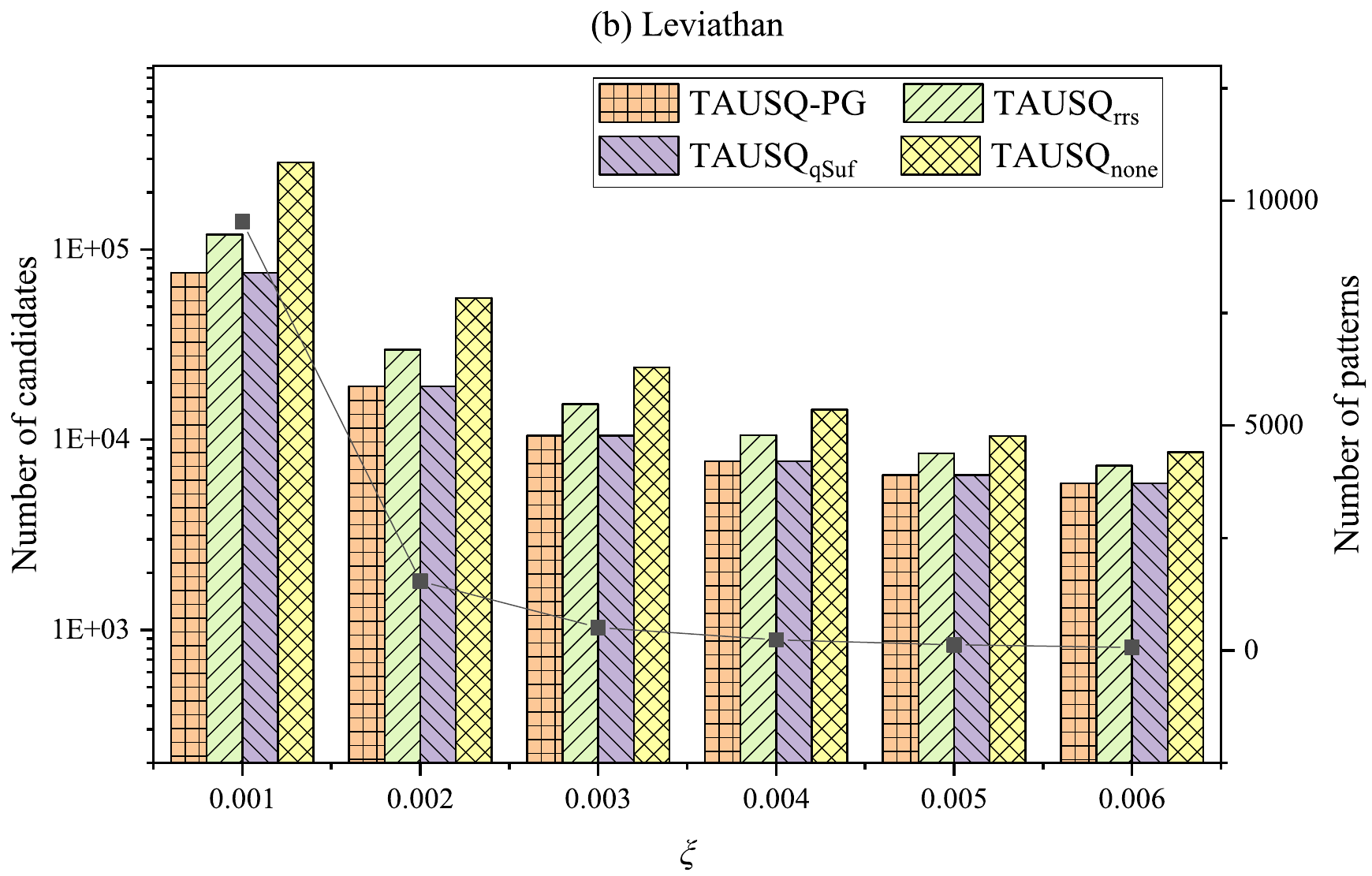}
			\label{fig: 55b}
			\includegraphics[clip,scale=0.15]{./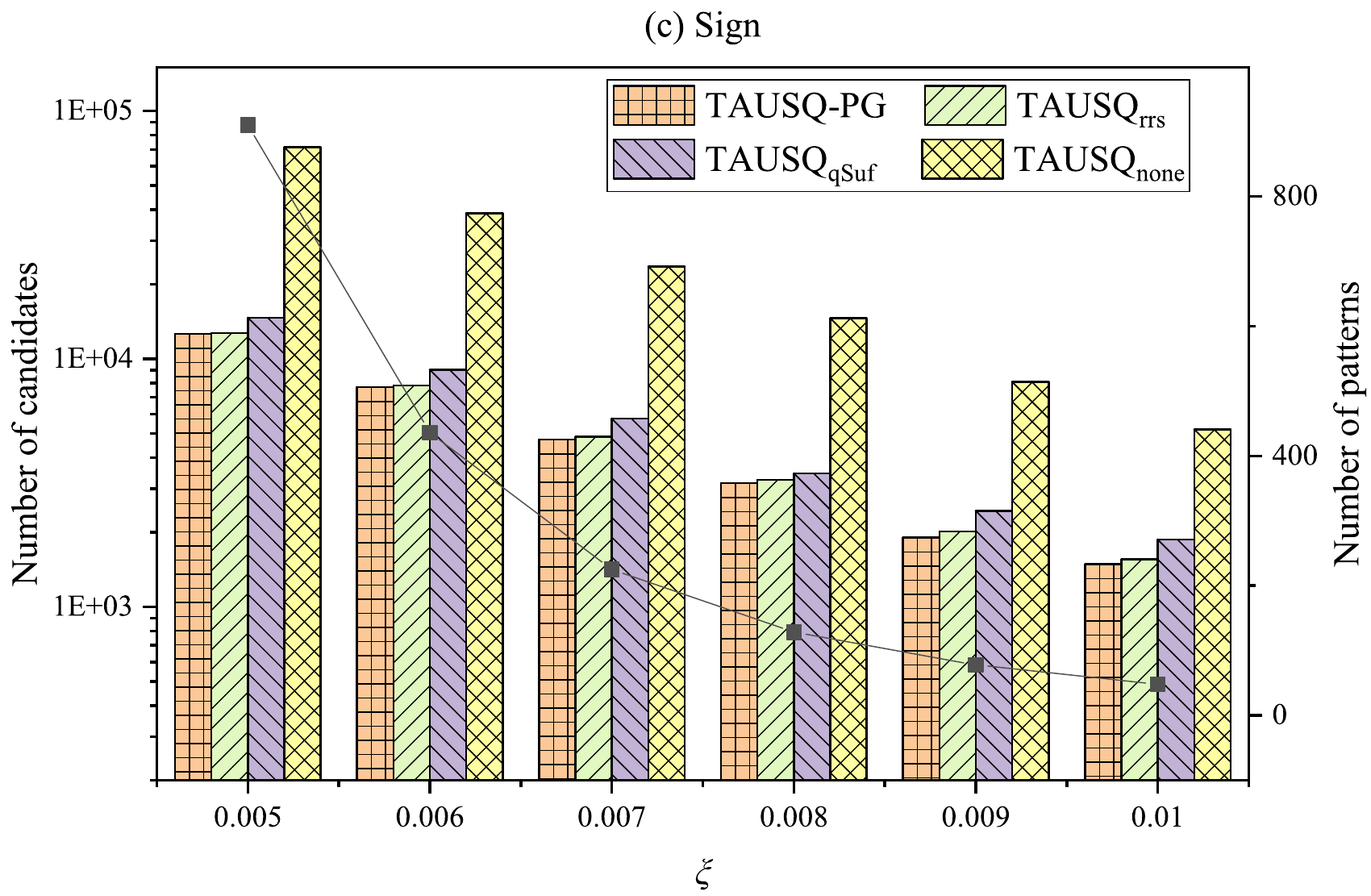}
			\label{fig: 55c}
	\end{minipage}
	\begin{minipage}[t]{0.98\textwidth}
			\includegraphics[clip,scale=0.15]{./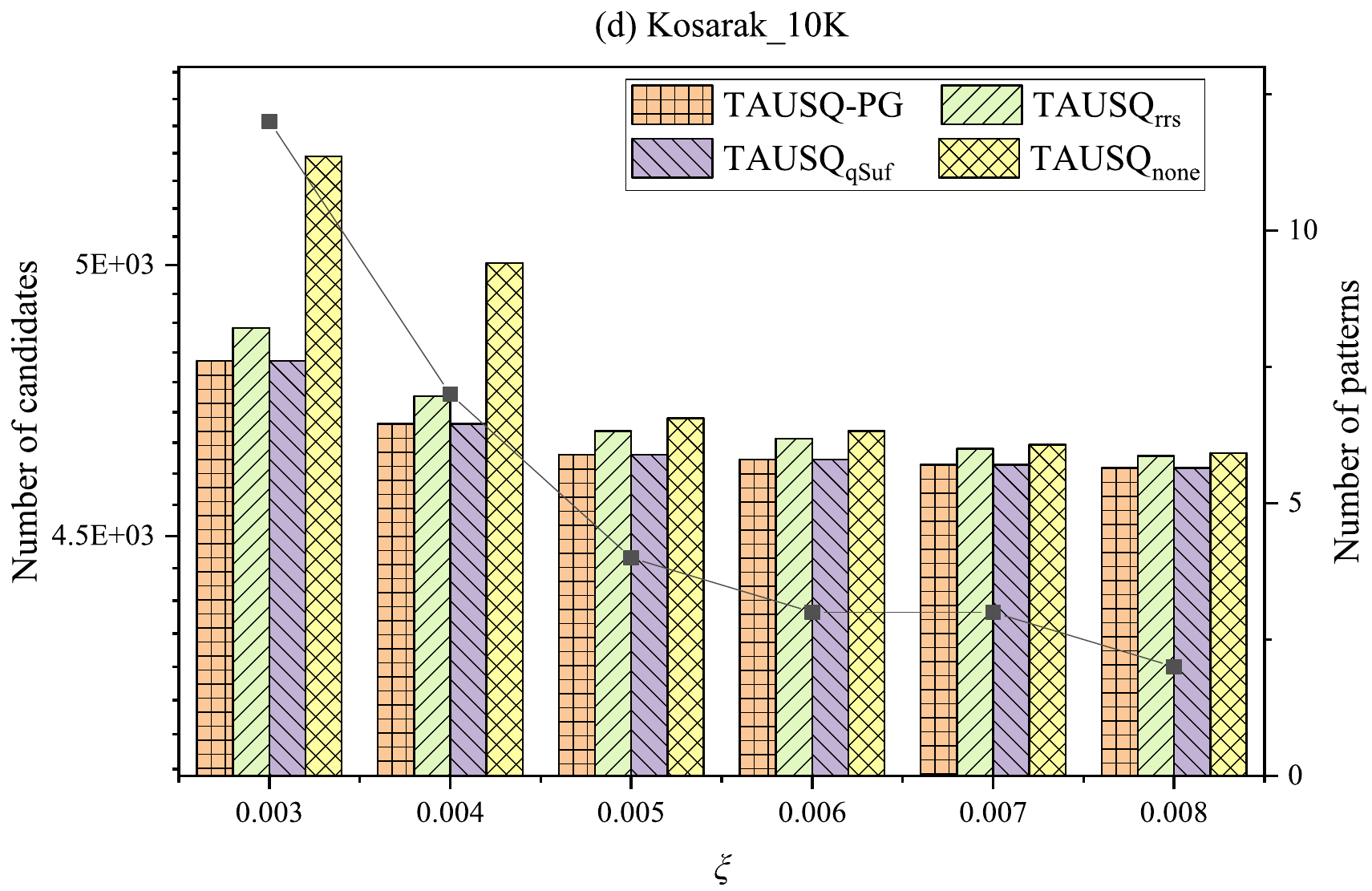}
			\label{fig: 55d}
			\includegraphics[clip,scale=0.15]{./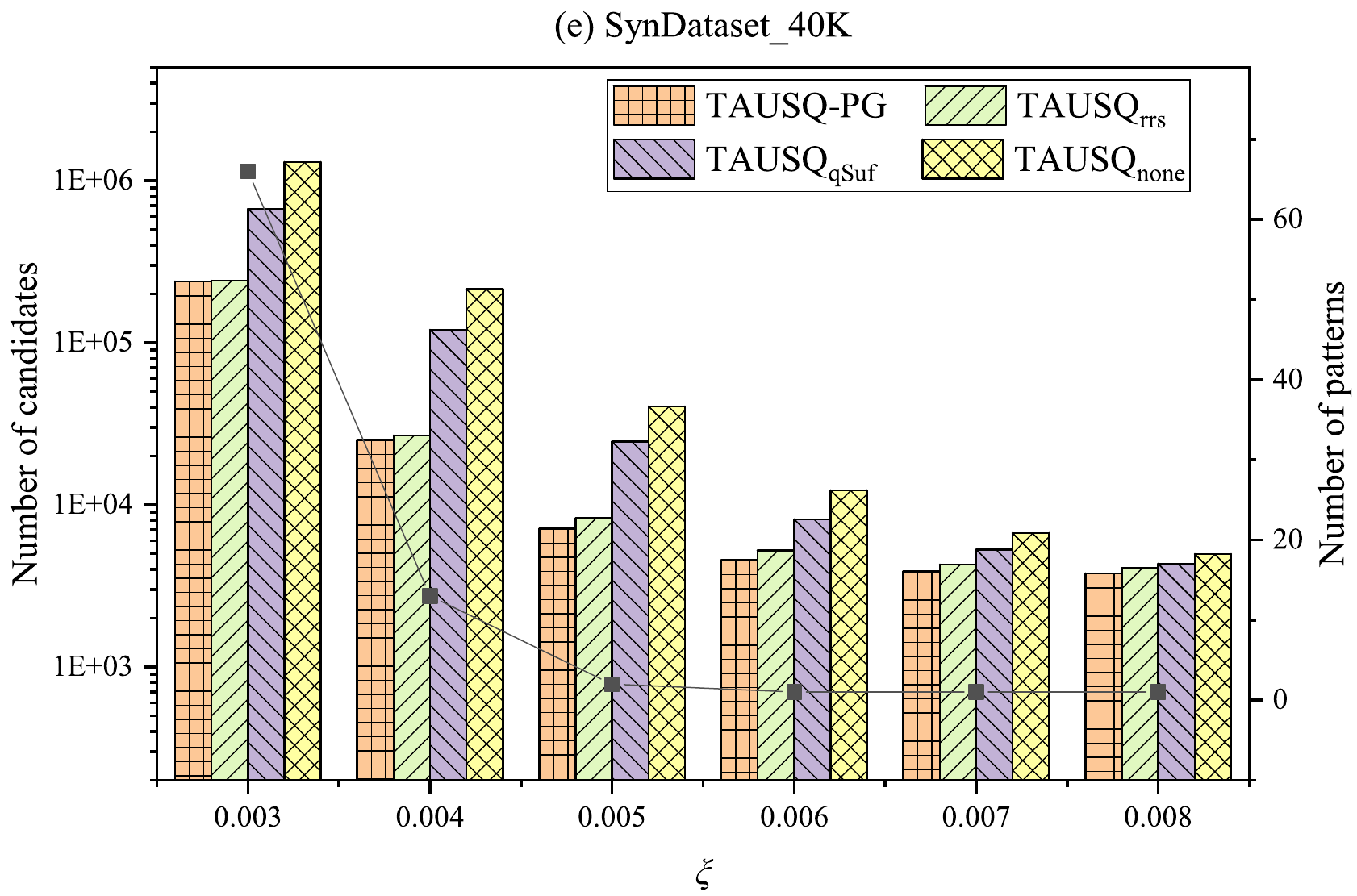}
			\label{fig: 55e}
			\includegraphics[clip,scale=0.15]{./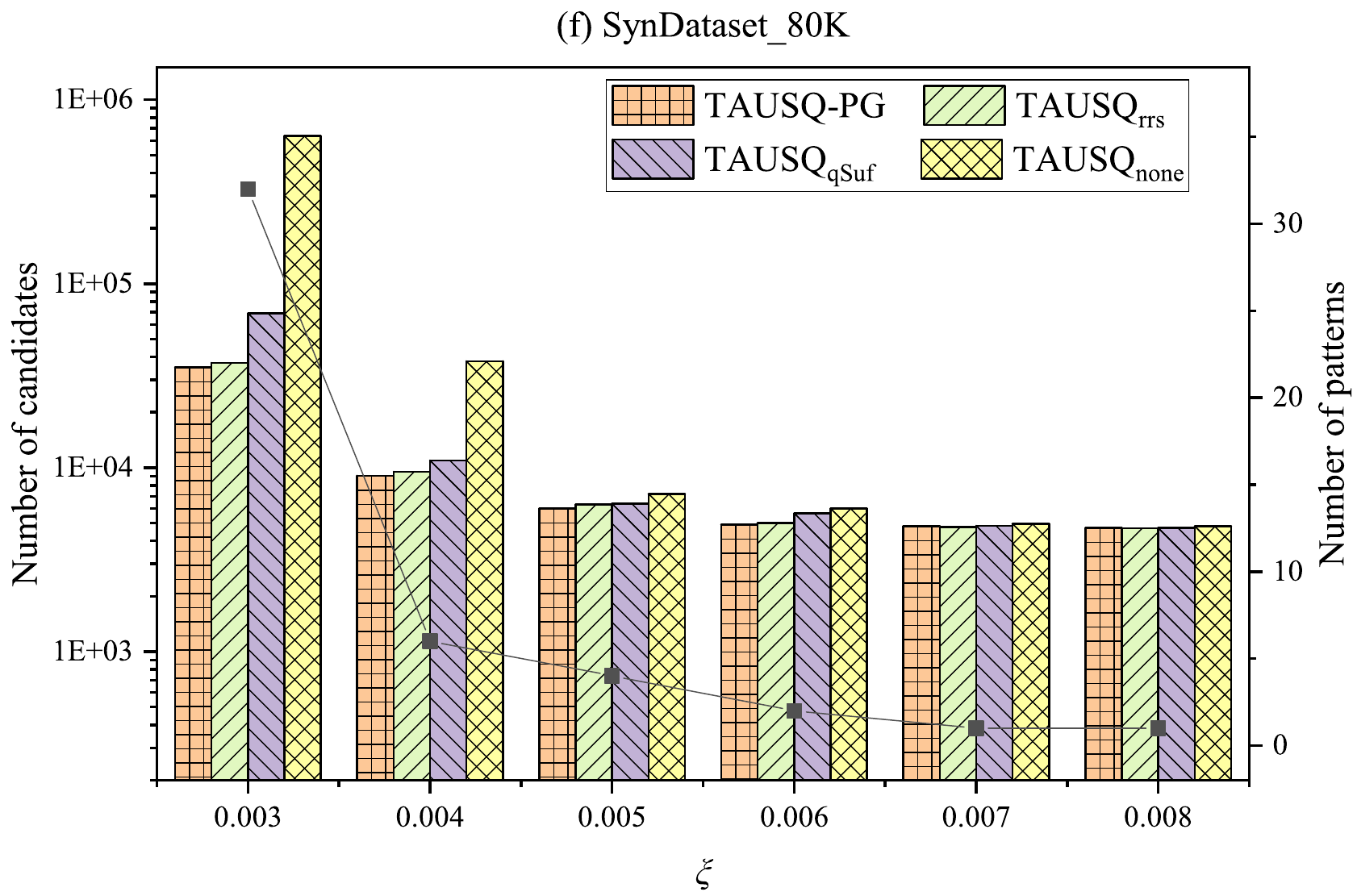}
			\label{fig: 55f}
	\end{minipage}
	\caption{Generated candidate sequences for various upper bound models.}
	\label{fig: 55}
\end{figure}

The outcomes of the experiments are illustrated in Fig. \ref{fig: 54} and Fig. \ref{fig: 55}. From these results, we observe that the most effective upper bound model varies across datasets. In Fig. \ref{fig: 54}(a), Fig. \ref{fig: 54}(b), and Fig. \ref{fig: 54}(d), algorithmic efficiency is primarily improved by incorporating \textit{qSuf}, whereas in Fig. \ref{fig: 54}(c), Fig. \ref{fig: 54}(e), and Fig. \ref{fig: 54}(f), the inclusion of \textit{rrs} plays a more critical role. These differences indicate that the key factors influencing algorithm performance differ by dataset and target sequence characteristics. Therefore, adopting a flexible upper bound modeling strategy that dynamically considers both \textit{rrs} and \textit{qSuf} is essential to maintain consistent performance across diverse data scenarios.

\subsection{Evaluation of the Impact of Varying Target Sequence Lengths}
\label{sec: scalability4Length}

\begin{figure}[ht]
	\centering
	\begin{minipage}[t]{0.98\textwidth}
			\includegraphics[clip,scale=0.17]{./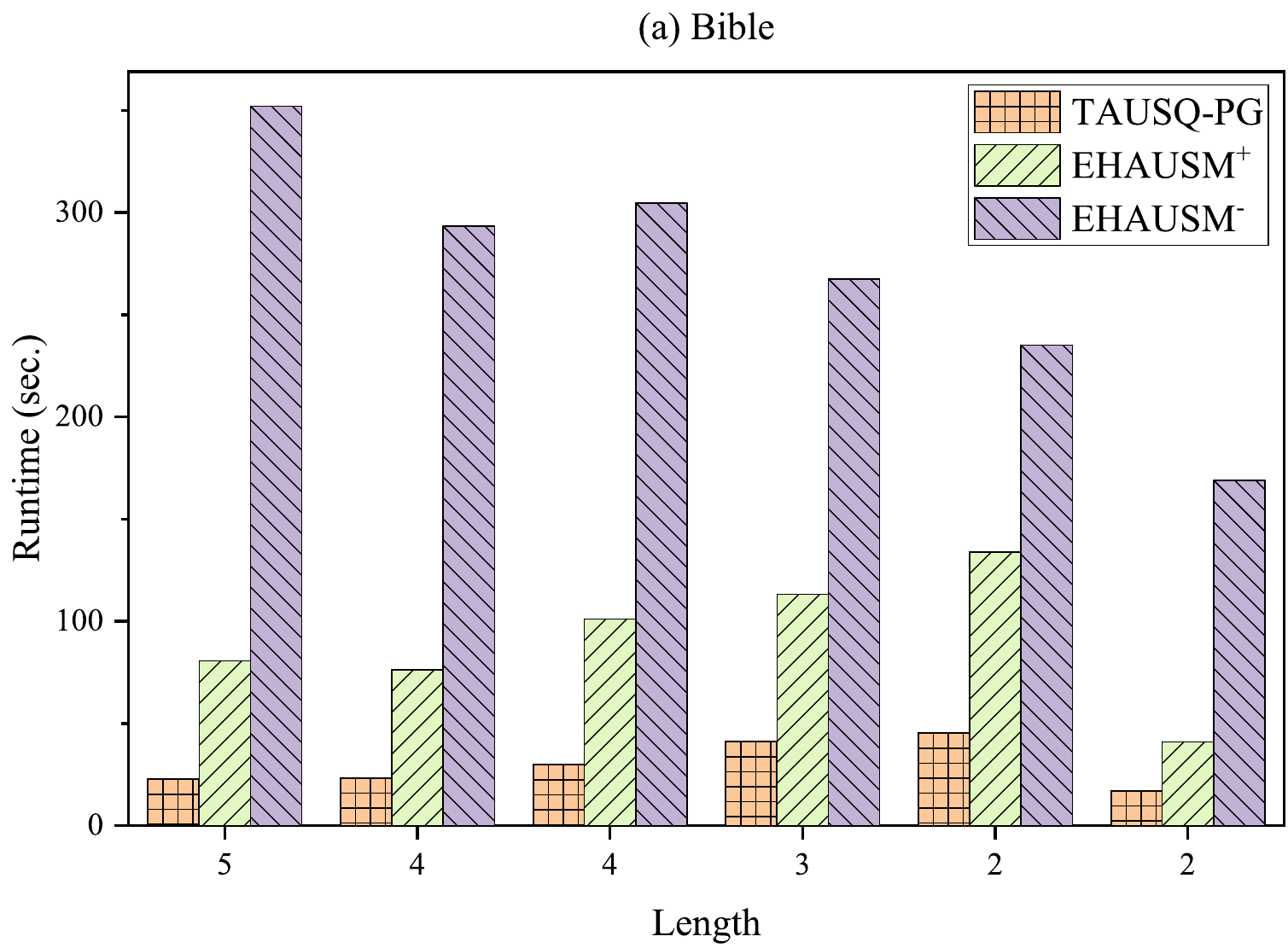}
			\label{fig: 57a}
			\includegraphics[clip,scale=0.17]{./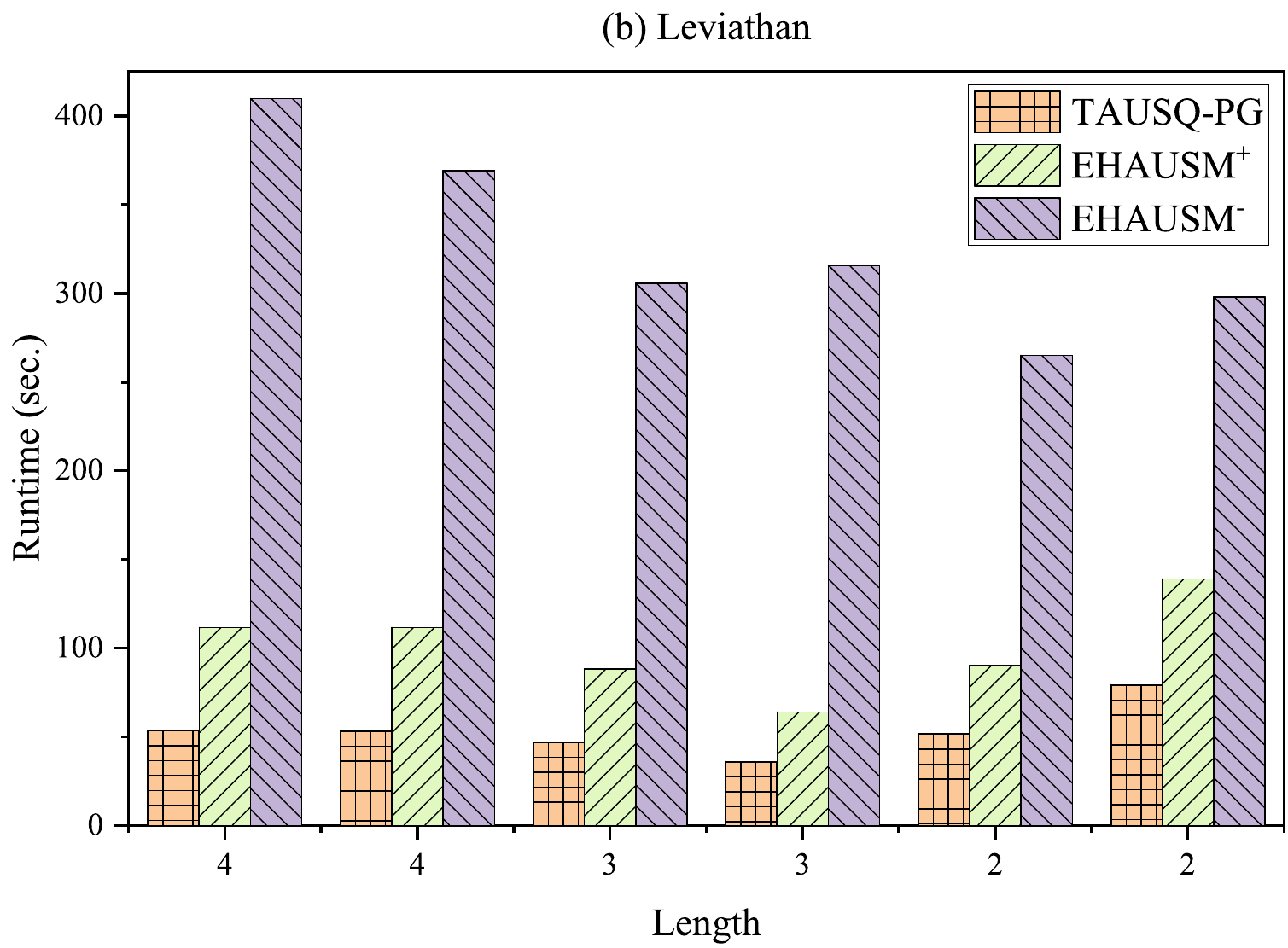}
			\label{fig: 57b}
			\includegraphics[clip,scale=0.17]{./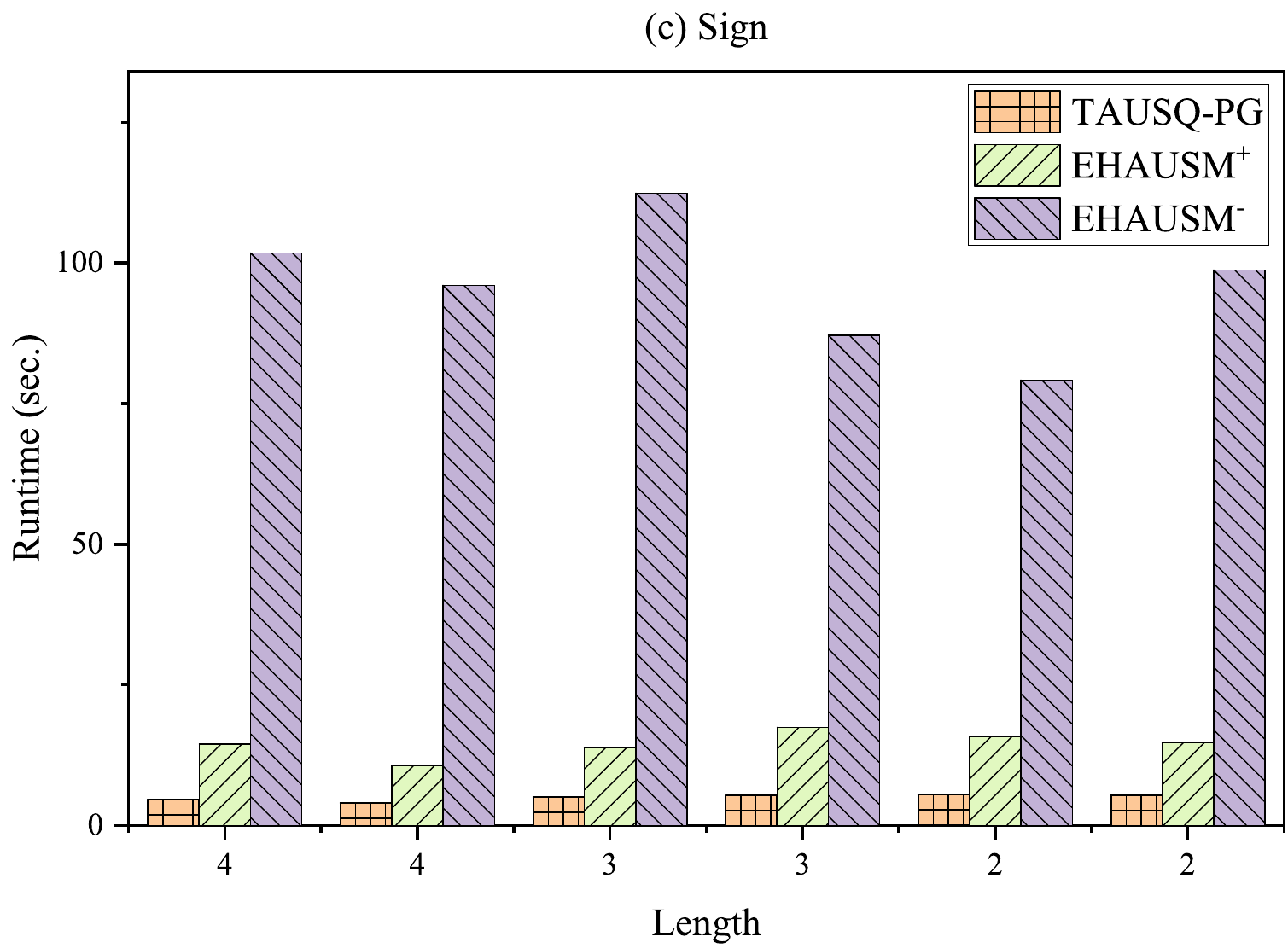}
			\label{fig: 57c}
	\end{minipage}
	\begin{minipage}[t]{0.98\textwidth}
			\includegraphics[clip,scale=0.17]{./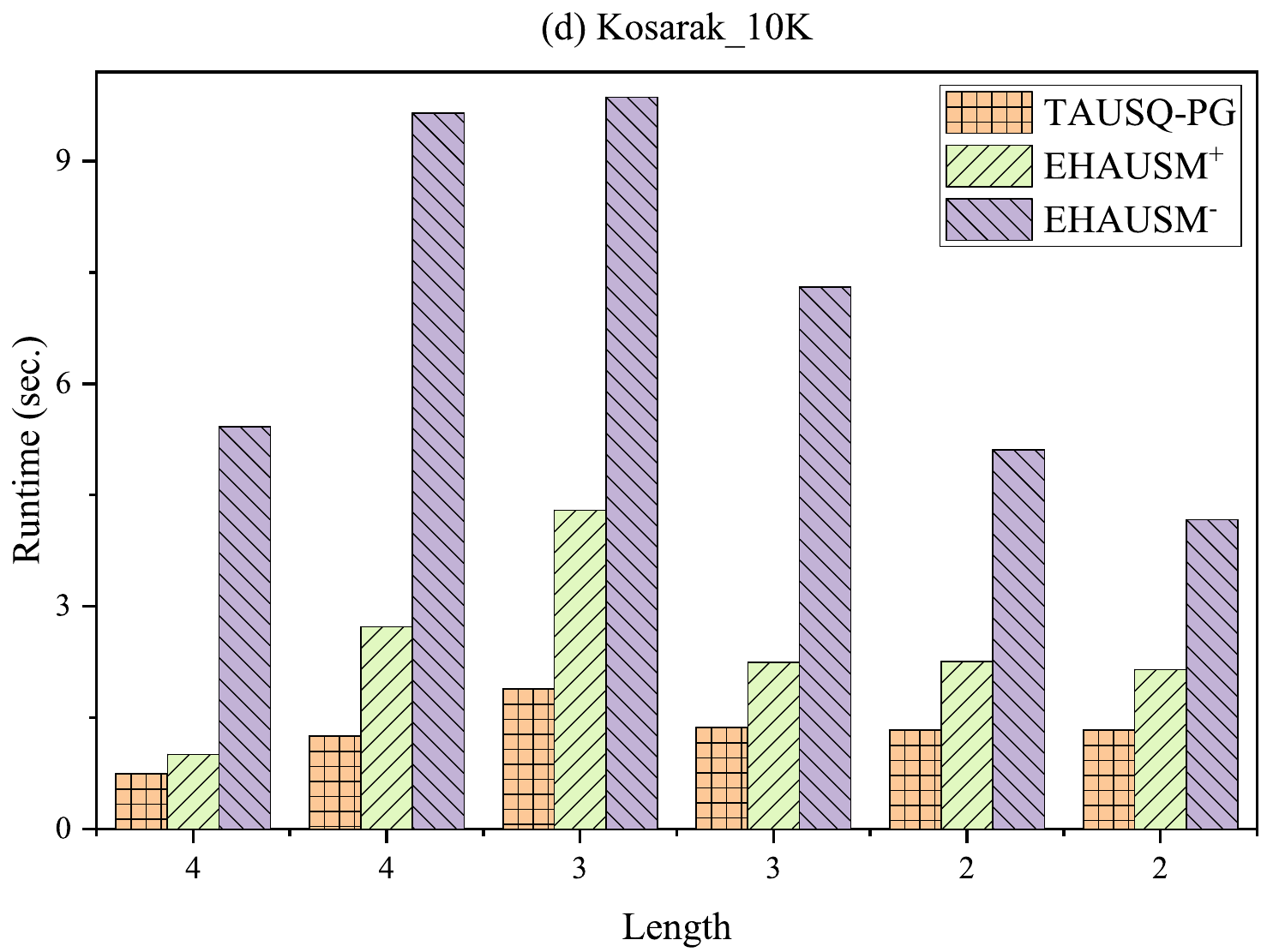}
			\label{fig: 57d}
			\includegraphics[clip,scale=0.17]{./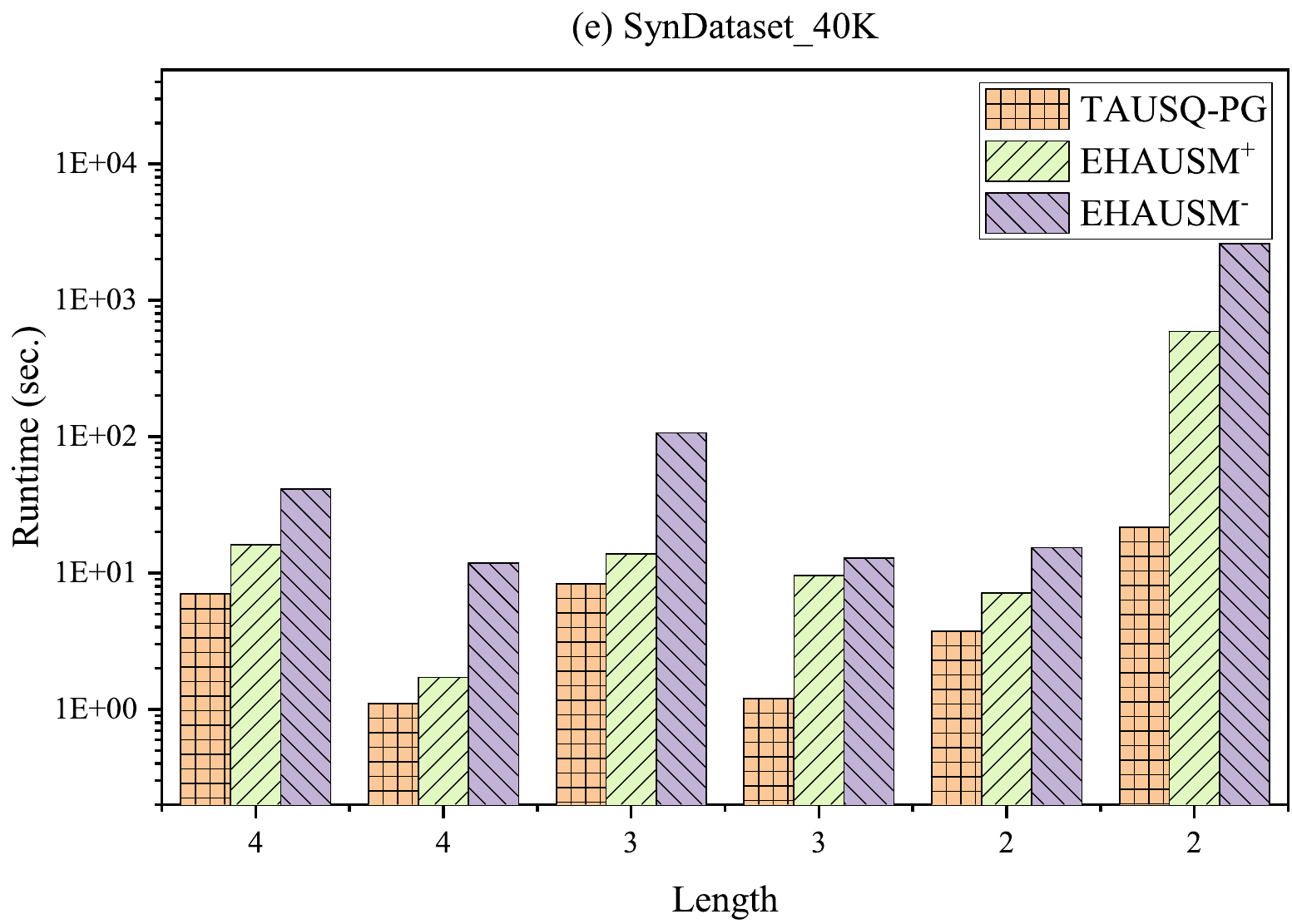}
			\label{fig: 57e}
			\includegraphics[clip,scale=0.17]{./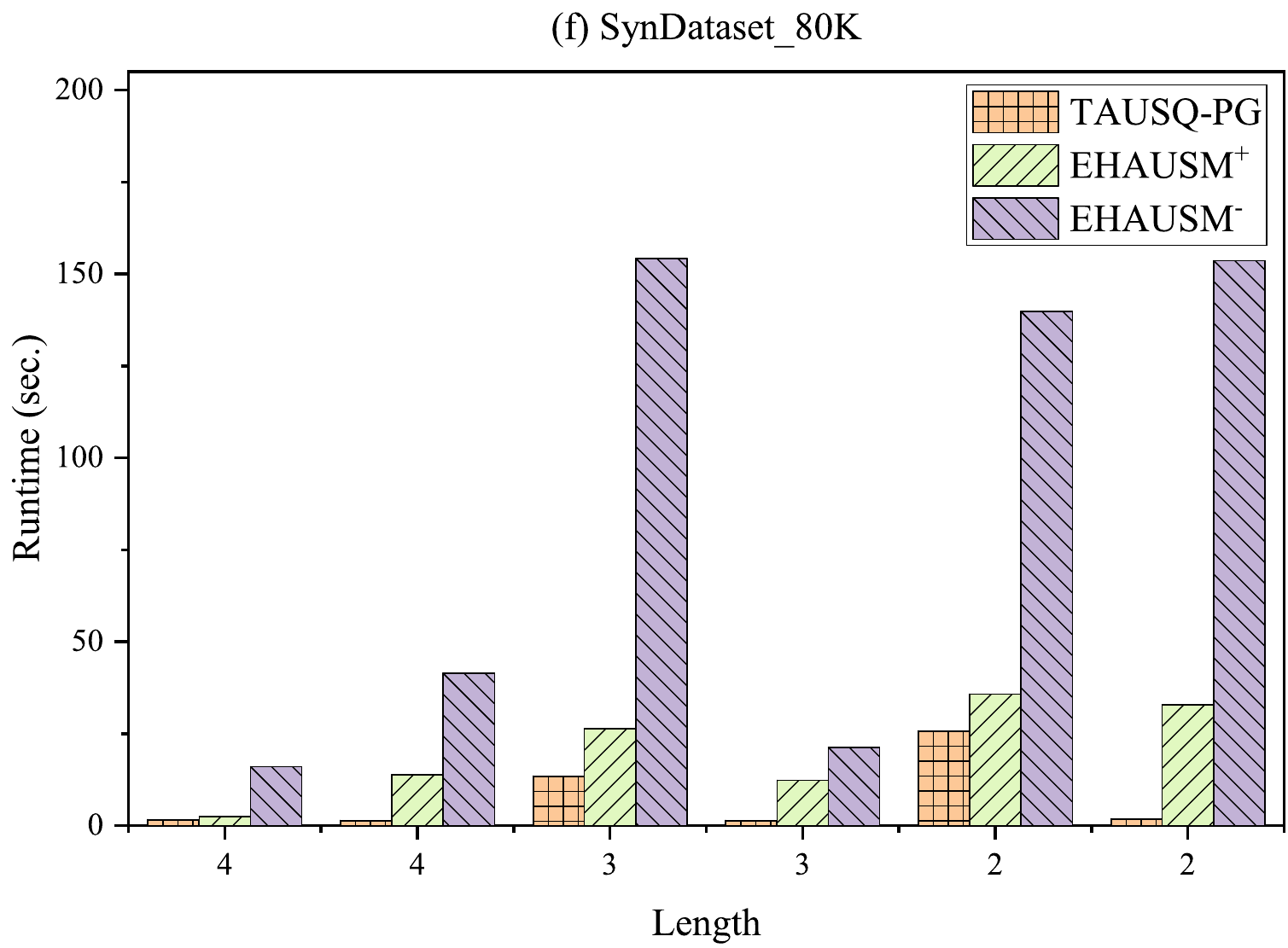}
			\label{fig: 57f}
	\end{minipage}
	\caption{Runtime for various target sequences.}
	\label{fig: 57}
\end{figure}

\begin{figure}[ht]
	\centering
	\begin{minipage}[t]{0.98\textwidth}
			\includegraphics[clip,scale=0.17]{./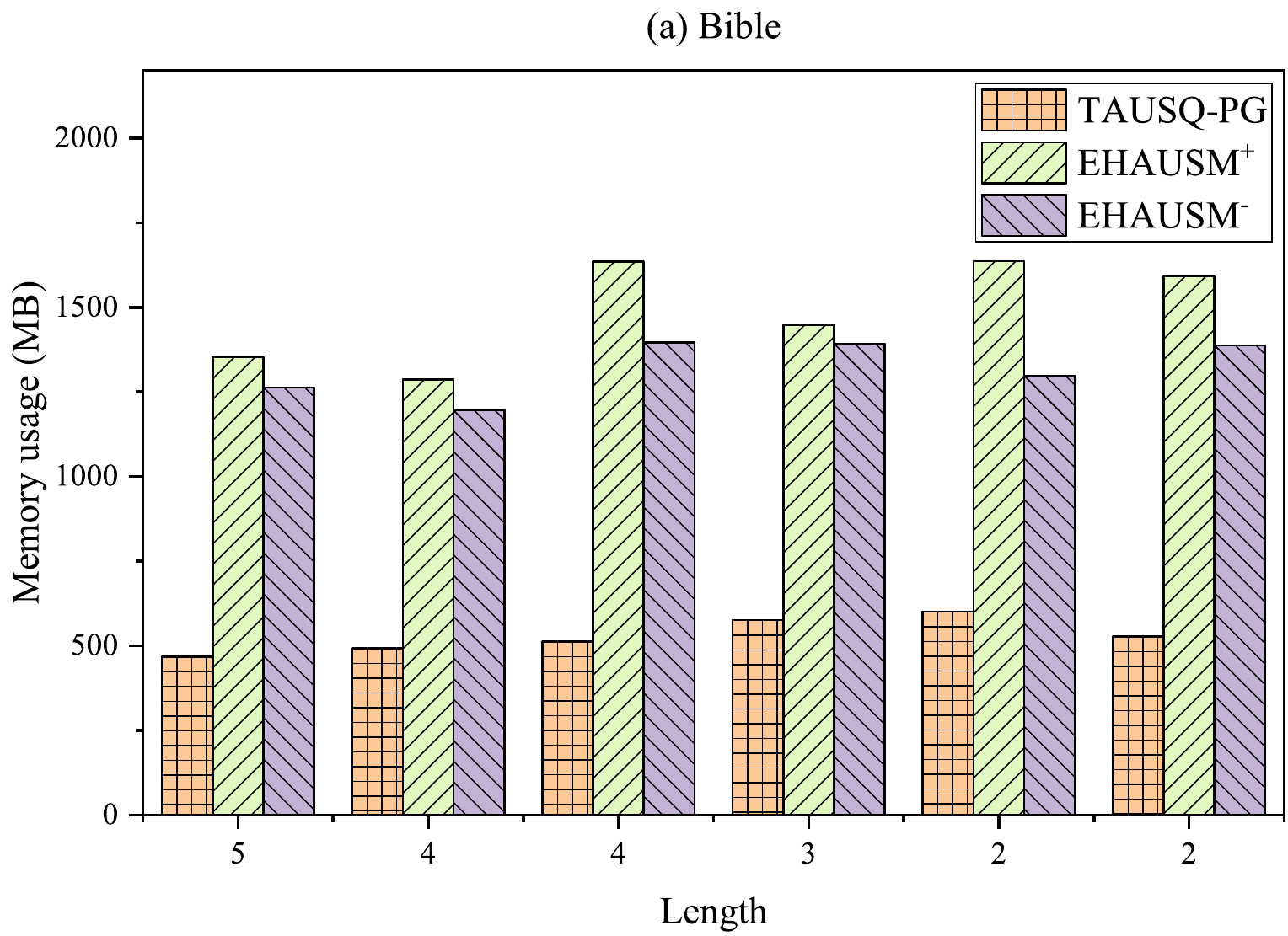}
			\label{fig: 58a}
			\includegraphics[clip,scale=0.17]{./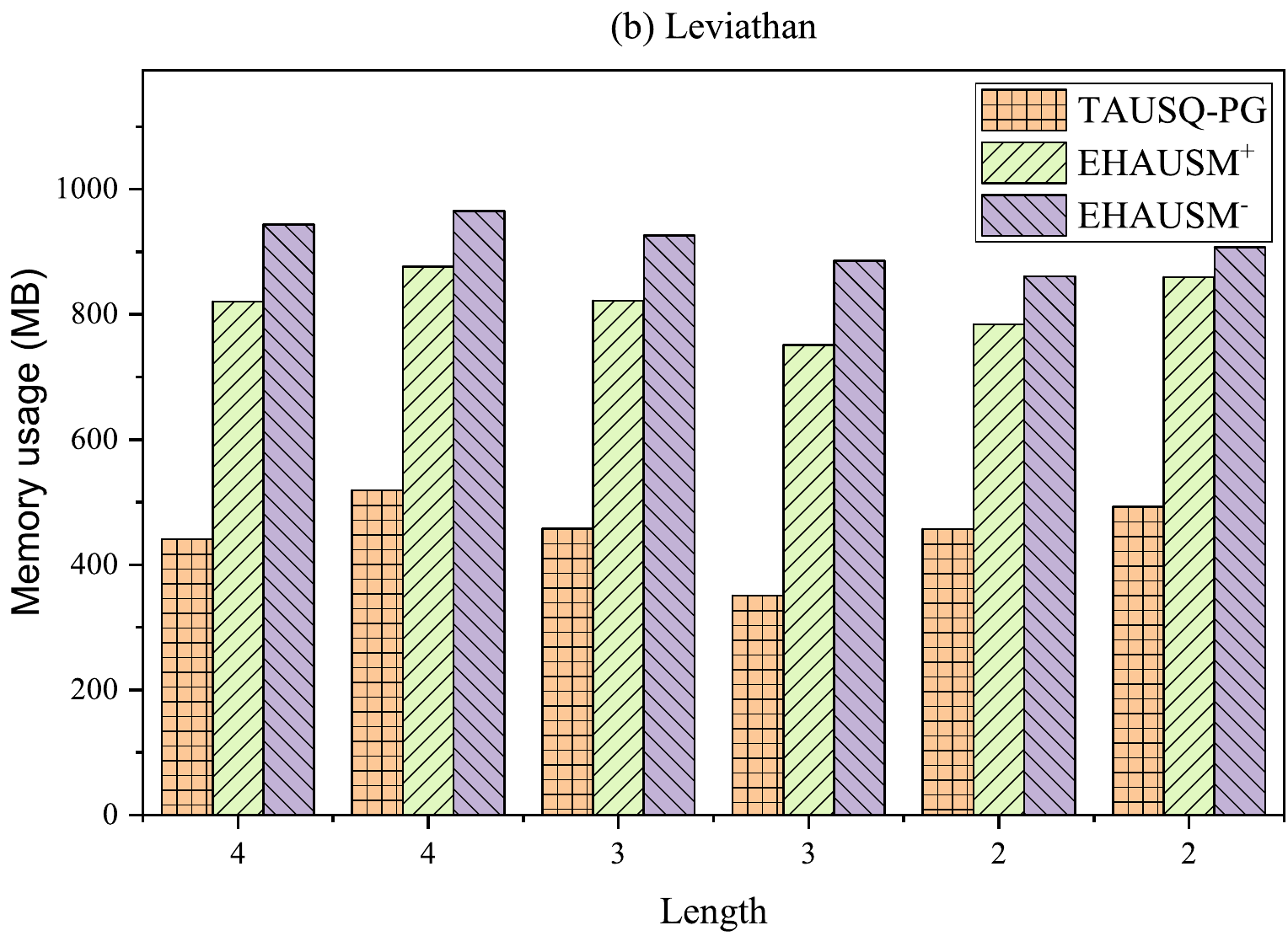}
			\label{fig: 58b}
			\includegraphics[clip,scale=0.17]{./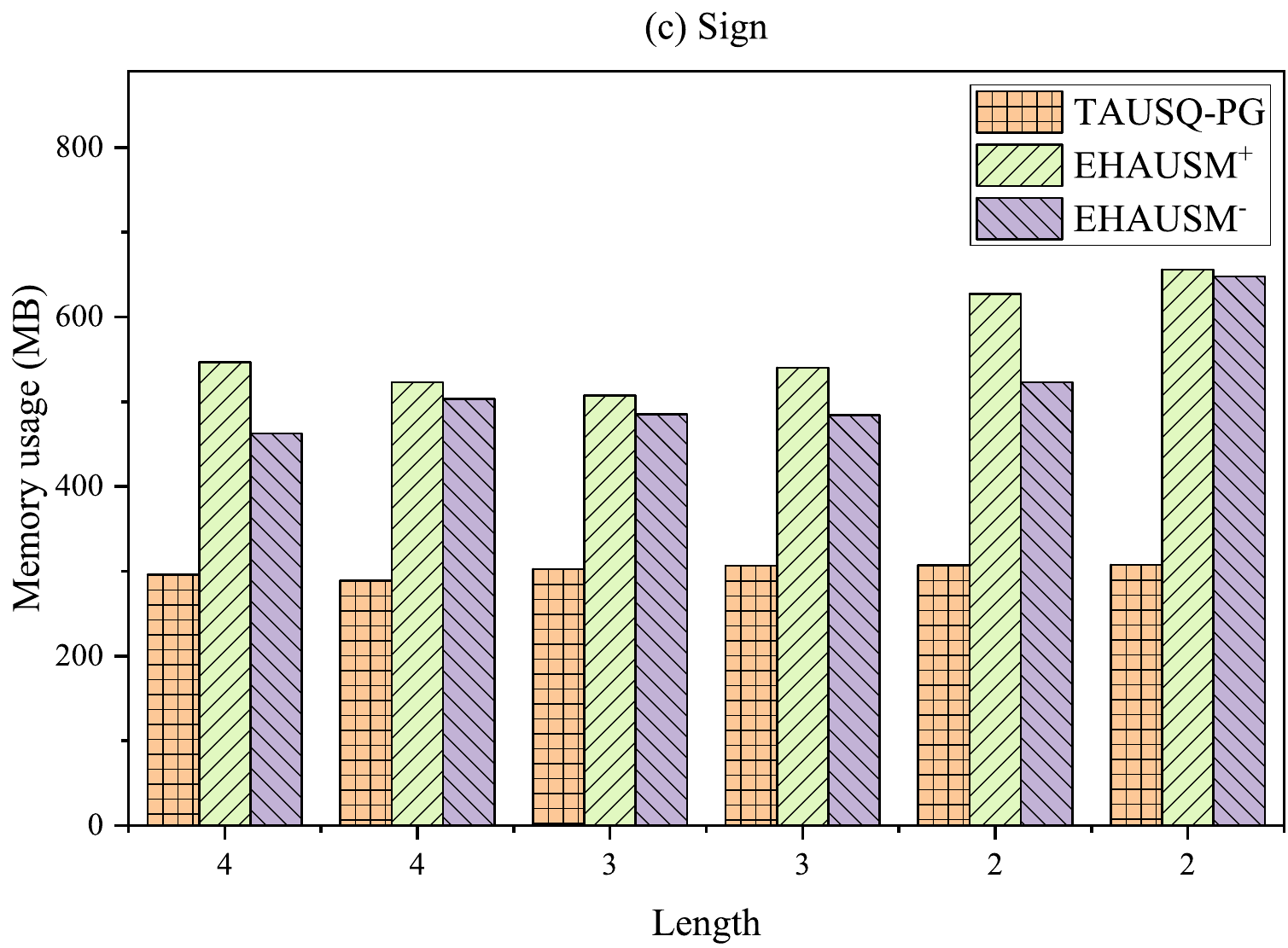}
			\label{fig: 58c}
	\end{minipage}
	\begin{minipage}[t]{0.98\textwidth}
			\includegraphics[clip,scale=0.17]{./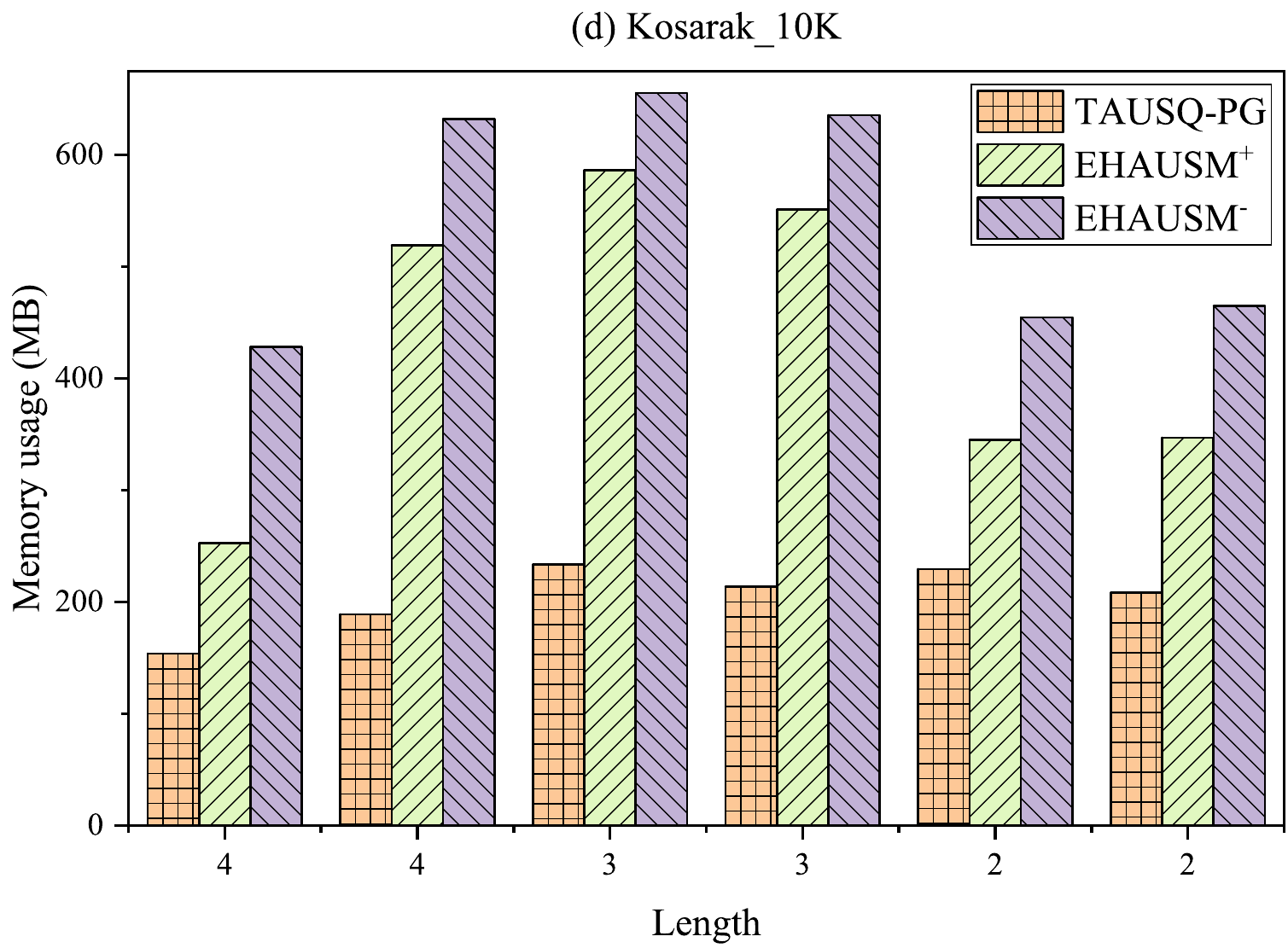}
			\label{fig: 58d}
			\includegraphics[clip,scale=0.17]{./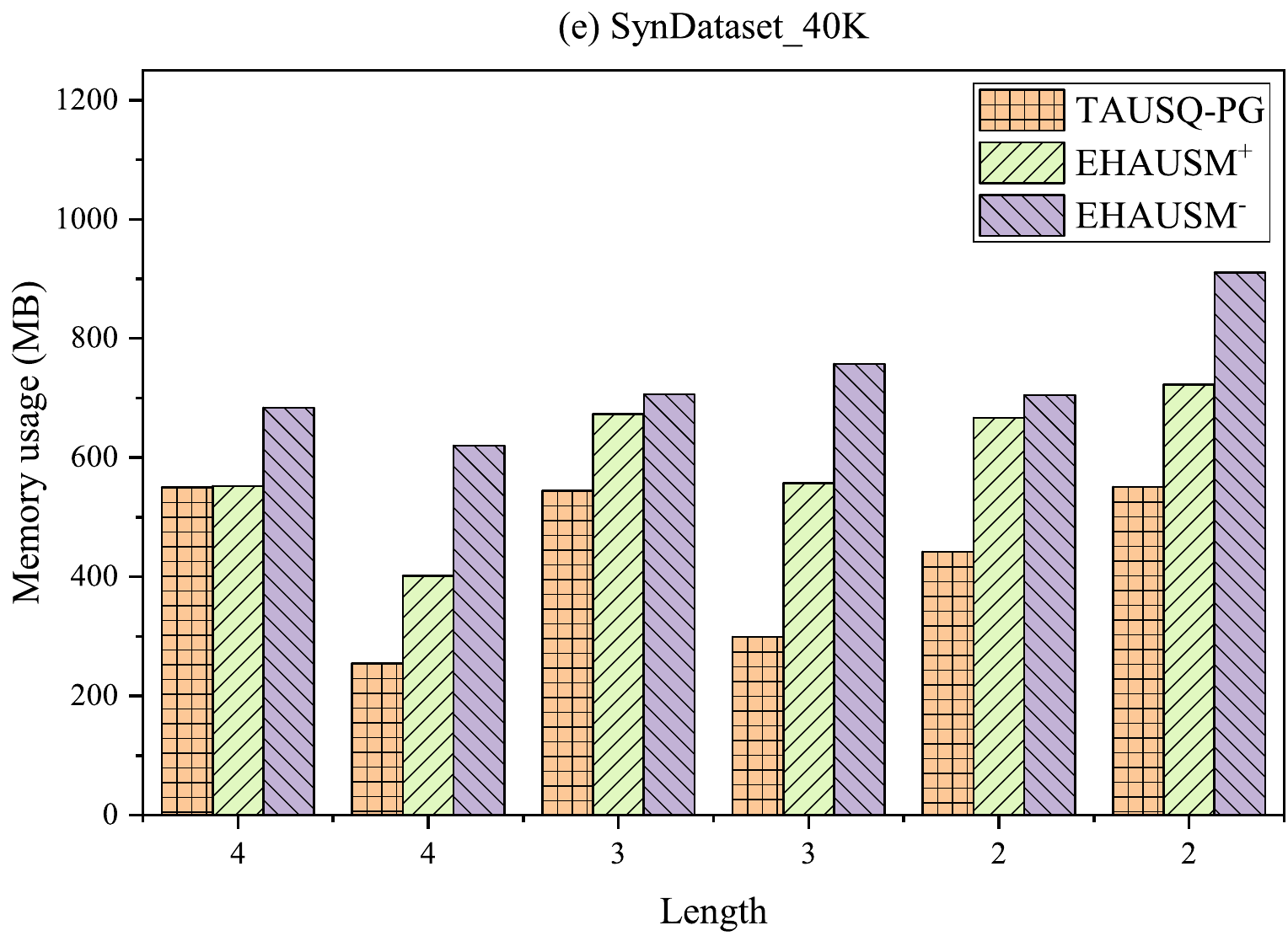}
			\label{fig: 58e}
			\includegraphics[clip,scale=0.17]{./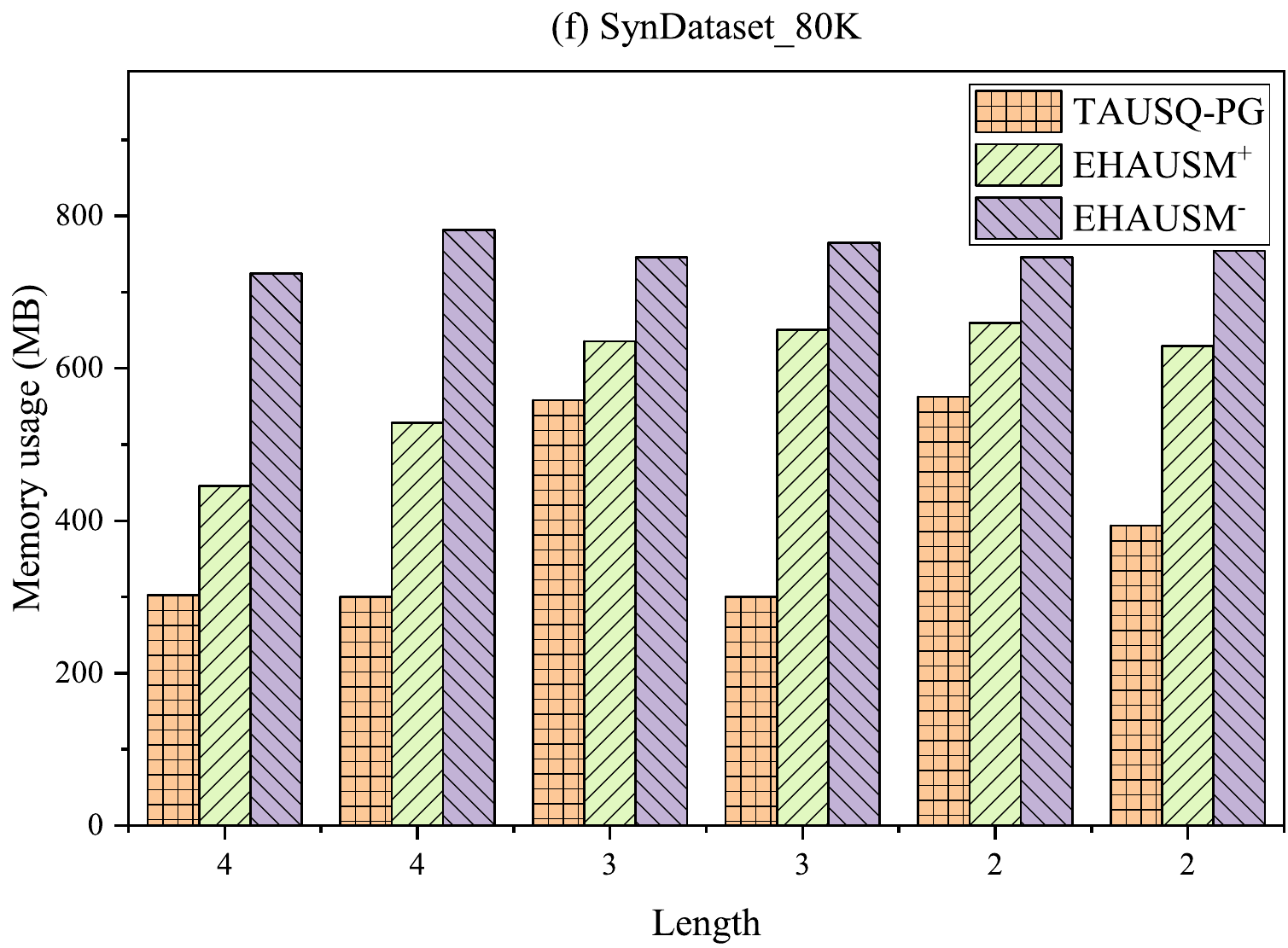}
			\label{fig: 58f}
	\end{minipage}
	\caption{Memory usage for various target sequences.}
	\label{fig: 58}
\end{figure}

In this series of comparative tests, we evaluate the performance of different algorithms across six datasets under varying target sequence lengths. For each dataset, we randomly select target sequences from the top 100 frequent patterns, with selection constrained to match the specified lengths. The threshold parameters for the six datasets are fixed at 0.2\%, 0.1\%, 0.5\%, 0.3\%, 0.3\% and 0.4\%, respectively. The experiments clearly demonstrate that TAUSQ-PG significantly outperforms both \( {\rm EHAUSM}^+  \) and \( {\rm EHAUSM}^- \). In terms of runtime, as illustrated in Fig. \ref{fig: 57}, TAUSQ-PG consistently achieves superior efficiency across all datasets and target sequence lengths. The performance advantage is particularly notable on synthetic datasets such as \( {SynDataset}\_{40K} \) and \( {SynDataset}\_{80K} \). Regarding memory consumption, TAUSQ-PG also demonstrates greater efficiency than the other two methods. As shown in Fig. \ref{fig: 58}, on \( {SynDataset}\_{40K} \), even in the worst-case scenario, the memory consumption of TAUSQ-PG stays below \( {\rm EHAUSM}^+  \). In summary, these experimental findings confirm that TAUSQ-PG offers superior overall efficiency in both memory usage and runtime, even as the length and complexity of target sequences vary. These advantages make TAUSQ-PG particularly well-suited for target-sequence-driven pattern mining tasks across diverse datasets.

\section{Conclusion} \label{sec: conclusion}

The introduction of the average utility concept not only addresses certain limitations of traditional utility-based pattern mining but also provides a fairer and more insightful evaluation criterion. However, many of the generated patterns may still lack practical relevance or fail to meet specific user interests. To address this challenge, this study integrates average utility with TPM, thereby defining the problem of TAUSPM. Herein, we introduce a new algorithm, TAUSQ-PG, which employs a compact data structure specifically optimized for average utility mining. To further improve the efficiency of sequential pattern querying, two matching query flags combined with the position comparison method are introduced. Moreover, the algorithm employs tighter variants of UBs and pruning strategies tailored specifically for the TAUSPM task to further improve efficiency. Experimental findings demonstrate that the proposed algorithm significantly enhances the effectiveness and efficiency of TAUSPM, especially in scenarios involving large-scale datasets with long sequences. Future research will explore several directions. One goal is to continue refining the TAUSQ framework and applying it to real-world applications. Another objective is to explore advanced topics in average utility mining, as we believe this line of research has strong potential for uncovering patterns of higher interest. We also plan to extend our methods to support more diverse task requirements and complex data characteristics. Specifically, these include constraints such as contiguous patterns, uncertain or noisy data, and datasets with negative utility items.

\bibliographystyle{ACM-Reference-Format}

\bibliography{main.bib}

\end{document}